\renewcommand{\ALG@beginalgorithmic}{\small}
\algrenewcommand\algorithmicprocedure{\textbf{method}}
\algrenewcommand\algorithmicforall{\textbf{for each}}
\newcommand{\Sec}[1]{\hyperref[sec:#1]{\S\ref*{sec:#1}}} %
\newcommand{\App}[1]{\hyperref[sec:#1]{Appendix~\ref*{sec:#1}}} %
\newcommand{\Eqn}[1]{\hyperref[eq:#1]{(\ref*{eq:#1})}} %
\newcommand{\Fig}[1]{\hyperref[fig:#1]{Fig.\,\ref*{fig:#1}}} %
\newcommand{\Tab}[1]{\hyperref[tab:#1]{Tab.\,\ref*{tab:#1}}} %
\newcommand{\Thm}[1]{\hyperref[thm:#1]{Thm.\,\ref*{thm:#1}}} %
\newcommand{\Lem}[1]{\hyperref[lem:#1]{Lem.\,\ref*{lem:#1}}} %
\newcommand{\Prop}[1]{\hyperref[prop:#1]{Prop.~\ref*{prop:#1}}} %
\newcommand{\Cor}[1]{\hyperref[cor:#1]{Cor.~\ref*{cor:#1}}} %
\newcommand{\Def}[1]{\hyperref[def:#1]{Defn.~\ref*{def:#1}}} %
\newcommand{\Alg}[1]{\hyperref[alg:#1]{Alg.\,\ref*{alg:#1}}} %
\newcommand{\Ex}[1]{\hyperref[ex:#1]{Ex.~\ref*{ex:#1}}} %
\newcommand{\Clm}[1]{\hyperref[clm:#1]{Claim~\ref*{clm:#1}}} %
\newcommand{\Step}[1]{\hyperref[step:#1]{Step~\ref*{step:#1}}} %
\newcommand\Exp[1]{\mathbb{E}[#1]}
\newcommand{\Prob}[1]{{\rm Prob}\left\{#1\right\}}
\newcommand{\Ceil}[1]{\lceil #1 \rceil}
\newcommand{\Floor}[1]{\lfloor #1 \rfloor}
\newcommand{\eps}{\varepsilon}
\newcommand{\qtext}[1]{\quad\text{#1}\quad}
\newcommand\Offset{225}
\newcommand\Slope{0.33}
\newcommand\MC[3]{\multicolumn{#1}{#2}{#3}}
\begin{document}

\title{Counting Triangles in Massive Graphs with MapReduce\thanks{This work was funded by the GRAPHS Program at DARPA and by the Applied Mathematics Program at the U.S.\@ Department of Energy. Sandia National Laboratories is a multi-program laboratory managed and operated by Sandia Corporation, a wholly owned subsidiary of Lockheed Martin Corporation, for the U.S. Department of Energy's National Nuclear Security Administration under contract DE-AC04-94AL85000.}}

\author{%
  Tamara G. Kolda\footnotemark[2] 
  \and Ali Pinar\footnotemark[2] 
  \and Todd Plantenga\footnotemark[2] 
  \and C.\@ Seshadhri\footnotemark[2] 
  \and Christine Task\footnotemark[3] 
}
\date{}
\maketitle

\renewcommand{\thefootnote}{\fnsymbol{footnote}}
\footnotetext[2]{Sandia National Laboratories, Livermore, CA.
  Email: \{tgkolda,apinar,tplante,scomand\}@sandia.gov}
\footnotetext[3]{Department of Computer Science,
  Purdue University, Lafayette, IN. Email: ctask@purdue.edu} 
\renewcommand{\thefootnote}{\arabic{footnote}}

\begin{abstract}
Graphs and networks are used to model interactions in a variety of
contexts. There is a growing need to quickly assess the
characteristics of a graph in order to understand its underlying
structure. Some of the most useful metrics are triangle-based and give
a measure of the connectedness of mutual friends.
This is often summarized in terms of clustering coefficients, which measure the 
likelihood that two neighbors of a node are themselves connected.
Computing these measures exactly for large-scale networks 
is prohibitively expensive in both memory and time.
However, a recent \emph{wedge sampling} algorithm has proved
successful in efficiently and accurately estimating clustering coefficients.
In this paper, we describe how to implement this approach in MapReduce
to deal with massive graphs.
We show results on publicly-available networks, the
largest of which is 132M nodes and 4.7B edges, as well as artificially
generated networks (using the Graph500 benchmark), the largest of
which has 240M nodes and 8.5B edges. 
We can estimate the clustering coefficient by degree bin
(e.g., we use exponential binning) and the number of triangles per bin, as
well as the global clustering coefficient and total number
of triangles, in an average of 
\Slope\@ seconds per million edges plus 
overhead (approximately \Offset\@ seconds total for our configuration).
The technique can also be used to study triangle
statistics such as the ratio of the highest and lowest degree, and we
highlight differences between social and non-social networks.
To the best of our knowledge, these are the largest 
triangle-based graph computations published to date.
\end{abstract}

\noindent
{\bf Keywords:} triangle counting,  clustering coefficient, triangle
characteristics, large-scale networks,
MapReduce
\pagestyle{myheadings}
\thispagestyle{plain}
\markboth{\sc T.~G.~Kolda, A.~Pinar, T.~Plantenga, C.~Seshadhri, and C.~Task}%
{\sc Sampling Triangles in Massive Graphs with MapReduce}

\section{Introduction}
\label{sec:introduction}

Over the last decade, graphs have emerged as the standard
for modeling interactions between entities in a wide variety of
applications. Graphs are used to model infrastructure networks, the
world wide web, computer traffic, molecular interactions, ecological
systems, epidemics, co-authors, citations, and social interactions,
among others. Understanding the frequency of small subgraphs has
been an important aspect of graph analysis.

Despite the differences in the motivating applications,
some topological structures have emerged to be important across all
these domains.  The most important such subgraph
is the triangle (3-clique). 
Many networks, especially social networks, are known to have many triangles.
This is thought to be because social interactions exhibit
homophily (people befriend similar people) and transitivity (friends of
friends become friends).  The notion of \emph{clustering coefficient}
is inspired by this observation, and is the standard method
of summarizing triangle counts~\cite{WaSt98,NeStWa01}.  It is well known that some
networks, especially social networks, have much higher clustering
coefficients than random networks~\cite{Ne01,NePa03,OnCa04}. 
Triangle measures are important for understanding network
structure and evolution \cite{Hu10,SaCaWiZa10,FoFoGrPa11,UgKaBaMa11} and
reproducing the degree-wise clustering coefficients of a network is
important for generative models \cite{GuKr09,SaCaWiZa10,SeKoPi12}.

\subsection{Our Contributions}

For large graphs, computing triangle-based measures can be
expensive. The standard approach is to find all wedges, i.e., paths of
length 2, and check to see if they are closed, i.e., the edge that
complete the triangle exists.  
Previous work presents
a wedge-sampling approach for approximating clustering coefficients~\cite{ScWa05,SePiKo13};
this is in contrast to sampling single edges, which is a more obvious
but less reliable technique. In \cite{SePiKo13}, it is shown that the
wedge-sampling approach is orders of magnitude faster than enumeration
and is both faster and has less variance than edge-sampling techniques.

In this paper, we show that the wedge-sampling approach scales to massive networks using MapReduce, a framework well-suited to sampling. 
Previous distributed triangle counting algorithms have had to deal with  problems of
finding triangles where edges are stored on different processors and 
skewed degree distributions lead to load balancing issues~\cite{SuVa11,ArKhMa12}.
In contrast, our wedge sampling approach in MapReduce deals with these issues seamlessly
and recommends sampling as a general technique for large graphs since it
leads to fast serial algorithms as well as 
scalable parallel implementations.
We describe our contributions in more detail as follows:
\begin{asparaitem}
\item We present a parallelization of our wedge-based sampling
  algorithm in the MapReduce framework.
  The premise of wedge sampling is to set up a distribution on
  the vertices (as potential wedge centers) and use that to sample the
  actual wedges. Designing a serial algorithm is easier, since the information to compute the distribution
  and then form the wedges is all local. In the MapReduce
  implementation, edges are distributed arbitrarily; therefore, 
  it takes several passes to compute
  the necessary distribution, create the sample
  wedges, and finally check if they are closed.
\item 
  Additionally, we show that MapReduce enables computing multiple
  clustering coefficients for the same graph (e.g., binned by degree)
  for essentially the same cost as computing the single global
  clustering coefficient. Since the clustering coefficient generally
  varies with degree, it is helpful to see the profile versus a single
  value because these profiles are useful in graph characterization and modeling. 
\item We give extensive experimental results, on both real-world networks
  from the Laboratory for Web Algorithms as well as artificial
   networks created according to the Graph500 benchmark.
  We have multiple examples with more than a billion edges. To the
  best of our knowledge, these are the largest triangle computations
  to date. 
\item Results demonstrate the efficiency of our algorithm. For instance, we estimate
  the cost of computing clustering coefficients per (logarithmically)
  binned degree to be an average of \Slope\@ seconds per million
  edges plus overhead, which is approximately \Offset\@ seconds total for our
  32-node Hadoop cluster. 
  Hence, a graph with over 9B edges requires
  less than one hour of computation.
  Note that global clustering coefficient and
  total triangles are also computed.
\item A straightforward implementation requires that the entire edge
  list be ``shuffled'' three times. We show how to greatly reduce the
  shuffle volume with some clever implementation strategies that are
  able to filter the edge list during the ``map'' phase. We discuss
  the implementation details and show  comparisons of
  performance. 
\item A feature of wedge-based sampling is that the closed wedges are 
  uniform random triangles. Hence, we also give experimental results
  characterizing triangles in terms of the their
  minimum and maximum degrees. Triangles from social
  networks tend to be somewhat assortative whereas triangles from
  other types of networks are not.
\end{asparaitem}

\subsection{Related Work on Triangle Counting}
Enumeration algorithms for finding triangles are either node- or
edge-centric.   Node-centric algorithms iterate over all nodes and,
for each node $v$, check all pairs among the neighbors of $v$ for
being connected. Edge-centric algorithms, on the other hand, go over
all edges $(u,v)$ and seek common neighbors of $u$ and $v$.   
Chiba and Nishizeki~\cite{ChNi85} proposed a node-centric algorithm
that orders the vertices  by  degree and processes each edge only
once,  by its lower-degree vertex.  They showed that this algorithm
runs in $O(m\alpha(G))$-time, where $m$ is the number of edges, and
$\alpha(G)$ is the arboricity of the graph $G$ (arboricity is defined
as the minimum number of forests into which its edges can be
partitioned and can be considered as a measure of how dense the graph
is). Schank and Wagner~\cite{ScWa05}  used the same idea for their
{\em forward} algorithm.  
Chu and Cheng studied an I/O efficient implementation of the same algorithm~\cite{ChCh11}. 
Latapy proved that  the forward algorithm runs in $O(m^{3/2})$-time and proposed improvements  that reduce the search space \cite{latapy08}. Latapy also  showed that  the runtime of this algorithm becomes $O(mn^{1/\alpha})$ for  graphs with power-law degree distributions, where $\alpha$ is the power-law coefficient and $n$ is the number of vertices~\cite{latapy08}.
Arifuzzaman et al.~\cite{ArKhMa12} give a massively parallel algorithm for computing clustering coefficients.  Pearce et al.~\cite{PGA13} used triangle counting as an application to show the effectiveness  external memory  algorithms  for massive graph analysis.

Enumeration algorithms however, can be expensive, due to  the  extremely large number of triangles (see e.g., \Tab{networks}), 
even for graphs even of moderate size (millions of vertices). 
Much theoretical work has been done on characterizing the hardness of exhausting triangle enumeration and finding weighted triangles~\cite{Pa10,WiWi10}.
Eigenvalue/trace based methods adopted  by Tsourakakis~\cite{Ts08} and Avron~\cite{Av10}  compute
estimates of the total and per-degree number of triangles. However,
the compute-intensive nature of eigenvalue computations (even just a
few of the largest magnitude) makes these methods intractable
on large graphs.  

Most relevant to our work are sampling mechanisms.
Tsourakakis et al.~\cite{TsDrMi09} initiated the sparsification methods, the most important of which
is Doulion~\cite{TsKaMiFa09}.  This method sparsifies the graph by retaining each edge with probability $p$; counts  the triangles  in the sparsified graph; and multiplies this count by $p^{-3}$ to predict the  number of triangles in the original graph.  
One of the main benefits of Doulion is its ability to reduce large graphs to smaller ones that can be loaded into memory. 
However, the estimates can suffer from high variance~\cite{YoKi11}.
Theoretical analyses of this algorithm (and its variants) have been the subject of various studies~\cite{KoMiPeTs12,TsKoMi11,PaTs12}. 
Another sampling approach has been proposed by Kolountzakis et al.~\cite{KoMiPeTs12}, which involves both edge 
and triple-node sampling (a generalization of wedge-sampling).
A MapReduce implementation of their method could potentially use many of the same techniques presented here.
Alternative sampling mechanisms have been proposed for streaming and semi-streaming algorithms \cite{BaKuSi02, JoGh05, BeBoCaGi08,BuFrLeMaSo06}.
Most recently, Jha et al.~\cite{JSP13} showed how wedge-sampling can be performed when the graph is observed as a stream of edges and generalized their method for graphs with repeated edges~\cite{JSP13a}. An alternative approach and its parallelization were proposed by  Tangwongsan et al.~\cite{TaPaTi13}.
Many of these sampling procedures given above are by their very nature quite amenable to a MapReduce implementation.

The wedge-sampling approach used in this paper, first discussed by Schank and Wagner~\cite{ScWa05-2}, is a sampling approach with the high accuracy and speed advantages of other sampling-based methods (like
Doulion) but a hard bound on the variance. Previous work by a subset of the authors of this paper~\cite{SePiKo13} presents
a detailed empirical study of wedge sampling.
It was also shown that wedge sampling can compute a variety of triangle-based
metrics including degree-wise clustering coefficients and uniform randomly
sampled triangles. This distinguishes wedge sampling from previous sampling methods
that can only estimate the total number of triangles.

\subsection{Related Work on MapReduce for Graph Analytics}
MapReduce \cite{DeGh08} is a conceptual programming model for
processing massive data sets. The most popular implementation is the
open-source Apache Hadoop \cite{Hadoop} along with the Apache Hadoop
Distributed File System (HDFS) \cite{Hadoop}, which we have used in our
experiments. MapReduce assumes that the data is distributed across storage
in roughly equal-sized blocks.  The
MapReduce paradigm divides a parallel program into two parts: a
\emph{map} step and a \emph{reduce} step. During the map step, each block
of data is assigned to a \emph{mapper} which processes the data block to
emit key-value pairs. The mappers run in parallel and are ideally
local to the block of data being processed, minimizing communication
overhead. In between the map and reduce steps, a parallel \emph{shuffle}
takes place in order to group all values for each key together. 
This step is hidden from the user and is extremely efficient.
For every key, its values are grouped together and sent to a
\emph{reducer}, which processes the values for a single key and writes
the result to file. All keys are processed in parallel.

MapReduce has been used for network and graph analysis in a variety of
contexts. 
It is a natural choice, if for no other reason than the
fact that it is widely deployed~\cite{Li12}.
Pegasus \cite{KaTsFa09} is a general library for large-scale graph
processing; the largest graph they considered was 1.4M vertices and
6.6M edges and the PageRank analytic, but they did not report
execution times.
Lin and Schatz \cite{LiSc10} propose some special techniques for
graph algorithms such as PageRank that depend on matrix-vector products.
MapReduce sampling-based techniques that reduce the overall graph size
are discussed by Lattanzi et al.~\cite{LaMoSuVa11}.

In terms of triangle counting and computing clustering coefficients,
Cohen \cite{Co09} considers several different analytics
including triangle and rectangle enumeration. 
Plantenga \cite{Pl12} has studied subgraph isomorphism (i.e.,
finding small graph patterns such as triangles), including Cohen's
algorithm as a special case. 
(We use Plantenga's implementation of
Cohen's Triangle enumeration algorithm for comparison in our
subsequent numerical results.)
For a non-triangle pattern, Plantenga's SGI code ran on a 7.6B vertex
graph with 107B undirected edges in 620 minutes on a 64-node Hadoop cluster.
Wu et al.~\cite{WuDoKeCa11} have also studied triangle
enumeration using MapReduce with running times of roughly 175 seconds
on a graph with 1.6M nodes and 5.7M edges. 
Suri and Vassilvitskii \cite{SuVa11} proposed a MapReduce
implementation for exact per-node clustering coefficients. 
Most naive partitioning schemes do not give efficient parallelization because of high-degree vertices,
and their result involves new partitioning methods to avoid this
problem.
We discuss how both \cite{Pl12} and \cite{SuVa11} compare to our method in \Sec{timings}.
SAHAD \cite{ZhWaBuKh12} has a Hadoop program that uses sampling techniques
based on graph coloring to find subgraphs, but is limited to tree patterns.
Ugander et al.~\cite{UgKaBaMa11} analyzed the Facebook graph with 721M nodes and 
69B edges (representing friendships) on a 2,250 node Hadoop cluster.
They sampled 500,000 nodes and computed the exact local clustering coefficient for 
each sampled node.
They reported (binned) averages of the local clustering coefficients.
We note that the binned local clustering coefficient is different than the binned global clustering coefficient which we calculate in this paper. 
Additionally, the approach of computing the exact local clustering coefficient for a set of sample nodes is non-trivial for general graphs since assembling the neighbors for high-degree nodes is extremely expensive. In the case of the Facebook graph, the maximum number of neighbors is only 5,000 (per Facebook policy); compare to our graphs which have nodes with over 1 million neighbors.

\clearpage
\section{Background}

\subsection{Global Clustering Coefficient}

Let $G=(V,E)$ be an undirected graph
with $n=|V|$ nodes and $m=|E|$ undirected edges. We
assume the vertices are indexed by $i=1,\dots,n$. Let $d_i$ denote
the degree of vertex $i$; degree-zero vertices are ignored.
A \emph{wedge}  is a length-2 path.
Let $p_i$ denote the number of wedges centered at vertex $i$; i.e.,
$p_i = {d_i \choose 2} = \frac{d_i (d_i - 1)}{2}$.
A wedge is \emph{closed} if its endpoints are connected and \emph{open} otherwise. 
The \emph{center} of a wedge is the middle vertex.
A \emph{triangle} is a  cycle with three vertices. A closed wedge
forms a triangle; conversely, a triangle corresponds to \emph{three}
closed wedges. 
Let $t_i$ denote the number of triangles containing node $i$, which
is equal to the number of closed wedges centered at node $i$.
The \emph{node-level clustering coefficient} (first used in \cite{WaSt98}) is
\begin{equation*}
  c_i = \frac{t_i}{p_i} = \frac{\text{number of triangles incident to node
      $i$}}{\text{number of wedges centered at node $i$}}.
\end{equation*}
Thus, $c_i$ measures how tightly the neighbors of a
vertex are connected amongst themselves.  

We define $W$ to be the set of all wedges in $G$ and $p = |W| = \sum_i p_i$.
We partition $W$ into two disjoint subsets as follows:
\begin{align*}
  W_{0} &= \Set{w \in W | w \text{ open}}, \\
  W_{3} &= \Set{w \in W | w \text{ closed}} .
\end{align*}
The subscript of 3 for the closed wedges indicates that each triangle
creates 3 wedges in $W_3$.
Let $t = \frac{1}{3} \sum_i t_i = \frac{1}{3} |W_3|$ denote the total number of triangles
(since each triangle is counted thrice).
The \emph{(global) clustering coefficient}  (also known as the
transitivity) \cite{NeStWa01} of an undirected graph is 
given by
\begin{equation}
  \label{eq:CC}
  c 
  = \frac{|W_3|}{|W|} 
  = \frac{\sum t_i}{\sum p_i} 
  = \frac{3t}{p} 
  = \frac{3 \times \text{total number of triangles}}{\text{total number of wedges}}.
\end{equation}
At the global
level, $c$ is an indicator of how tightly nodes of
the graph are connected.  

\subsection{Binned Degree-wise Clustering Coefficient} 
In this paper, we will be using the binned degree-wise clustering
coefficients, which measure how tightly the neighborhood of vertices
of a specified degree group are connected.  
Let $D \subseteq \set{d_i,d_j,\dots}$ be a subset of degrees (recall that we ignore
degree-zero nodes). We define
$V_D = \set{i \in V | d_i \in D}$ and $n_D = |V_D|$.
In many cases, we are interested in a single degree, i.e.,
 if $D=\set{d}$ then 
$V_d$ is the set of nodes of degree $d$ and $n_d$ is the number of
nodes of degree $d$.

We define $W_D$ to be the set of all wedges centered at a node in $V_D$ and $p_D$ to be the
total number of wedges centered at nodes in $V_D$, i.e., $p_D =
|W_D|$. If $D = \Set{d}$, then $p_d = n_d {d \choose 2}$.
We partition the set $W_D$ into four disjoint
subsets as follows:
\begin{align*}
  W_{D,0} &= \Set{w \in W_D | w \text{ open}}, \\
  W_{D,q} &= \Set{w \in W_D | w \text{ closed and has $q$ nodes in $V_D$}} 
  \quad \text{for } q=1,2,3. 
\end{align*}
Define $p_{D,q} = |W_{D,q}|$ for $q=0,1,2,3$. Since $p_D = \sum_q p_{D,q}$,
 we can define
 \emph{binned degree-wise clustering coefficient}, $c_D$, as the fraction of
closed wedges in $W_D$; i.e., 
\begin{equation}
  \label{eq:cd}
  c_D = ({p_{D,1} + p_{D,2} + p_{D,3}})/{p}.
\end{equation}
The formula for triangles is more complex and given by
\begin{equation*}
  \label{eq:td}
  t_D = p_{D,1} + \frac{1}{2} \cdot p_{D,2} + \frac{1}{3} \cdot p_{D,3},
\end{equation*}
since for each triangle there is either one wedge in $W_{D,1}$, two
wedges in $W_{D,2}$ or three wedges in $W_{D,3}$.
\Fig{example} shows examples of these quantities when the bins are all
singletons: $\set{1}, \set{2}, \set{3}, \set{4}$.  

\begin{figure}[t!]
  \centering
  \begin{tikzpicture}[scale=0.5,
    nd/.style={circle,draw,fill=white,minimum size=4mm,inner sep=0pt}]
    \node (1) at (0,0) [nd] {1};
    \node (2) at (3,2) [nd] {2};
    \node (3) at (3,-2) [nd] {3};
    \node (4) at (6,1) [nd] {4};
    \node (5) at (8,-2) [nd] {5};
    \node (6) at (9,1.5) [nd] {6};
    \draw (1) to (2);
    \draw (1) to (3);
    \draw (2) to (4);
    \draw (3) to (4);
    \draw (3) to (5);
    \draw (4) to (5);
    \draw (4) to (6);
    \node [below right, text width=7cm, inner sep=0pt] at (10.25,2.25) 
    {\footnotesize %
      $n = 6$, $m = 7$, $\{d_i\} = \{2,2,3,4,2,1\}$\\[1mm]
      $p=12$,  $\{p_i\} = \{1,1,3,6,1,0\}$ \\[1mm]
      $t=1$, $\{t_i\} = \{0,0,1,1,1,0\}$ \\[1mm]
      $c=0.25$, $\{c_i\} = \{0,0,1/3,1/6,1,0\}$ \\[1mm]
      $\{n_d\} = \{1,3,1,1\}$,
      $\{p_d\} = \{0,3,3,6\}$ \\[1mm]
      $\{t_d\} = \{0,1,1,1\}$,
      $\{c_d\} = \{0,1/3,1/3,1/6\}$ %
    };
  \end{tikzpicture}
  \caption{Example graph with various quantities highlighted.}
  \label{fig:example}
\end{figure}
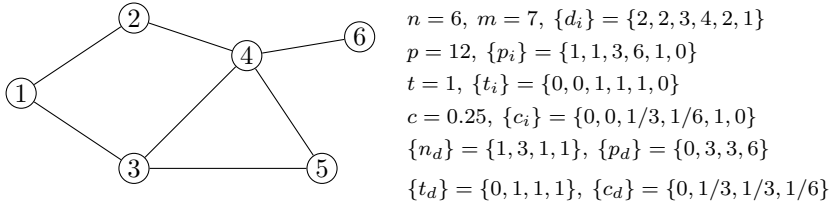

\section{Wedge Sampling for Triadic Measures}
\label{sec:sampling}
For a more detailed exposition of wedge sampling and empirical tests of its behavior,
we refer the reader to~\cite{SePiKo13}. 
For completeness, we review the relevant concepts and calculations here.

\subsection{Hoeffding's Inequality}

The following result is a simple corollary of Hoeffding's Inequality \cite{Ho63} (refer to Theorem 1.5 in \cite{DuPa}); 
the proof can be found in \cite{SePiKo13}. 
We say that $\eps$ is the \emph{error} and $(1-\delta)$ is the
\emph{confidence}. 

\begin{theorem} 
  \label{thm:Hoeffding}   
  Let $X_1, X_2, \ldots, X_k$ be independent random variables with $0
  \leq X_i \leq 1$ for all $i=1,\dots,k$.  
  Define $\bar X = \frac{1}{k} \sum_{i=1}^k X_i$. 
  Let $\mu = \Exp{\bar X}$. 
  For any positive $\eps, \delta$, setting $k \geq \Ceil{0.5 \eps^{-2}\ln(2/\delta)}$ yields
  \begin{displaymath}
    \Prob{|\bar X - \mu| \geq \eps } \leq \delta.    
  \end{displaymath}
\end{theorem}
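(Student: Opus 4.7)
The plan is to apply the standard two-sided Hoeffding inequality directly and then solve the resulting tail bound for $k$. Since the hypotheses of the classical Hoeffding inequality (Theorem 1.5 of \cite{DuPa}) are satisfied verbatim here (independence and bounded support in $[0,1]$), there is essentially nothing to prove beyond the algebra that converts the exponential bound into the sample-size requirement stated in the theorem.

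First, I would invoke Hoeffding's inequality with $a_i = 0$ and $b_i = 1$ for every $i$. Because each $b_i - a_i = 1$, the variance proxy $\sum_{i=1}^k (b_i - a_i)^2$ equals $k$, and the two-sided form of the inequality yields
\begin{equation*}
\Prob{|\bar X - \mu| \geq \eps} \;\leq\; 2 \exp\!\left(-2 k \eps^2\right).
\end{equation*}
Note that $\mu = \Exp{\bar X}$ is used here only to center the average; no additional structural assumption on the $X_i$ is needed.

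Next, I would impose the requirement that the right-hand side be at most $\delta$. Setting $2 \exp(-2 k \eps^2) \leq \delta$ and taking logarithms gives $2 k \eps^2 \geq \ln(2/\delta)$, i.e.\ $k \geq \tfrac{1}{2} \eps^{-2} \ln(2/\delta)$. Because $k$ must be an integer, taking $k \geq \Ceil{0.5 \eps^{-2} \ln(2/\delta)}$ suffices, which matches the hypothesis of the theorem exactly. Chaining these two inequalities produces the claimed bound $\Prob{|\bar X - \mu| \geq \eps} \leq \delta$.

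There is no real obstacle in this proof: the work is entirely in citing the right form of Hoeffding (the symmetric two-sided version with the factor of $2$ in front of the exponential, so that one obtains $\ln(2/\delta)$ rather than $\ln(1/\delta)$) and in being careful with the ceiling. The only subtle point worth flagging is that the factor of $2$ in $\ln(2/\delta)$ comes from union-bounding over the upper- and lower-tail events $\bar X - \mu \geq \eps$ and $\mu - \bar X \geq \eps$; using a one-sided Hoeffding bound would instead give $\ln(1/\delta)$ and would not match the statement.
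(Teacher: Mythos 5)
Your derivation is correct: applying the two-sided Hoeffding bound with $[0,1]$ support gives $2\exp(-2k\eps^2)$, and solving $2\exp(-2k\eps^2)\leq\delta$ for $k$ yields exactly the stated sample-size requirement. The paper itself omits the proof and defers to the cited reference, but this is the standard argument that reference uses, so your approach matches.
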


\subsection{Binned Degree-wise Clustering Coefficients and Triangles}

The strategy for computing the clustering coefficient per degree (or
degree range) is similar to that described for the degree-wise clustering coefficient in \cite{SePiKo13}. 

\begin{theorem}[Binned Degree-wise Clustering Coefficient]
  \label{thm:cd-est}
  For $\eps, \delta > 0$, set 
  $k \geq \Ceil{0.5\, \eps^{-2} \ln(2/\delta)}$.
  For $i=1,\dots,k$, choose wedge $w_i$ uniformly at random (with
  replacement) from $W_D$
  and let $X_i$ be defined as 
    \begin{displaymath}
      \label{eq:Xi}
      X_i =
      \begin{cases}
        1, & \text{if $w_i$ is closed}, \\
        0, & \text{otherwise}.
      \end{cases}
    \end{displaymath}
  Then \\[-5ex]
  \begin{equation*}
    \Prob{ |\hat c_D - c_D| \geq \eps } \leq \delta
    \qtext{for}
    \hat c_D = \frac{1}{k} \sum_{i=1}^k X_i.
  \end{equation*}
\end{theorem}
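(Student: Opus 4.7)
The plan is to recognize that the statement is essentially a direct application of the Hoeffding corollary (Theorem~\ref{thm:Hoeffding}) already stated in the paper, with the work concentrated in verifying that the sampling scheme defines independent $[0,1]$-valued random variables whose common mean is exactly $c_D$.

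First, I would argue that each $X_i$ is well-defined and bounded: since $X_i \in \{0,1\}$, the requirement $0 \leq X_i \leq 1$ of Theorem~\ref{thm:Hoeffding} holds trivially. Independence of $X_1,\dots,X_k$ follows from the fact that the wedges $w_i$ are drawn uniformly \emph{with replacement} from $W_D$, so the draws are mutually independent and the indicator functions inherit this independence.

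The main (and only substantive) step is computing $\mu = \mathbb{E}[X_i]$. Because $w_i$ is uniform on $W_D$,
\begin{equation*}
  \mathbb{E}[X_i] \;=\; \Pr\{w_i \text{ is closed}\} \;=\; \frac{|\{w \in W_D : w \text{ closed}\}|}{|W_D|}.
\end{equation*}
By the partition of $W_D$ into $W_{D,0},W_{D,1},W_{D,2},W_{D,3}$, the numerator equals $p_{D,1}+p_{D,2}+p_{D,3}$ and the denominator equals $p_D$. Invoking the definition of the binned degree-wise clustering coefficient from \eqref{eq:cd}, this ratio equals $c_D$. In particular, all $X_i$ share the common mean $c_D$, so $\mathbb{E}[\hat c_D] = c_D$ as well.

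With these ingredients in hand, I would finish by directly invoking Theorem~\ref{thm:Hoeffding} applied to $X_1,\dots,X_k$ with $\bar X = \hat c_D$ and $\mu = c_D$: the hypothesized choice $k \geq \lceil 0.5\,\eps^{-2}\ln(2/\delta)\rceil$ is exactly the threshold required there, so we conclude $\Pr\{|\hat c_D - c_D| \geq \eps\} \leq \delta$. The ``hard part'' is really just bookkeeping — making sure the denominator $p_D$ in the expectation aligns with the definition of $c_D$ used in the paper — after which the Hoeffding bound does all the probabilistic work.
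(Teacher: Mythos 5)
Your proposal is correct and follows exactly the route of the paper's own (one-line) proof: observe that $\mathbb{E}[X_i] = c_D$ because a uniformly random wedge from $W_D$ is closed with probability $(p_{D,1}+p_{D,2}+p_{D,3})/p_D = c_D$, then apply the Hoeffding corollary. You simply spell out the boundedness, independence, and mean computation that the paper leaves implicit (and you correctly read the denominator in the definition of $c_D$ as $p_D$, which is clearly what the paper intends despite the displayed formula writing $p$).
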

\begin{proof}
  Observe that $c_D = \Exp{\bar X}$ since it is the
  probability that a random wedge in $W_D$ is closed. The proof
  follows immediately from \Thm{Hoeffding}.
\end{proof}

\paragraph{Choosing Uniform Random Wedges}
We do not want to form all wedges explicitly. Instead, we
\emph{implicitly} generate random wedges. 
Observe that the number of wedges centered at vertex $i$ is 
exactly ${d_i \choose 2}$, and $p = \sum_i {d_i \choose 2}$.
That leads to the following procedure.
To select a random wedge, recall that $p_D = |W_D|$. Therefore, first choose vertex $i \in
V_D$ with probability ${d_i \choose 2} / p_D$. 
Second, choose two distinct neighbors of vertex $i$ to form a
random wedge. 
To set up this distribution, we need to compute the degree
distribution. 
If $D = \Set{d}$ (a
singleton), then all nodes in $V_d$ are equally probable.
If $D = \set{0,+\infty}$, then the weight of vertex $i$ is ${d_i \choose 2} / p$.

Estimating the number of triangles is slightly more complicated
since each closed wedge may have 1,2, or 3 vertices in $V_D$. %

\begin{theorem}[Degree-wise Triangle Count \cite{SePiKo13}]
  \label{thm:td-est}
  Let the conditions of \Thm{cd-est} hold.
  For each $w_i$, 
  let $Y_i$ be defined as
  \begin{equation*}
    Y_i =
    \begin{cases}
      1, & \text{if } w \in W_{D,1}, \\
      \frac{1}{2}, & \text{if } w \in W_{D,2}, \\
      \frac{1}{3}, & \text{if } w \in W_{D,3}, \\
      0, & \text{if } w \in W_{D,0} \text{ (open) }.\\
    \end{cases}
  \end{equation*}
  Then
  \begin{displaymath}
    \Prob{ | \hat t_D - t_D | \geq \eps \cdot p_D } \leq \delta
    \qtext{for}
    \hat t = p_D \cdot \frac{1}{k} \sum_{i=1}^k Y_i.
  \end{displaymath}
\end{theorem}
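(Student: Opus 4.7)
The plan is to reduce this statement to \Thm{Hoeffding} in the same way that \Thm{cd-est} was reduced, with the only new ingredient being a direct calculation of $\Exp{Y_i}$ that uses the triangle-counting identity
\[
  t_D = p_{D,1} + \tfrac{1}{2} p_{D,2} + \tfrac{1}{3} p_{D,3}
\]
already stated in the background section.

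First I would check that the random variables $Y_i$ satisfy the hypotheses of \Thm{Hoeffding}: by construction $Y_i \in \{0, 1/3, 1/2, 1\}$, so $0 \le Y_i \le 1$, and the $Y_i$ are independent because the wedges $w_i$ are drawn independently and uniformly at random (with replacement) from $W_D$.

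Next I would compute the common mean. Since $w_i$ is uniform on $W_D$ and the sets $W_{D,0}, W_{D,1}, W_{D,2}, W_{D,3}$ partition $W_D$, for each $q$ we have $\Prob{w_i \in W_{D,q}} = p_{D,q}/p_D$. Hence
\[
  \Exp{Y_i} \;=\; 1 \cdot \tfrac{p_{D,1}}{p_D} + \tfrac{1}{2} \cdot \tfrac{p_{D,2}}{p_D} + \tfrac{1}{3} \cdot \tfrac{p_{D,3}}{p_D} + 0 \cdot \tfrac{p_{D,0}}{p_D} \;=\; \frac{t_D}{p_D},
\]
using the identity for $t_D$. Setting $\bar Y = \frac{1}{k}\sum_{i=1}^{k} Y_i$, linearity of expectation gives $\mu := \Exp{\bar Y} = t_D/p_D$.

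Finally, applying \Thm{Hoeffding} with our chosen $k \ge \lceil 0.5\,\eps^{-2} \ln(2/\delta)\rceil$ yields
\[
  \Prob{\, \bigl|\bar Y - t_D/p_D\bigr| \ge \eps \,} \;\le\; \delta.
\]
Since $\hat t_D = p_D \cdot \bar Y$, multiplying the event $|\bar Y - t_D/p_D| \ge \eps$ through by $p_D$ (a positive constant) gives the equivalent event $|\hat t_D - t_D| \ge \eps \cdot p_D$, which completes the proof. There is no real obstacle here: once the expectation calculation is pinned down via the partition of $W_D$, the conclusion is immediate from \Thm{Hoeffding}, and the only subtle point is remembering to rescale the error bound by $p_D$ to match the estimator's rescaling.
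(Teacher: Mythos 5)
Your proposal is correct and follows essentially the same route as the paper's own proof: compute $\Exp{Y_i} = t_D/p_D$ from the partition of $W_D$ and the identity $t_D = p_{D,1} + \tfrac{1}{2}p_{D,2} + \tfrac{1}{3}p_{D,3}$, apply \Thm{Hoeffding}, and multiply the resulting inequality by $p_D$. If anything, your write-up is slightly more careful than the paper's, which misstates the intermediate claim as $\Exp{Y} = t_D$ before correctly deriving $t_D/p_D$.
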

\begin{proof}
  We claim $\Exp{Y} = t_D$.
  Suppose that $w$ is selected from $W_D$ uniformly at random.
  Observe that
  \begin{align*}
    \Exp{Y} 
    &
    = \Prob{w \in W_{D,1}} 
    + \frac{\Prob{w \in W_{D,2}}}{2} 
    + \frac{\Prob{w \in W_{D,3}}}{3} \\
    &
    = 1 \cdot \frac{p_{D,1}}{p_D}
    + \frac{1}{2} \cdot \frac{p_{D,2}}{p_D}
    + \frac{1}{3} \cdot \frac{p_{D,3}}{p_D} \\
    &
    = t_D/p_D,
  \end{align*}
  per \Eqn{td}. Hence, from \Thm{Hoeffding} we have
  \begin{displaymath}
    \Prob{ | \hat t_D/p_D - t_D/p_D | \geq \eps } \leq \delta,
  \end{displaymath}
  and the theorem follows by multiplying the inequality by $p_D$.
\end{proof}

\subsection{Computing a Random Sample of the Triangles}

In addition to knowing the number of triangles in a graph, it may also
be interesting to consider the properties of those triangles. For
instance, Durak et al.~\cite{DuPiKo12} consider the differences in
node degrees in a triangle.
 
It turns out that the closed wedges discovered during the wedge
sampling procedure are triangles sampled uniformly (with
replacements). Hence, we can study these randomly sampled triangles to
estimate the overall characteristics of triangles in the graph.
\begin{theorem}
  Let $W_s$ be a random sample of the wedges of  a graph $G$,  and let
  $T_s \subseteq W_s$ triangles that are formed by the closed wedges
  in $W_s$.  
  Then each triangle in $T_s$ is a uniform random sample from the
  triangles of $G$.  
\end{theorem}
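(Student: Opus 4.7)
The plan is to exploit the three-to-one correspondence between closed wedges and triangles: every triangle $\tau$ of $G$ has its three vertices, and placing each of them at the center of a wedge gives three distinct closed wedges, all in $W_3$. Conversely, every closed wedge determines a unique triangle. Thus there is a surjective 3-to-1 map $\phi: W_3 \to \{\text{triangles of } G\}$.

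First I would compute, for a single wedge $w$ drawn uniformly at random from $W$ and for an arbitrary fixed triangle $\tau$ of $G$, the probability that $w$ is one of the three closed wedges mapping to $\tau$. Since exactly three wedges in $W$ satisfy this and $|W| = p$, this probability is $3/p$, which is the same constant for every triangle $\tau$. Summing over all $t$ triangles recovers the expected total probability of closure $3t/p = c$, consistent with \Eqn{CC}, which serves as a sanity check.

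Next I would condition on the event $\{w \in W_3\}$, i.e., on $w$ being added to $T_s$. By Bayes' rule,
\begin{equation*}
\Prob{\phi(w) = \tau \mid w \in W_3} = \frac{\Prob{\phi(w) = \tau}}{\Prob{w \in W_3}} = \frac{3/p}{3t/p} = \frac{1}{t},
\end{equation*}
which does not depend on $\tau$. Hence, conditional on a sampled wedge being closed, the induced triangle is uniformly distributed over the $t$ triangles of $G$. Because $W_s$ is obtained by independent uniform draws from $W$ (with replacement, as in \Thm{cd-est}), the triangles in $T_s$ are i.i.d.\ uniform samples from the triangle set of $G$, regarded as a multiset.

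The only real obstacle is clarifying the sampling model: we must note that $T_s$ is naturally a multiset (the same triangle can arise from multiple sampled wedges, and even three wedges that happen to all map to the same $\tau$ give three copies of $\tau$ in $T_s$). Once that is clear, the 3-to-1 correspondence together with uniformity of the wedge sample makes the uniformity over triangles essentially automatic.
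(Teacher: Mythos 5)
Your proposal is correct and follows essentially the same route as the paper's own (much terser) proof: both rest on the $3$-to-$1$ correspondence between closed wedges and triangles together with the uniformity and independence of the wedge draws. Your version simply makes the argument explicit via the conditional probability $\Prob{\phi(w)=\tau \mid w \in W_3} = (3/p)/(3t/p) = 1/t$ and correctly flags the multiset subtlety that the paper leaves implicit.
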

\begin{proof} 
  The proof depends on observing that a triangle being chosen depends
  only on one of its 3 wedges being chosen.  Since the wedge sample is
  uniformly random, each triangle is equally likely to be picked, and
  there is no dependency between any pair of triangles, which implies
  a uniform sample.
\end{proof}
\subsection{Practical Performance of Wedge Sampling} 
Earlier work by a subset of the authors~\cite{SePiKo13} provides a
thorough study on  how the  techniques described above perform in
practice.  
As expected, tremendous  improvements are achieved in runtimes
compared to full enumeration, especially for large graphs, since the
number of samples is independent of graph size. 
Specifically, we see speed-ups of more than 1000X with errors in the
clustering coefficient of less than 0.002.
Additionally, in comparison to the Doulion method (an edge-sampling
technique) we obtain speed-ups of 5X or more while obtaining the same
accuracy.
The ability to adapt our wedge-sampling method to computing binned degree-wise
clustering coefficients and triangle sampling are also benefits in
comparison to edge-based sampling.
 
Our goal in this work is to 
implement the wedge sampling approach within the MapReduce framework
and provide evidence that it can scale to much larger problems.

\section{MapReduce Implementation}

\newcommand{\PhaseOne}{Compute Degree-based Statistics}
\newcommand{\PhaseOneA}{Compute Degree per Vertex}
\newcommand{\PhaseOneB}{Compute Number of Wedges per Bin}
\newcommand{\PhaseOneC}{Gather Wedges per Bin}
\newcommand{\PhaseTwo}{Select Wedge Samples}
\newcommand{\PhaseTwoA}{Select Sample Wedge Centers}
\newcommand{\PhaseTwoB}{Gather Sample Wedge Centers}
\newcommand{\PhaseTwoC}{Create Sample Wedges}
\newcommand{\PhaseThree}{Check Sample Wedge Closure}
\newcommand{\PhaseThreeA}{Gather Sample Wedge Closure Hashes}
\newcommand{\PhaseThreeB}{Check Sample Wedge Closure}
\newcommand{\PhaseFour}{Post-processing}
\newcommand{\PhaseFourA}{Find Degree of First Vertex per Wedge}
\newcommand{\PhaseFourB}{Find Degree of Second Vertex per Wedge}
\newcommand{\PhaseFourC}{Summarize Results}

\subsection{Overview}

We now present a MapReduce algorithm for estimating the clustering
coefficients and number of triangles in a graph. For details on
MapReduce, we refer the reader to Lin and Dyer \cite{LiDy10}; we have
emulated their style in our algorithm presentations.
We use the open-source Hadoop implementation of MapReduce, and the
Hadoop Distributed File System (HDFS) for storing data.
Each MapReduce job takes one or more distributed files as input.
These files are automatically stored as \emph{splits} (also known sometimes as blocks),
and one  mapper is launched per split.
The mappers produce key-value pairs. All values with the same key are
sent to the same reducer. The number of reducers is specified by the user.
Each MapReduce job produces a single HDFS output file.
A MapReduce job accepts \emph{configuration parameters},
which are passed along as data to the mapper and reducer functions; we
discuss these in more detail in the sections that follow. 
The set of MapReduce jobs in our algorithm is coordinated by a Hadoop Java
program running on a single \emph{client node}.

In our code, we assume the nodes are binned by degree as
discussed in \Sec{bin}. Computation of the global clustering
coefficient is a special case which can be computed by either looking
only at a single bin containing all degrees or using a
weighted average of the binned clustering coefficients (see \Sec{Phase4c}).

Our input is an undirected edge list where the node identifiers are
64-bit integers; we assume no duplicates or
self-edges and no particular ordering. 
We divide our MapReduce algorithm
into three \emph{major phases} plus post-processing, as presented in
\Fig{overview}.  Each major phase makes a complete pass through the
edge list.
The first phase sets up the distribution on wedges.
The second phase creates the sample wedges.
Finally, the third phase checks whether or not the
sample wedges are closed.
In all three phases, we have strategies to reduce the data volume in
the shuffle phase (between the map and reduce), discussed in detail
in the sections that follow.

\begin{figure}[htp]
  \centering
  \includegraphics[width=\textwidth]{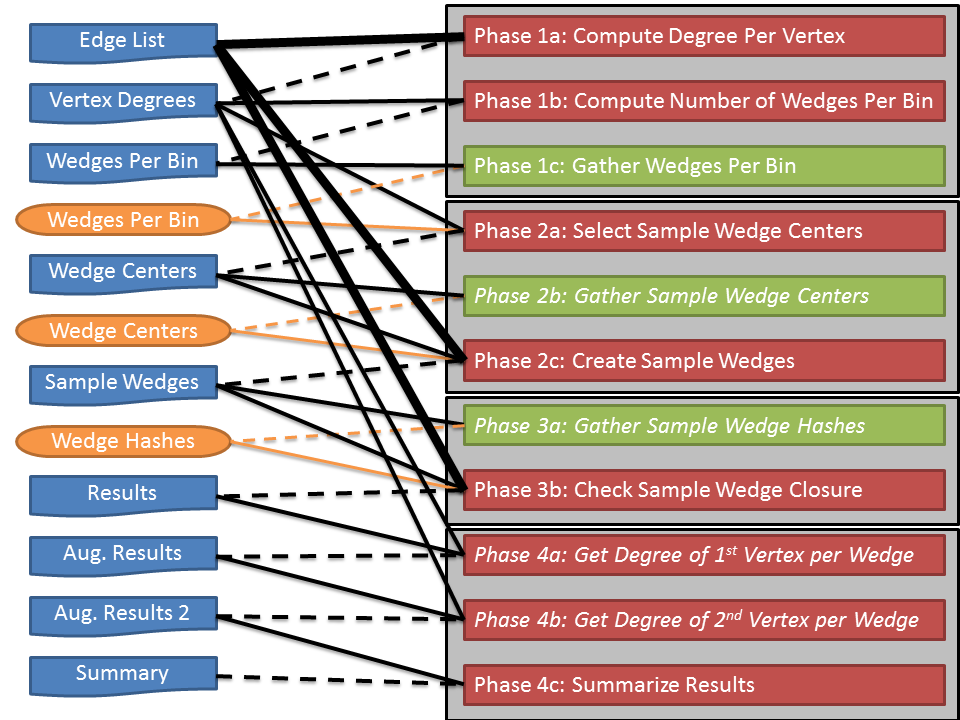}
  \caption{Algorithm overview for estimating clustering coefficients
    and counting triangles, both binned by degree. Red
    boxes indicate a MapReduce job, while green represents a serial
    operation on the client node.
    Blue boxes indicate data files. The edge list is provided by
    the user; all other data files are produced by the method. 
    Orange boxes indicate data that is passed as a ``configuration
    parameter'' to all mappers. Solid lines
    indicate consumption of data. Dotted lines indicate creation of data.
}
  \label{fig:overview}
\end{figure}

\subsection{Binning}
\label{sec:bin}

We define degree bins in a parameterized way as follows. 
Let $\tau$ be the number of singleton bins, and let $\omega > 1$ be the
rate of growth on the bin sizes. The first $\tau$ bins are singletons
containing 
degrees $1, 2, \cdots, \tau$
respectively. The remaining bins grow exponentially in size.

We describe the lowest degree of bin $k$ as
\begin{equation}
\label{eq:binlodeg}
  \textsc{BinLoDeg}(k) =
  \begin{cases}
    k, & \text{if } k \leq \tau, \\
    \Ceil{ (\omega^{(k-\tau)} - 1)/(\omega - 1) } + \tau,
    & \text{otherwise.}
  \end{cases}
\end{equation}
The highest degree for bin $k$ is just one less than the lowest degree
of bin $k+1$.
For a given degree $d$, we can easily look up its bin  as
\begin{equation}
\label{eq:binid}
  \textsc{BinId}(d) =
  \begin{cases}
    d, & \text{if } d \leq \tau, \\
    \Floor{\log(1 + (\omega - 1)(d - \tau)) / \log(\omega)} + \tau, &
    \text{otherwise.}
  \end{cases}
\end{equation}
In our implementation, $\tau$ and $\omega$ are communicated to each
MapReduce job as configuration parameters.

For $\tau=2$ and $\omega=2$, the bins are
$\set{1}$, $\set{2}$, $\set{3,4}$, $\set{5,6,7,8}$, $\set{9,\dots,16}$,
$\set{17, \dots, 32}$, and so on.
Note that the bin $\set{1}$ cannot have any
wedges, so we just ignore it.
Let $\bar d$ be an upper bound on the highest degree for a given
graph. Then choosing $\tau=1$ and $\omega = \bar d$ yields bins
$\set{1}, \set{2,\dots,\bar d}$.
In other words, we have a single bin containing all vertices
(excepting degree-1 vertices).
On the other hand, choosing $\tau=\bar d$ yields
$\set{1}, \set{2}, \dots , \set{\bar d}$.
Here, every bin is a singleton.

We are not constrained to equation \Eqn{binid} for computing the
bins; we can use any procedure such that each degree is assigned to a
single bin. Likewise, \Eqn{binlodeg} is optional and used to reduce
the shuffle volume in Phase 2c.

\clearpage
\subsection{Phase 1: \PhaseOne}

\subsubsection{Phase 1a: \PhaseOneA}
Phase 1a is a straightforward MapReduce task---computing the degree of
each vertex. The \textsc{Map} and \textsc{Reduce} functions are
described in \Alg{phase1a}.  The input is the \textbf{edge list file};
each entry is a pair of vertex IDs $(v,w)$ that define an edge. 
The \textsc{Map} function is called for
each edge $(v,w)$ and emits two key-value pairs keyed to the vertex IDs
and having a value of 1.  The \textsc{Reduce} function gathers all the
values for each vertex and sums them to determine the degree. The
final output to HDFS is a \textbf{vertex degree file}; each entry is of the
form $(v,d)$ where $v$ is a vertex ID and $d$ is its degree. 

\begin{algorithm}[t]
  \caption{\PhaseOneA\@ (Phase 1a)}
  \label{alg:phase1a}
  \begin{algorithmic} 
    \Procedure{Map}{$v,w$} 
    \Comment{Input is \textbf{edge list file}}
    \State \textsc{Emit}$(v, 1)$
    \State \textsc{Emit}$(w, 1)$
    \Comment{Emit both for an undirected graph}
    \EndProcedure
    \State
    \Procedure{Reduce}{$v, \set{ x_1, x_2, \ldots }$}
    \State $d \gets \textsc{Sum}(\set{ x_1, x_2, \ldots })$
    \Comment{Compute degree}
    \State \textsc{Emit}$(v, d)$
    \Comment{Output is \textbf{vertex degree file}}
    \EndProcedure
  \end{algorithmic}
\end{algorithm}

\Alg{phase1a} shows a simple version of the code.
To make the code more efficient, we collect local counts within each
mapper (using a Java \texttt{Map} container) and emit the totals.
This technique is called an in-memory combiner \cite{LiSc10}.  
We found the in-memory combiner
to reduce shuffle volume more than employing the reducer as a combiner.

\subsubsection{Phase 1b: \PhaseOneB}

Phase 1b works with the output of Phase 1a (\textbf{vertex degree
  file}) to compute the number of wedges per bin.
The \textsc{Map} and \textsc{Reduce} functions for Phase 1b are
presented in \Alg{phase1b}. The input is the list of degrees per
vertex. The \textsc{Map} function is called for each vertex (with its
associated degree) and emits the number of wedges for that vertex,
keyed to the appropriate bin.  The \textsc{Reduce} function simply
combines the results for each bin. The final output is a
\textbf{wedges per bin file}; each entry is of the form $(b,n_b,p_b)$
where $b$ is the bin ID, $n_b$ is the number of vertices in the bin, and
$p_b$ is the number of wedges in the bin.

\begin{algorithm}[t]
  \caption{\PhaseOneB\@ (Phase 1b)}
  \label{alg:phase1b}
  \begin{algorithmic}
    \State \textbf{parameters:} $\tau, \omega$ \Comment{Binning parameters}
    \State %
    \Procedure{Map}{$v, d$} 
    \Comment{Input is \textbf{vertex degree file}}
    \State $b \gets \textsc{BinId}(d,\tau,\omega)$
    \Comment{Compute bin ID}
    \State $n \gets 1$
    \Comment{Number of vertices}
    \State $p \gets d \cdot (d-1)/2$
    \Comment{Number of wedges}
    \State \textsc{Emit}$(b, (n,p))$
    \EndProcedure
    \State %
    \Procedure{Reduce}{$b, \set{ (n_1,p_1), (n_2,p_2), \ldots }$}
    \Comment{One reduce function per bin}
    \State $n \gets \textsc{Sum}(\set{n_1, n_2, \ldots})$
    \Comment{Number of vertices in bin}
    \State $p \gets \textsc{Sum}(\set{ p_1, p_2, \ldots })$
    \Comment{Number of wedges in bin}
    \State \textsc{Emit}$(b,n,p)$ 
    \Comment{Output is \textbf{wedges per bin file}}
    \EndProcedure
  \end{algorithmic}
\end{algorithm}

Once again, we have shown a simple version of the algorithm in
\Alg{phase1b}. To make the code more efficient, we 
collect local counts within each
mapper (using a Java \texttt{Map} container)
and emit the totals. 

For the case of a single bin, strictly speaking, Phase 1b is
unnecessary. Instead, we could have used a Hadoop \emph{global counter} to tally the total wedges in the
reduce step of Phase 1a. 

\subsubsection{Phase 1c: \PhaseOneC}

From the \textbf{wedges per bin file} (output of Phase 1b),
we create a \textbf{wedges per bin object}, which acts as a function
$\theta$ such that $\theta(b)$ is the number of wedges in bin $b$.
The work is performed entirely in our main program running on the client node.
It reads Phase 1b output from HDFS, stores wedges per bin values in a
Java \texttt{Map} container, and launches the next MapReduce job (Phase 2a),
sending the \emph{serialized} container as a configuration parameter.

\subsection{Phase 2: \PhaseTwo}

\subsubsection{Phase 2a: \PhaseTwoA}

The input to Phase 2a is the \textbf{vertex degree file} along with
the \textbf{wedges per bin object}, which is passed as a configuration parameter.
Phase 2a calculates the number of sample wedges centered at each
vertex. The \textsc{Map} function is shown in \Alg{phase2a}.
The \textsc{Map} function is called for each (vertex ID, degree)
pair. From this, we can calculate the expected number of wedges that
would be sampled from the vertex for a uniform random sample, 
$q^*$. This number is unlikely to be integral. Rounding
up would produce far too many wedges. Instead, we use probabilistic
rounding. For instance, if $q^* = 0.1$, then there is a 10\% change of
producing $q=1$ wedges and a 90\% chance of producing no wedges,
$q=0$. We are only off by at most one, so if $q^* = 1.1$, then there
is a 10\% change of producing $q=2$ wedges and a 90\% chance of
producing $q=1$ wedge. Hence, the expected number of wedges for this vertex is exactly $q^*$.
Only vertices with at least one sample wedge
are emitted. 
The final output is a \textbf{wedge centers file}; each entry is of
the form $(v,d,q,p)$ where $v$ is the 
vertex ID, $d$ is the vertex degree, $q$ is the number of sample wedges centered
at that vertex, and $p$ is the total number of wedges in the bin
containing $v$.
The \textsc{Reduce} function is just the identity map and is not shown.

\begin{algorithm}[t]
  \caption{Determine number of samples per vertex (Phase 2a)}
  \label{alg:phase2a}
  \begin{algorithmic}
    \State \textbf{parameter:} $k$ 
    \Comment{Desired number of samples  per bin}
    \State \textbf{parameter:} $\theta$ 
    \Comment{Represents \textbf{wedges per bin object}}
        \State %
    \Procedure{Map}{$v, d$} \Comment{Input is \textbf{vertex degree file}}
    \State $b \gets \textsc{BinId}(d)$
    \Comment{Compute bin ID}
    \State $p \gets \theta(b)$
    \Comment{Total number of wedges in bin containing $v$}
    \State $q^* \gets (d \cdot (d-1)/2) \cdot k / p$  
    \Comment{Ideal number of samples, likely noninteger}
    \State $x \gets \textsc{rand }([0,1])$  \Comment{Uniform random number in $[0,1]$}
    \State $q \gets \set{x \geq (q^* - \Floor{q^*})}\text{ ? }\Ceil{q^*}:\Floor{q^*}$
    \Comment{Number of \emph{sample} wedges centered at $v$}
    \If{$q \geq 1$}
    \Comment{Skip vertices with no samples}
    \State \textsc{Emit}$(v,d,q,p)$
    \Comment{Output is \textbf{wedge centers file}}
    \EndIf
    \EndProcedure
  \end{algorithmic}
\end{algorithm}

\subsubsection{Phase 2b: \PhaseTwoB}

Phase 2b is an optional step that generates a Java \texttt{Map} of wedge centers and
their bin IDs based on the output of Phase 2a (\textbf{wedge centers file}). 
We represent this object as a function $\gamma$ such that
\begin{displaymath}
  \gamma(v) = 
  \begin{cases}
    0 & \text{if Phase 2b is skipped}, \\
    1 & \text{if $v$ is not a wedge center}, \\
    b \geq 2 & \text{if $v$ is a wedge center, in which case $b$ is
      the bin ID}.
  \end{cases}
\end{displaymath}
This \textbf{wedge centers object} has one value for every vertex appearing
in a wedge center.
It is serialized and
passed as a configuration parameter to Phase 2c, where it is used
to filter the edges that are emitted
by the \textsc{Map} function.

Note that Hadoop imposes a limit on the size of the configuration parameters
(5MB by default).
If the number of wedge centers is too large (a few hundred thousand samples
will exceed 5MB), then other options must be explored.
One alternative is to pass the container to the \textsc{Map} tasks using
the Hadoop distributed cache; however, we have not implemented this idea.

Phase 2b is optional, and can be skipped if there are too many wedge centers.
We demonstrate the benefits of this step in \Sec{results}.

\subsubsection{Phase 2c: \PhaseTwoC}

In Phase 2c, the goal is to take each sample wedge center (from the
\textbf{wedge center file}), collect its neighbors (from the
\textbf{edge list file}), and create a set of sample wedges. We merge
each vertex and its neighbors at the reduce phase.  If it exists, the
optional \textbf{wedge center object} is used to filter the edges that
are shuffled, ignoring all edges that are not adjacent to a sampled
wedge center.
The algorithm is shown in \Alg{phase2c}. For clarity, we give a separate \textsc{Map}
function for each input type. In the actual implementation, we have to
determine the input type on the fly, because both input files are of Hadoop
type \texttt{Text}.
For input from the \textbf{wedge centers file},
 the \textsc{Map} function simply passes along its degree and
sample wedge count (i.e., the number of wedges to be sampled from the vertex). 

For input from the \textbf{edge list file}, the \textsc{Map} function checks
to see if the edge is adjacent to a wedge center. If so, it is passed
based on the outcome of a random coin flip. The aim of the \textsc{Reduce} phase
is to generate random wedges centered at a vertex (say $v$). The most na\"{i}ve \textsc{Map} implementation
would forward all edges incident to $v$, so that wedges can be selected from them.
A major problem with this is that if the number of samples $k$ is much 
less than the degree of $v$, most of the communication is unnecessary.
For example, the highest degree vertices of a social network graph might
link to millions of edges, but $k$ is in the tens of thousands or less;
therefore, most of the incident edges will not participate in sampled
wedges centered at these vertices.
We have a probabilistic fix to address this situation.

We do not have the vertex degree readily available, but we do know its
bin and therefore a lower bound on its degree.
Consider a vertex $v$ of degree greater than $d_{\min}$, where
$2k \leq d_{\min}/2$.  We send just some of the incident edges to $v$,
with independent probability $\phi = 4k/d_{\min} \leq 1$.
Then the expected number of edges to send is $4k(d_v/d_{\min})$.
Note that this expectation is at least $4k$. Getting less than $2k$ edges is potentially disastrous,
but the probability of this is minuscule. By a multiplicative Chernoff bound (given below),
the probability of such an event is $\exp(-k/8)$. For $k = 1000$ (a tiny sample size), the probability
is less than $10^{-55}$.

\begin{theorem}[Multiplicative Chernoff Bound \cite{DuPa}] %
Let $X = \sum_{i \leq r} X_i$, where each $X_i$ is independently distributed in $[0,1]$.
Then
  \begin{displaymath}
    \Prob{ X \leq (1-\delta)\Exp{X} } \leq \exp(-\delta^2\Exp{X} /2).
  \end{displaymath}
\end{theorem}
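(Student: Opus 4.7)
The plan is to apply the standard exponential moment (Chernoff) technique. Set $\mu = \Exp{X}$ and fix a parameter $t > 0$ to be chosen later. Since the event $\set{X \leq (1-\delta)\mu}$ coincides with $\set{e^{-tX} \geq e^{-t(1-\delta)\mu}}$, Markov's inequality combined with independence of the $X_i$ yields
\begin{equation*}
\Prob{X \leq (1-\delta)\mu} \;\leq\; e^{t(1-\delta)\mu} \prod_{i \leq r} \Exp{e^{-t X_i}}.
\end{equation*}

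The key subtlety is that each $X_i$ lies in $[0,1]$ rather than $\set{0,1}$, so we cannot directly invoke the familiar Bernoulli Chernoff calculation. I would bridge this gap by convexity: because $x \mapsto e^{-tx}$ is convex, on $[0,1]$ it lies below the chord from $(0,1)$ to $(1,e^{-t})$, so $e^{-tx} \leq 1 - x(1-e^{-t})$ for $x \in [0,1]$. Taking expectations and then applying $1+y \leq e^y$ gives $\Exp{e^{-tX_i}} \leq \exp(-\Exp{X_i}(1-e^{-t}))$, and multiplying over $i$ telescopes to $\Exp{e^{-tX}} \leq \exp(-\mu(1-e^{-t}))$. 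Substituting back,
\begin{equation*}
\Prob{X \leq (1-\delta)\mu} \;\leq\; \exp\bigl(\mu[t(1-\delta) - (1-e^{-t})]\bigr).
\end{equation*}

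To finish, I would optimize over $t$. Differentiating the exponent in $t$ shows the minimum occurs at $t^{*} = -\ln(1-\delta) > 0$, which substitutes to give $\Prob{X \leq (1-\delta)\mu} \leq \exp(-\mu[\delta + (1-\delta)\ln(1-\delta)])$. It remains to verify the elementary inequality $\delta + (1-\delta)\ln(1-\delta) \geq \delta^2/2$ for $\delta \in [0,1]$; expanding the logarithm as a power series rearranges to $\delta + (1-\delta)\ln(1-\delta) = \sum_{k \geq 2} \delta^k / (k(k-1))$, whose leading term is already $\delta^2/2$ and whose remaining terms are nonnegative.

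The main obstacle is conceptual rather than computational: recognizing that the convexity argument in the second step is precisely what reduces the general $[0,1]$-valued setting to the Bernoulli case with the same marginal mean. Once that reduction is made, the optimization over $t$ and the final Taylor comparison are routine one-variable calculus. Everything else (Markov, independence, the $1+y \leq e^y$ bound) is boilerplate.
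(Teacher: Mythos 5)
Your argument is correct and complete: the exponential-moment/Markov step, the convexity reduction of $[0,1]$-valued variables to the Bernoulli moment-generating-function bound, the optimization at $t^{*}=-\ln(1-\delta)$, and the power-series verification that $\delta+(1-\delta)\ln(1-\delta)\geq\delta^{2}/2$ are all sound (with the degenerate cases $\delta=0$ and $\Exp{X}=0$ holding trivially). The paper does not prove this theorem at all --- it simply cites Dubhashi and Panconesi --- and your derivation is precisely the standard argument given in that reference, so there is nothing to contrast.
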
%
If $d_v$ is not too far from $d_{\min}$, then the expectation $4k(d_v/d_{\min})$
is potentially much smaller than $d_v$. Hence, we get the desired number of random edges
without sending too many.

Even with this improvement, the data passed forward
may be too large to fit 
into the reducer's memory. We use a feature of Hadoop
called \emph{secondary sort} to ensure that the data arrives
pre-sorted. 
Note that the key used for passing along the vertex
information is $v\mathtt{:}0$ and the key for the edges is of the form
$v\mathtt{:}y$, where $y$ is a random positive integer. 
This data is all mapped to the key $v$, but the values
following the colon control the sort of the values associated with
$v$. The secondary key of zero ensures that the degree and wedge count
data are first. The secondary keys for edges ($y$) ensure that
the adjacent edges are randomly sorted; otherwise, Hadoop would present the edges
in their order of arrival, which could bias the selection.

\begin{algorithm}[p]
  \caption{\PhaseTwoC\@ (Phase 2c)}
  \label{alg:phase2c}
  \begin{algorithmic}
    \State \textbf{parameters:} $\tau, \omega$ \Comment{Binning parameters}
    \State \textbf{parameter:} $k$ 
    \Comment{Desired number of samples  per bin}
    \State \textbf{parameter:} $\gamma$
    \Comment{Represents \textbf{wedge centers object}}
    \State %
    \Procedure{Map}{$v, d, q, p$}
    \Comment{Input is \textbf{wedge centers file}}
    \State \textsc{Emit}$(v\mathtt{:}0, (d,q,p))$
    \Comment{Note secondary sort key}
    \EndProcedure
    \State %
    \Procedure{Map}{$v,w$} 
    \Comment{Input is \textbf{edge list file}}
    \State \textsc{EdgeHelper}($v,w$)
    \State \textsc{EdgeHelper}($w,v$)
    \EndProcedure
    \State %
    \Procedure{EdgeHelper}{$v,w$}
    \State $b \gets \gamma(v)$
    \Comment{Extract bin ID}
    \If{$b = 0$}
    \Comment{Phase 2b was skipped}
    \State $\phi \gets 1$
    \Comment{Always emit the edge}
    \ElsIf{$b=1$}
    \Comment{Vertex is not a wedge center}
    \State $\phi \gets 0$
    \Comment{Never emit the edge}
    \Else %
    \Comment{Vertex is a wedge center}
    \State $d_{\min} = \textsc{BinLoDeg}(b,\tau,\omega)$
    \Comment{Lower bound degree of $v$}
    \State $\phi \gets 2 \cdot ({2k}/{d_{\min}})$
    \Comment{Proportion of edges to emit for $v$}
    \EndIf
    \State $x \gets \textsc{Rand}([0,1])$ 
    \Comment{Uniform random number in $[0,1]$}
    \If{$x \leq \phi$}
    \Comment{Probabilisticly downselect}
    \State $y \gets \textsc{Rand}(\set{1,\dots, \mathtt{maxlongint}})$
    \Comment{Random long integer}
    \State \textsc{Emit}$(v\mathtt{:}y, w)$
    \Comment{Note secondary sort key}
    \EndIf
    \EndProcedure
    \State %
    \Procedure{Reduce}{$v,\set{x_1,x_2,\dots}$}
    \If{$x_1$ is a wedge center}
    \Comment{If it exists, the wedge center information is first}
    \State $(d,q,p) \gets x_1$ 
    \Comment{Unpack wedge center information}
    \State $(d',\set{(i_{\ell},j_{\ell})}_{\ell=1}^q\}) \gets \textsc{Sampling}(d,q)$
    \Comment{Determine sample wedges}
    \State $\set{w_1, \dots, w_{d'}} \gets \set{x_2, x_3, \dots, x_{d'+1}}$
    \Comment{Read only $d'$ neighbors}
    \ForAll{$\ell = 1,\dots,q$}
    \State $h \gets \textsc{Hash}(w_{i_q},w_{j_q})$
    \Comment{Hash of edge that would close this wedge}
    \State \textsc{Emit}$(h, v, w_{i_q}, w_{j_q}, p, d)$
    \Comment{Output is \textbf{sample wedges file}}
    \EndFor
    \EndIf
    \EndProcedure    
    \State %
     \Procedure{Sampling}{$d,q$} 
    \Comment{Subroutine for simulated sampling}
    \ForAll{$\ell=1,\dots,q$} \Comment{Generate endpoints for each wedge}
    \State $i_{\ell} \gets \textsc{Rand}\set{1,\dots,d}$
    \State $j_{\ell} \gets \textsc{Rand}\set{1,\dots,d}\setminus \set{i_{\ell}}$
    \EndFor
    \State $\mathcal{S} \gets \set{i_1,\dots,i_q} \cup \set{j_1,\dots,j_q}$ 
    \Comment{Gather unique indices (duplicates removed)}
    \State $d' \gets |\mathcal{S}|$
    \Comment{Number of edges needed}
    \State Define mapping $\pi: \mathcal{S} \rightarrow \set{1,\dots,d'}$ \Comment Renumber from 1 to $d'$
    \State \Return $d'$ and pairs $\set{(\pi(i_{\ell}),\pi(j_{\ell}))}_{\ell=1}^q\}$
    \EndProcedure
  \end{algorithmic}
\end{algorithm}

From the secondary sort, the wedge center must be first in the values
list at the reduce phase, if it
exists. If it does not exist, then there is nothing to do.
Recall that for each wedge center $v$, we have its degree, $d$, and a
desired number of wedge samples, $q$.  Each wedge must be randomly
sampled \emph{with} replacement. The two edges of a single wedge are
sampled \emph{without} replacement. So, wedge sampling requires a
minimum of 2 and a maximum of $2q$ edges. If $2q > d$, some
edges are necessarily reused. If $2q \ll d$, it is more 
likely that every wedge centered at $v$ will have two unique edges; however, 
due to the birthday paradox, there remains a non-negligible 
likelihood of wedge overlap even for large $d$. 
For these large $d$, we want to
avoid reading all neighbors into memory since the list is quite long, but we still want to accurately reproduce uniform sampling with replacement.  
We do this
by using a simulated sampling procedure explained below. 
It only requires reading   the first $d'$
neighbors into memory where $d' \leq \min\set{d,2q}$.

Procedure \textsc{Sampling} produces $q$ uniform random wedges (with replacement) centered at $v$. Number the edges incident to $v$
arbitrarily from $1$ to $d$. A uniform random wedge is represented as a uniform random pair of indices $(i,j)$ ($i \neq j, i \leq d, j \leq d$). We can repeat this random index selection $q$ times to implicitly sample
$q$ random wedges, each of which is just represented as a pair of indices. Observe that the total
number of indices in the union of these pairs is at most $d'$, so all we need are the the first $d'$ uniform randomly ordered edges
obtained as the output of the Map phase. We map these sampled wedge indices randomly to the index set $\{1,2,\ldots,d'\}$
through a permutation. Now, each wedge is indexed as a pair $(i,j)$ ($i \neq j, i \leq d', j \leq d'$).
From the list of edges/neighbors $\{x_1, x_2, \ldots,x_{d'}\}$, we can generate these random edges.
This is what is done in \textsc{Sampling} and \textsc{Reduce} in \Alg{phase2c}.

The final output of this phase
is a \textbf{sample wedge list file}, where each entry is of the form
$(h, v_0, v_1, v_2, p, d_0)$.  The number $h$ is a hash of the desired closure
edge $(v_1, v_2)$ (a key which allows the undirected edges from the edge list to be correctly matched with the closure requests in Phase 3b.g),  
the wedge is defined by $(v_1,v_0,v_2)$,
$p$ is the total number of wedges in the bin containing $v_0$,
and $d_0$ is the degree of vertex $v_0$.

As mentioned above, Phase 2b is optional.  If skipped, the MapReduce
shuffle brings adjacent edges of a wedge center together in the reduce phase.
We defer calculation of the number of sample wedges to the reduce phase,
but otherwise proceed as defined above.
Note that in many cases the reducer collects zero samples and does no work.

\subsection{Phase 3: \PhaseThree}

\subsubsection{Phase 3a: \PhaseThreeA}

Phase 3a (optional) assembles a list of all the unique edge hashes from
the \textbf{sample wedges file} and stores it as a Java \texttt{Set}
object.  We denote this \textbf{wedge hashes object} by
$\xi = \set{h_1,h_2,\dots}$. This is similar to the
procedure in Phase 2b, which assembles the list of wedge centers. 
We set $\xi=\emptyset$ if Phase 3a is skipped.

\subsubsection{Phase 3b: \PhaseThreeB}

Phase 3b is the last major step and checks the wedge closures, as shown in
\Alg{phase3b}.  The inputs are the \textbf{sample wedges file} created by
Phase~2c and the original \textbf{edge list file}. 
We also pass the optional \textbf{wedge hashes object} ($\xi$) as a configuration
parameter. If $\xi$ is nonempty, it is used to filter the edges passed
forward to the reduce function.
(Note that we could skip Phase
3a and forward every edge forward to the reducers, but this would
result in much greater data shuffling in Phase 3b.) 
Note that more than one edge may hash to the same value; hence, we
loop through all edges that arrive at  the reducer to verify that there is a match
before declaring a wedge as closed.
Likewise, more than one wedge may be closed by a single edge.
The output of this phase is the \textbf{results file (ver.~0)}; each
entry is of the form ($\sigma, v_0,v_1,v_2, p, d_0$) where $\sigma$
indicates if the wedge is open or closed and everything else is the
same as for the \textbf{sample wedges file}.

\begin{algorithm}[t]
  \caption{\PhaseThreeB\@ (Phase 3b)}
  \label{alg:phase3b}
  \begin{algorithmic}
    \State \textbf{parameter:} $\xi  = \set{h_1,h_2,\dots}$
    \Comment{Represents \textbf{wedge hashes object}}
    \State %
    \Procedure{Map}{$h,v_0,v_1,v_2,p,d_0$} 
    \Comment{Input is \textbf{sample wedges file}}
    \State \textsc{Emit}($h$, $(v_0,v_1,v_2,p,d_0)$)
    \EndProcedure
    \State %
    \Procedure{Map}{$w_1,w_2$} 
    \Comment{Input is \textbf{edge list file}}
    \State $h \gets \textsc{Hash}(w_1,w_2)$
    \Comment{Hash of edge}
    \If{($\xi = \emptyset$) or ($h \in \xi$)} 
    \State \textsc{Emit}($h$, $(w_1,w_2)$)
    \EndIf
    \EndProcedure
    \State %
    \Procedure{Reduce}{$h, \set{x_1,x_2,\dots}$}
    \State Sort the values $\set{x_1,x_2,\dots}$ into 
    $\mathcal{E}$ (edges) and $\mathcal{W}$ (wedges)
    \ForAll{$w \in \mathcal{W}$}
    \State $(v_0,v_1,v_2,p,d_0) \gets w$ \Comment{Unpack wedge data}
    \State $\sigma \gets \text{``open"}$ \Comment{By default, wedges are open}
    \ForAll{$e \in \mathcal{E}$}
    \State $(w_1, w_2) \gets e$ \Comment{Unpack edge data}
    \If{($w_1 = v_1$ and $w_2 = v_2$) or ($w_2 = v_1$ and $w_1 = v_2$)}
    \State $\sigma \gets \text{``closed"}$ 
    \EndIf
    \EndFor
    \State \textsc{Emit}($\sigma, v_0,v_1,v_2, p, d_0$) 
    \Comment{Output is \textbf{results file (ver.~0)}}
    \EndFor
    \EndProcedure
  \end{algorithmic}
\end{algorithm}

\subsection{Phase 4: \PhaseFour}

\subsubsection{Phases 4a \& 4b: Find degrees of wedge endpoints}

Phases 4a \& 4b augment each sample wedge with the degrees of $v_1$ and
$v_2$. This information is needed for estimating the number of
triangles per bin. If only the clustering coefficients are required,
these two steps can be omitted.
\Alg{phase4a} shows Phase 4a; the procedure for Phase 4b is analogous
and so is omitted. The final output of Phase 4b is the \textbf{results
file (ver.~2)}; each line is of the form ($\sigma, v_1,v_o,v_2, p, d_0, d_1,
d_2$) where $d_1$ and
$d_2$ are the degrees of vertices $v_1$ and $v_2$,
respectively, while the remainder is the same as for the
\textbf{results file (ver.~0)}.

\begin{algorithm}[t]
  \caption{\PhaseFourA\@ (Phase 4a)}
  \label{alg:phase4a}
  \begin{algorithmic}
    \State %
    \Procedure{Map}{$\sigma, v_0,v_1,v_2, p,d_0$} 
    \Comment{Input is \textbf{results file (ver.~0)}}
    \State \textsc{Emit}($v_1\texttt{:}1$, $(\sigma,v_0,v_1,v_2,p,d_0)$)
    \EndProcedure
    \State %
    \Procedure{Map}{$v, d$} 
    \Comment{Input is \textbf{vertex degree file}}
    \State \textsc{Emit}($v\texttt{:}0$, $d$)
    \EndProcedure
    \State %
    \Procedure{Reduce}{$v, \set{x_1,x_2,\dots}$}
    \Comment{Add the degree of $v_1$ for  each wedge.}
    \State $d_1 \gets x_1$ 
    \Comment{First value is the degree of the vertex}
    \ForAll{$x \in \set{x_2,x_3\dots}$}
    \Comment{Remaining values, if any, comprise sample wedges}
    \State \textsc{Emit}($x, d_1$) 
    \Comment{Output is \textbf{results file (ver.~1)}}
    \EndFor
    \EndProcedure
  \end{algorithmic}
\end{algorithm}

\subsubsection{Phase 4c: \PhaseFourC}
\label{sec:Phase4c}

\begin{algorithm}[t]
  \caption{\PhaseFourC\@ (Phase 4c)}
  \label{alg:phase4c}
  \begin{algorithmic}
    \State \textbf{parameters:} $\tau, \omega$ \Comment{Binning parameters}
    \State %
    \Procedure{Map}{$\sigma, v_0, v_1, v_2, p, d_0, d_1, d_2$} 
    \Comment{Input is \textbf{results file (ver.~2)}}
    \State $b_0 \gets \textsc{BinId}(d_0, \tau, \omega)$
    \State $b_1 \gets \textsc{BinId}(d_1, \tau, \omega)$
    \State $b_2 \gets \textsc{BinId}(d_2, \tau, \omega)$
    \State \textsc{Emit}($b$, ($\sigma, p, b_1, b_2$))
    \EndProcedure
    \State %
    \Procedure{Reduce}{$b, \set{x_1,x_2,\dots}$}
    \State $p_0, p_1, p_2, p_3 \gets 0$
    \ForAll{$x \in \set{x_1,x_2\dots}$}
    \State $(\sigma,p,b_1,b_2) \gets x$ \Comment{Unpack value}
    \If{$\sigma$ = ``open''}
    \State $q_0 \gets q_0 + 1$
    \Else
    \State $i \gets 1 + (b=b1) + (b=b2)$
    \State $q_i \gets q_i + 1$
    \EndIf
    \EndFor
    \State $c \gets (q_1 + q_2 + q_3) / (q_0 + q_1 + q_2 + q_3)$
    \State $t \gets p \cdot (q_1 + q_2/2 + q_3/3) / (q_0 + q_1 + q_2 + q_3)$
    \State \textsc{Emit}($b, q_0, q_1, q_2, q_3, c, p, t$)
    \Comment{Output is \textbf{summary file}}
    \EndProcedure
  \end{algorithmic}
\end{algorithm}

Phase 4c tallies the final results per bin, using the logic in \Alg{phase4c}.
Its output is the \textbf{summary file}. Each line is of the form 
$b, q_0, q_1, q_2, q_3, c, p, t$ where $b$ is the bin ID, $q_0$ is the
number of open wedges, $q_i$ is the number of closed wedges with $i$
vertices in the bin, $c$ is the clustering coefficient estimate, $p$
is the number of wedges in the bin, and $t$ is the estimated number of
triangles with one or more vertices in the bin.

We can estimate the global clustering coefficient from the
degree-binned clustering coefficients as follows. Let
$\hat c_b$ and $p_b$ be the clustering coefficient estimate and total number of
wedges for bin $b$. Let $p = \sum_b p_b$ be the total number of
wedges. Then the estimates for the global clustering coefficient and
total number of triangles are given by
\begin{equation}
  \label{eq:global_est}
  \hat c \approx \sum_b \frac{p_b}{p} \cdot \hat c_b,
  \qtext{and}
  \hat t = \hat c \cdot \frac{p}{3}.
\end{equation}
Let $b_{\max}$ denote the total number of bins. We assume that every
bin has $k$ samples producing an error bound of $\eps$ with confidence
$(1-\delta)$. Then we argue that $|c-\hat c| \leq \eps$ with
confidence $(1-b_{\max}\cdot \delta)$.

\subsection{Performance Analysis}

\Tab{volumes} presents the input, communication, and output volume for each phase.
Let $n$ denote the number of nodes, $m$ denote the number of edges (each undirected edge is counted just once), $q$ denote the total number of sampled wedges, and let $b_{\max}$ denote the number of bins. Note that the communications in Phases 2c and 3b can be substantially higher ($m+q$) if the Phase 2b or 3a is skipped. Our experimental results show that Phase 1a is by far the most expensive, which is consistent with our performance analysis because Phase 1a communicates the most data, $2m$ key-values pairs. All other communications are size $n$ or $q$.

\newcolumntype{Y}{>{\footnotesize\raggedright\arraybackslash}X}
\begin{table}[h]\footnotesize
\caption{Input, shuffle, and output volumes for MapReduce phases.}
\label{tab:volumes}
\begin{tabularx}{\linewidth}{|l|Y|Y|Y|}
\hline
Phase & Mapper Input & Key-Value Pairs & Reducer Output \\ \hline
1a 
& $\bm{m}$: Process $m$ edges from \textbf{edge list file}.
& $\bm{2m}$: Communicate 2 key-value pairs per edge.
& $\bm{n}$: Output $n$ vertex-degree pairs to \textbf{vertex degrees file}.
\\ \hline
1b 
& $\bm{n}$: Process $n$ vertex-degree pairs from \textbf{vertex degrees file}.
& $\bm{n}$: Communicate 1 key-value pair per vertex.
& $\bm{b_{\max}}$: Output data for each bin to \textbf{wedges per bin file}.
\\ \hline
2a
& $\bm{n}$: Process $n$ vertex-degree pairs from \textbf{vertex degrees file}.
& \multicolumn{2}{>{\raggedright}p{0.55\textwidth}|}{$\bm{q}$: Communicate/output approximately $q$ sample wedge centers in \textbf{wedge centers file}.}
\\ \hline
2c
& $\bm{m+q}$: Process $m$ edges from \textbf{edge list file} and approximately $q$ sample wedge centers from \textbf{wedge centers file}.
& $\bm{O(qk)+q}$: Communicate $O(k)+1$ key-value pair per sample wedge center. 
& $\bm{q}$: Output the sample wedges to the \textbf{sample wedges file}.
\\ \hline
3b
& $\bm{q+m}$: Process wedges in \textbf{sample wedges file} and edges from \textbf{edge list file}.
& $\bm{2q}$: Communicate 1 message per wedge and 1 message per hash-matching edge. 
& $\bm{q}$: Output close/open data for each sample wedge into \textbf{results file (ver. 0)}.
\\ \hline
4a/4b 
& $\bm{n+q}$: Process $q$ sampled wedges from \textbf{results file (ver. 0/1)} and $n$ vertex-degree pairs to \textbf{vertex degrees file}.
& $\bm{n+q}$: Communicate 1 key-value pair for each vertex and each edge.
& $\bm{q}$: Output augmented data for each sample wedge into \textbf{results file (ver. 1/2)}.
\\ \hline
4c
& $\bm{q}$: Process $q$ sampled wedges from \textbf{results file (ver. 2)}.
& $\bm{3q}$: Communicate 3 key-values pairs per wedge.
& $\bm{b_{\max}}$: Output results per bin in \textbf{summary files}.
\\ \hline
\end{tabularx}
\end{table}

\section{Experimental Results}
\label{sec:results}

\subsection{Data Description}
\label{sec:data}

We obtained real-world graphs from the Laboratory for Web Algorithms
(\url{http://law.di.unimi.it/datasets.php}), which were compressed using
LLP and WebGraph \cite{BoVi04,BoRoSaVi11}. We selected ten larger
graphs for which the complete edge lists were available. 
We also consider three artificially-generated graphs according to the
Graph500 benchmark ~\cite{GRAPH500}, which uses
Stochastic Kronecker Graphs (SKG)~\cite{LeChKlFa10} for its graph
generator with [0.57,0.19;0.19,0.05] as the $2 \times 2$ generator matrix.
We have added noise with a parameter of 0.1, as proposed
in~\cite{SePiKo11,SePiKo13} to avoid oscillatory degree distributions. 
These graphs are generated in MapReduce.
All networks
are treated as undirected for our study; in other words, if $x
\rightarrow y$, $y \rightarrow x$, or both, we say that edge $(x,y)$
exists. Briefly, the networks are described as follows.
\begin{compactitem}
\item amazon-2008 \cite{BoVi04,BoRoSaVi11}: A graph describing similarity among books as
  reported by the Amazon store.
\item ljournal-2008 \cite{ChKuLaMi09,BoVi04,BoRoSaVi11}: Nodes represent users on
  LiveJournal. Node $x$ connects to node $y$ if $x$ registered $y$ as
  a friend.
\item hollywood-2009, hollywood-2011 \cite{BoVi04,BoRoSaVi11}: This is a graph of actors. Two
  actors are joined by an edge whenever they appear in a movie together.
\item twitter-2010 \cite{KwLePaMo10,BoVi04,BoRoSaVi11}: Nodes are Twitter users, and
  node $x$ links to node $y$ if $y$ follows $x$.
\item it-2004 \cite{BoVi04,BoRoSaVi11}: Links between web pages on the .it domain, provided by IIT.
\item uk-2005-05, uk-2006-06, uk-union-2006-06-2007-05 (shorted to
  uk-union)
  \cite{BoSaVi08,BoVi04,BoRoSaVi11}: Links between web pages on the .uk domain. (We
  ignore the time labeling on the links in the last graph.)
\item sk-2005 \cite{BoCoSaVi04,BoVi04,BoRoSaVi11}: Links between web pages on the .sk domain.
\item graph500-23, graph500-26, graph500-29 \cite{GRAPH500,LeChKlFa10,SePiKo11,RMATCODE}: Artificially generated
  graphs according to the Graph500 benchmark using the SKG method. The
  number (23, 26, 29) indicates the number of levels of recursion and
  the size of the graph.
\end{compactitem}
The properties of the networks are summarized in \Tab{networks};
specifically, we report the number of vertices, the
number of \emph{undirected} edges, the total number of wedges, and 
estimates for the total number of triangles and the global clustering
coefficients, calculated according to \Eqn{global_est}.
(To the best of our knowledge, we are the only group
that has calculated the last three columns, so these numbers have not
been independently validated.)

\begin{table}[htpb]\footnotesize
  \centering
  \caption{Network characteristics. All edges are treated as
      undirected. The triangle counts and global clustering coefficients (GCC)
      are our estimates.}
    \label{tab:networks}
  \begin{tabular}{|c|l|r|r|r|r|c|}      
  \hline 
  \MC{1}{|c|}{\bf ID} & \MC{1}{c|}{\bf Graph Name} & \MC{1}{c|}{\bf Nodes} & 
  \MC{1}{c|}{\bf Edges} & \MC{1}{c|}{\bf Wedges} & 
  \MC{1}{c|}{\bf Triangles} & {\bf GCC} \\ 
  \MC{1}{|c|}{} & \MC{1}{c|}{} & \MC{1}{c|}{\bf (millions)} & 
  \MC{1}{c|}{\bf (millions)} & \MC{1}{c|}{\bf (millions)} & 
  \MC{1}{c|}{\bf (millions)} & \\
  \hline
   1 &     amazon-2008 &     1 &     4 &            51 &         4 &  0.2603 \\ \hline
   2 &   ljournal-2008 &     5 &    50 &         9,960 &       408 &  0.1228 \\ \hline
   3 &  hollywood-2009 &     1 &    56 &        47,645 &     4,907 &  0.3090 \\ \hline
   4 &  hollywood-2011 &     2 &   114 &       120,899 &     7,097 &  0.1761 \\ \hline
   5 &     graph500-23 &     5 &   128 &       567,218 &     3,673 &  0.0194 \\ \hline
   6 &         it-2004 &    41 & 1,027 &    16,163,308 &    48,788 &  0.0091 \\ \hline
   7 &     graph500-26 &    34 & 1,054 &     9,087,164 &    28,186 &  0.0093 \\ \hline
   8 &    twitter-2010 &    42 & 1,203 &   123,435,590 &    34,495 &  0.0008 \\ \hline
   9 &      uk-2006-06 &    80 & 2,251 &    16,802,569 &   186,453 &  0.0333 \\ \hline
  10 &         sk-2005 &    43 & 2,543 & 5,196,166,169 &   256,556 &  0.0001 \\ \hline
  11 &      uk-2006-05 &    77 & 2,636 &   167,591,218 &   363,111 &  0.0065 \\ \hline
  12 &        uk-union &   132 & 4,663 &   203,567,548 &   447,133 &  0.0066 \\ \hline
  13 &     graph500-29 &   240 & 8,502 &   158,727,767 &   272,931 &  0.0052 \\ \hline
  \end{tabular}
\end{table}

\subsection{Experimental Setup}
We have conducted our experiments on a  Hadoop cluster with 32 compute nodes. Each compute node has an
Intel i7 930 CPU at 2.8GHz (4 physical cores, HyperThreading enabled),
12 GB of memory, and 4 2TB SATA disks. All experiments were run using
Hadoop Version 0.20.203.0.
Unless otherwise stated, all experiments use the following parameters:
\begin{compactitem}
  \item Number of samples per bin: $k=10,000$
  \item Bin parameters: $\tau=2$ and $\omega=2$ (i.e., bins are $\set{2},
    \set{3,4}, \set{5,6,7,8}, \dots$.)
  \item Number of reducers: 64
\end{compactitem}

\subsection{Experimental Results and Timings}
\label{sec:timings}

We ran our MapReduce code on the 13 networks described in
\Sec{data}. 
The runtimes are reported in \Fig{timings}, broken down by the phases of the algorithm. 
The largest real-world graph, uk-union (\#12) with over 100 million
vertices and over 4.6 billion edges, took less than 30 minutes to
analyze. For all the networks, the most expensive step is Phase 1a,
calculating the degree per node, because every edge in the
\textbf{edge list file} generates two
key-value pairs.
For uk-union, this step takes over 12 minutes. 
Phases 1b and 2a are essentially constant time (approximately 30
seconds) because they process only the \textbf{vertex degree file}.
Phase 2c (\PhaseTwoC) is the next most expensive step. Here we collect
the edges adjacent to wedge centers, reading the entire
\textbf{edge list file} again in the map phase; however, the
\textbf{wedge centers file} created in Phase 2b minimizes the number of edges that are passed to reducers.
Nevertheless, for the wedge centers with
high degree, many edges are transmitted (though substantially less
than the entire edge list). 
Phase 3b (\PhaseThreeB) is the next most expensive,
again reading the entire \textbf{edge list file}. The last few
postprocessing steps are close to constant time (approximately 30-60
seconds each). 

\begin{figure}[tpb]
  \centering
  \includegraphics[width=4in]{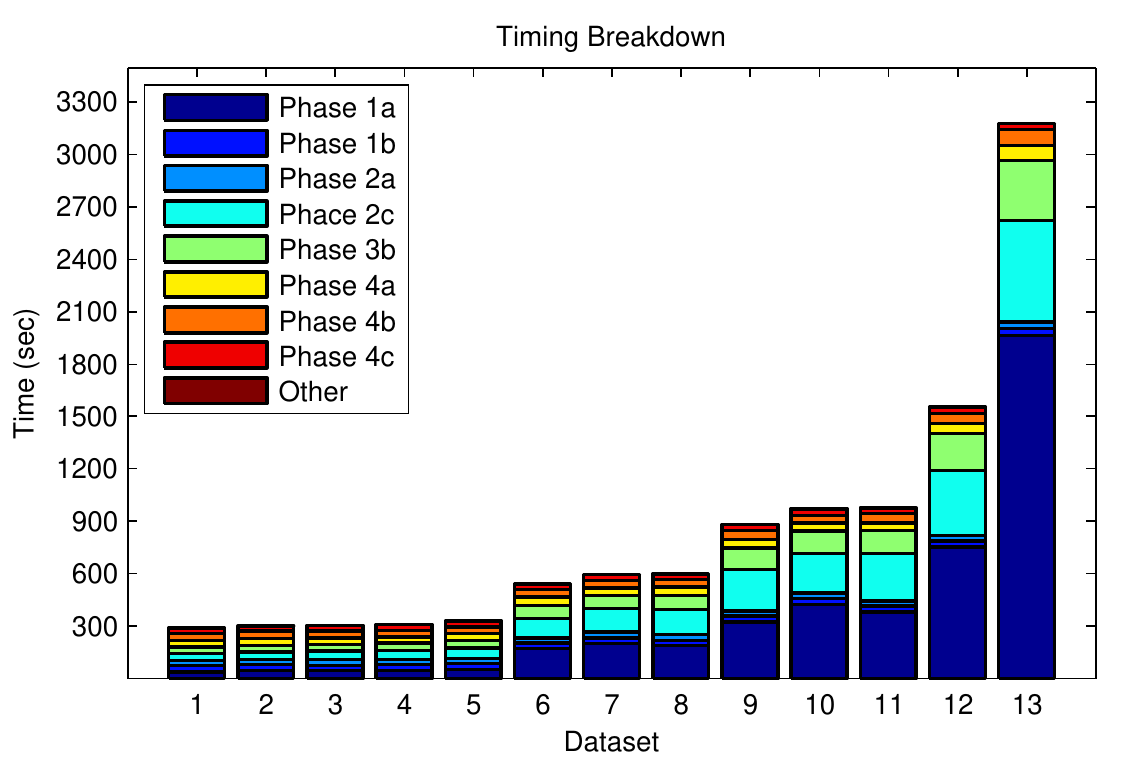}
  \caption{Runtimes broken down by phases}
  \label{fig:timings}
\end{figure}

\begin{figure}[t]
  \centering
  \includegraphics[width=2.5in]{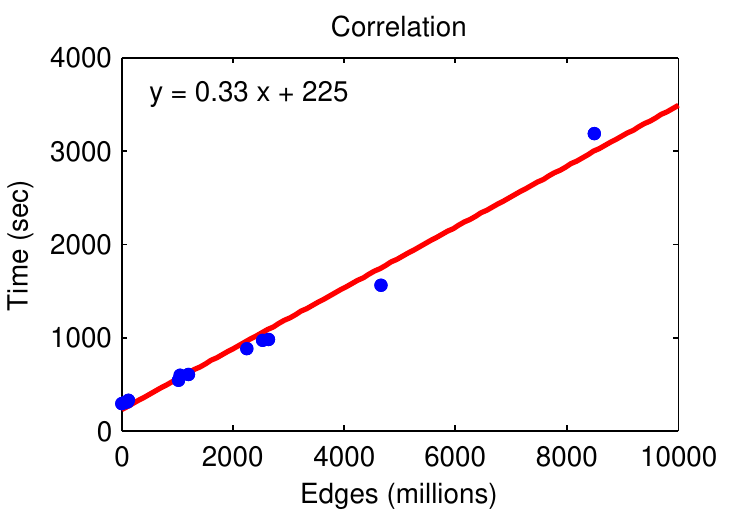}
  \caption{Correlation of the runtime to the number of edges. There is
    a setup cost of \Offset\@ seconds due to the MapReduce
    overhead and an incremental cost of \Slope\@ seconds per million
    edges.}
  \label{fig:time-vs-edges}
\end{figure}

In an enumeration approach, we expect the runtimes to be proportional to
the number of wedges. In that case, the sk-2005 (\#10) would be the
most expensive by an order of magnitude. For our method, however, the
runtime is proportional to the number of edges.
\Fig{time-vs-edges} shows that there is a near-linear relationship
between the number of edges and the runtime. The x-axis is the number
of edges (in millions) and the y-axis is the runtime. All 13 examples
are included. We see that there is a constant cost of \Offset\@ seconds,
which accounts for the MapReduce overhead and then an incremental cost
of \Slope\@ seconds per million edges.

\paragraph{Comparisons to Other Methods}

As one point of comparison, we ran Plantenga's implementation that fully
enumerates triangles \cite{Pl12} on several smaller graphs. The results
are summarized in \Tab{sgi}. 
\begin{table}[bht]\footnotesize
  \centering
   \caption{Comparison of sampling and exact enumeration.}
    \label{tab:sgi}
  \begin{tabular}{|c|l|r|r|r|r|r|}
    \hline
  \MC{1}{|c|}{\bf Id} & \MC{1}{c|}{\bf Graph Name} & \MC{1}{c|}{\bf Wedges} & 
  \MC{1}{c|}{\bf Run Time} & \MC{1}{c|}{\bf Exact} &  
  \MC{1}{c|}{\bf Sampling}  &  \MC{1}{c|}{\bf Sampling} \\ 
  \MC{1}{|c|}{}& \MC{1}{c|}{} &  \MC{1}{c|}{\bf Checked} &  \MC{1}{c|}{\bf (sec.)} &
  \MC{1}{c|}{\bf GCC} &  \MC{1}{c|}{\bf Error} &    \MC{1}{c|}{\bf
    Speed-up} \\ \hline 
   1 &     amazon-2008 &    25\% &      158 &    0.2603 & 0.0001 &     1 \\ \hline
   2 &   ljournal-2008 &    19\% &    3,385 &    0.1228 & 0.0010 &    11 \\ \hline
   3 &  hollywood-2009 &    29\% &   21,665 &    0.3090 & 0.0006 &    72 \\ \hline
   4 &  hollywood-2011 &    27\% &   90,598 &    0.1761 & 0.0006 &   293 \\ \hline
  \end{tabular}
\end{table}
This implementation is efficient because it only checks wedges
where the center degree is smallest, i.e., 
$(u,v,w)$ such that $d_v \leq d_u$ and $d_v \leq d_w$. This means that
only one wedge per triangle is checked and moreover that many open
wedges centered at high-degree nodes are ignored. The table lists the
percentage of wedges that are checked for closure.
The run times range from 3 minutes for the smallest graph, up to 25
hours for hollywood-2011 (2M nodes, 114M edges). Because this code
enumerates every triangle, we can calculate the exact global
clustering coefficient (GCC). The error from our sampling method is
also reported. At a confidence
level of 99.9\%, using $k=2000$ samples yields an error of
$\eps=0.05$.  The true errors are one to two orders of magnitude less
than this worse case probabilistic bound. Finally, we observe the main
advantage of sampling in terms of the observed speed-up, up to 293X
for hollywood-2011. Larger graphs cannot be completed in a reasonable
amount of time on our cluster. 

Suri and
Vassilvitskii \cite{SuVa11} have an efficient enumeration method that was
able to process two of the same data sets on a 1636-node
Hadoop cluster: (a) ljournal-2008 required 5.33 minutes, and 
(b) twitter-2010 took 483 minutes. 
(In subsequent work, Arifuzzaman et al.~\cite{ArKhMa12,ArKhMa12a} have 
reduced the run time \cite{PaCh13} were able to reduce the MapReduce run time to 213 minutes on 47 nodes.)
On our 32-node cluster, our method requires approximately the same running time for the 
smaller data set (since most of the work is overhead) but required only 10 minutes
for twitter-2010.

We also compare to in-memory methods. We used an SGI Altix UV 10 System with 4 Xeon 8-core nodes 
and 64 $\times$ 8GB memory, for a total of 32 cores and 512GB of memory. 
We consider two methods for exact node-level triangle counting on the twitter-2010 graph: 
(1) MTGL~\cite{BaBeMuWh09} using the algorithm described in \cite{Co09} and 
(2) GraphLab/PowerGraph \cite{LoGoKyBi10,GoLoGuBi12}.
On 32 cores, MTGL takes approximately 
3.5 hours and consumes 125GB of memory. 
On 16 cores, GraphLab/PowerGraph takes approximately 17 minutes to run, 
and 430GB of memory at its peak usage, nearly maxing out the 
machine's total memory. 
In both cases, just loading the graph takes 8--15 minutes.
Compare these times to a total of 10 minutes using Hadoop and our inexact 
triangle-counting method. 

The other class of methods for comparison are edge-sampling
methods such as Doulion~\cite{TsKaMiFa09}. The basic idea is to sample
a subset of \emph{edges} and then run a triangle counting method (such
as enumeration) on the reduced graph. Edge sampling has been compared
to wedge sampling in serial \cite{SePiKo13}. Keeping 1 in 25 edges
produces results that are roughly comparable to wedge sampling in time
but with much greater variance in the GCC estimate. Keeping fewer
edges yields savings in time but at the expense of much greater
variance. Hence, we have not compared parallel implementations.

\begin{figure}[t]
  \centering
  \includegraphics[width=3in]{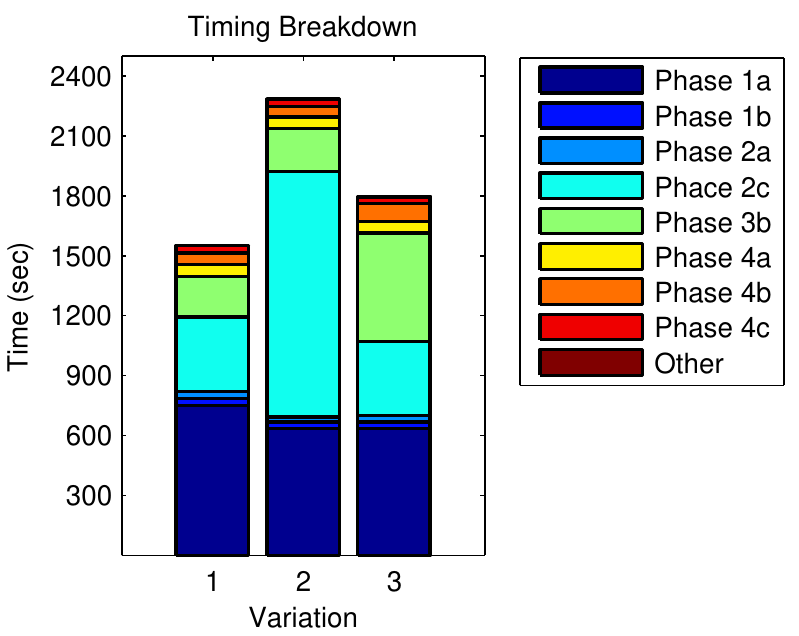}
  \caption{Timings for variations. The variations are (1) Original algorithm, (2) Skip Phase 2b, (3) Skip Phase 3a.}
  \label{fig:timings-variations}
\end{figure}

\paragraph{Impact of Implementation Features}
We have considered many alternatives during the implementation of the
wedge sampling algorithm, and in this subsection we present the impact of
two implementation features.  The three versions of the code we
compare are:
\begin{inparaenum}[(1)]
  \item Original algorithm.
  \item Skip Phase 2b.
  \item Skip Phase 3a.
\end{inparaenum}
We show results for uk-union in \Fig{timings-variations}.  
Skipping Phase
2b means that every edge generates two messages in Phase 2c,
increasing the time in that phase from 372 seconds to 1235 seconds (3X
increase), twice as expensive as Phase 1a.
Skipping Phase 3a means that every edge generates one message in
Phase 3b, increasing the time in that phase from 207 seconds to 543
seconds (2.5X increase).
Hence, taking measures to reduce the data that must be shuffled to the
reducers has major pay-offs in terms of performance.

\begin{figure}[tpb]
  \centering
  \subfloat{\includegraphics[width=.3\textwidth]{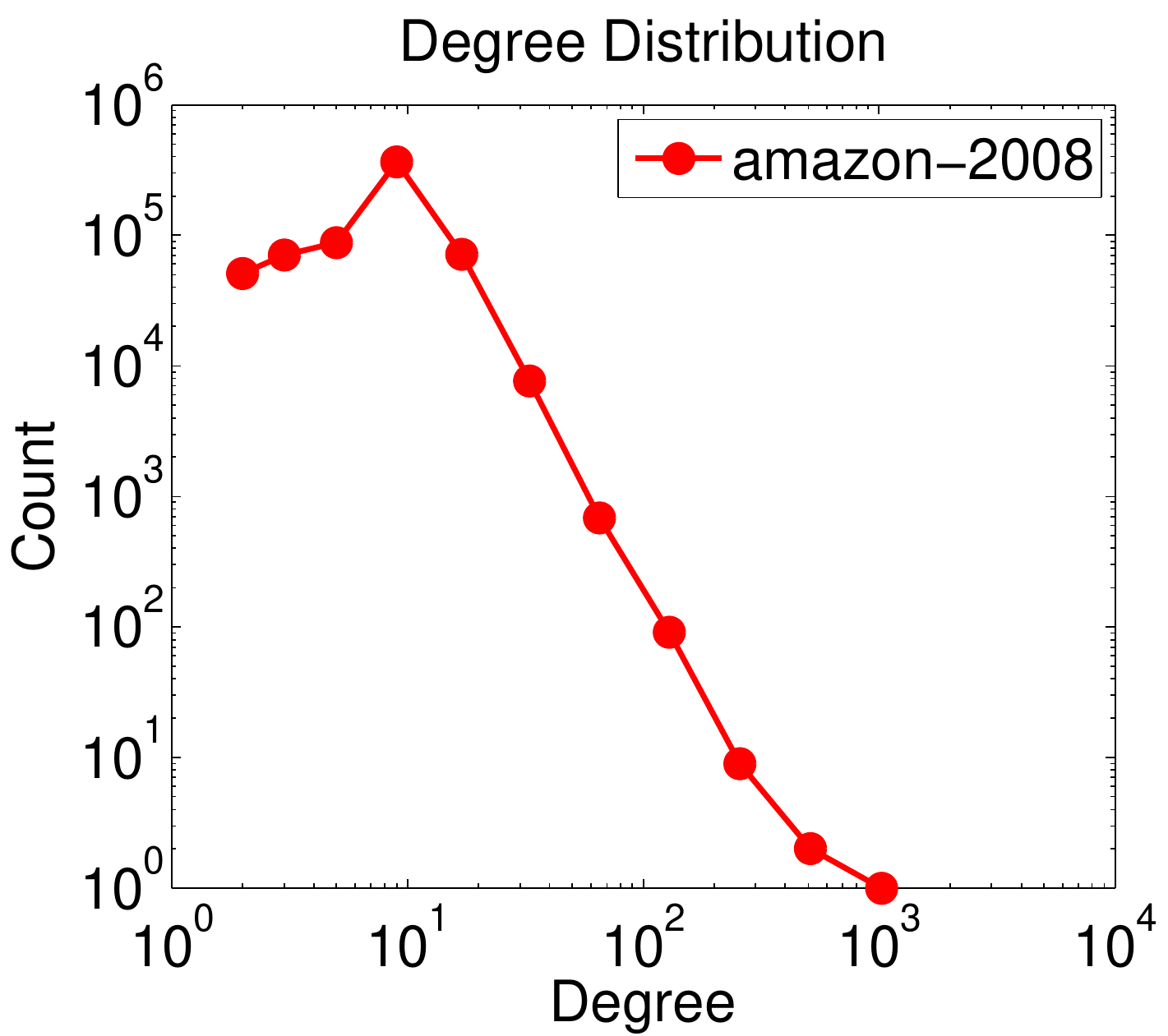}}
  \subfloat{\includegraphics[width=.3\textwidth]{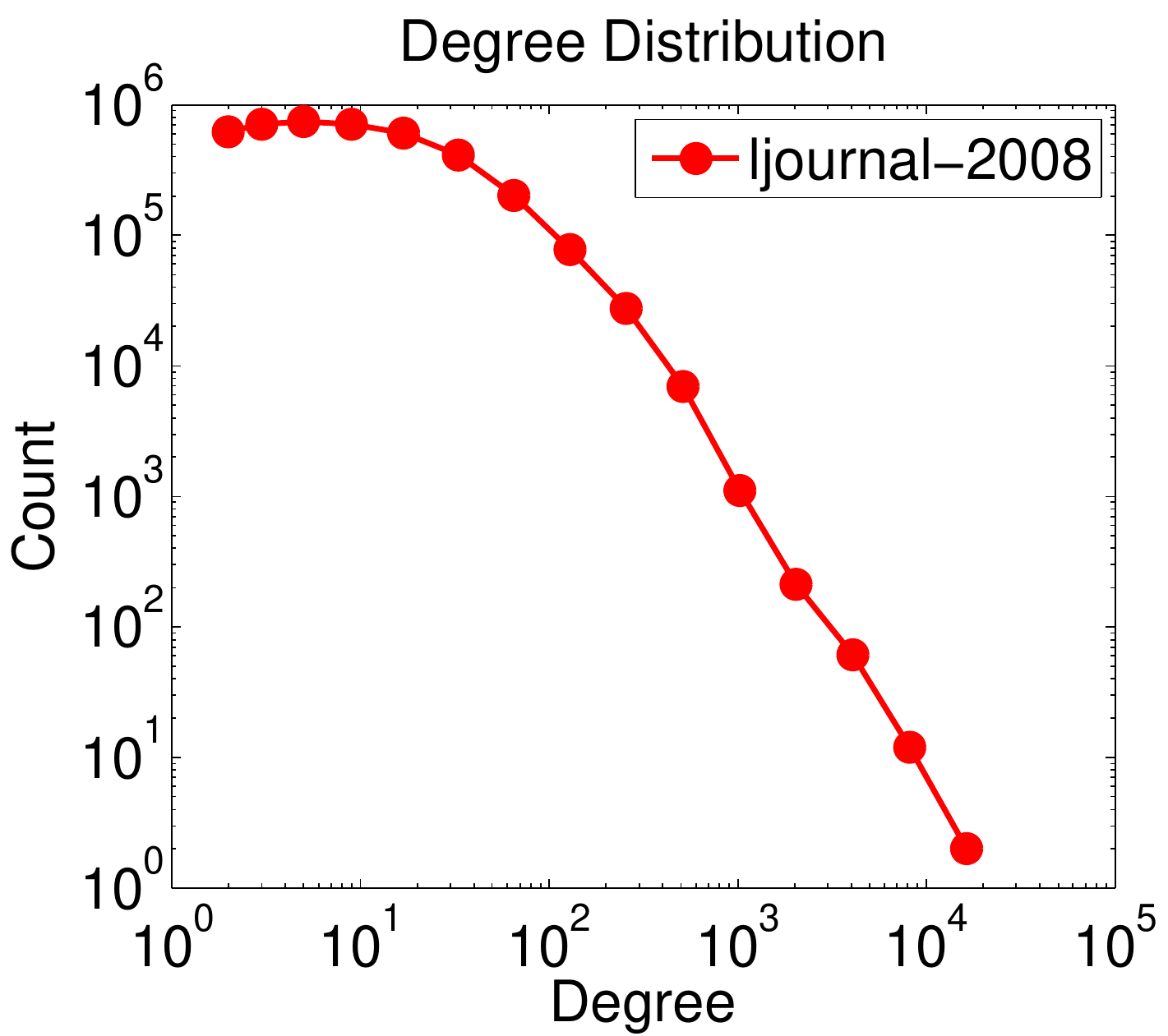}}
  \subfloat{\includegraphics[width=.3\textwidth]{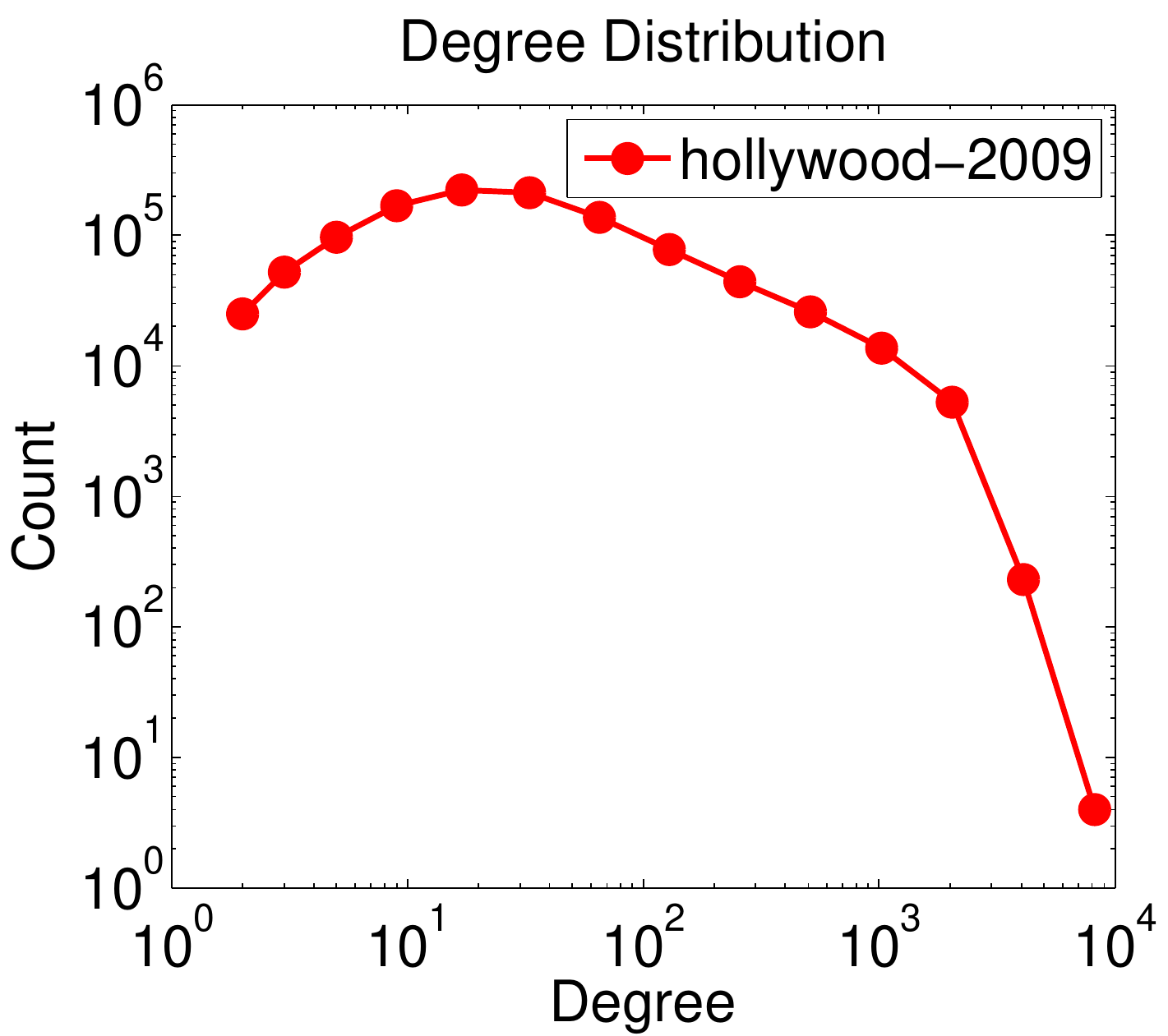}} \\
  \subfloat{\includegraphics[width=.3\textwidth]{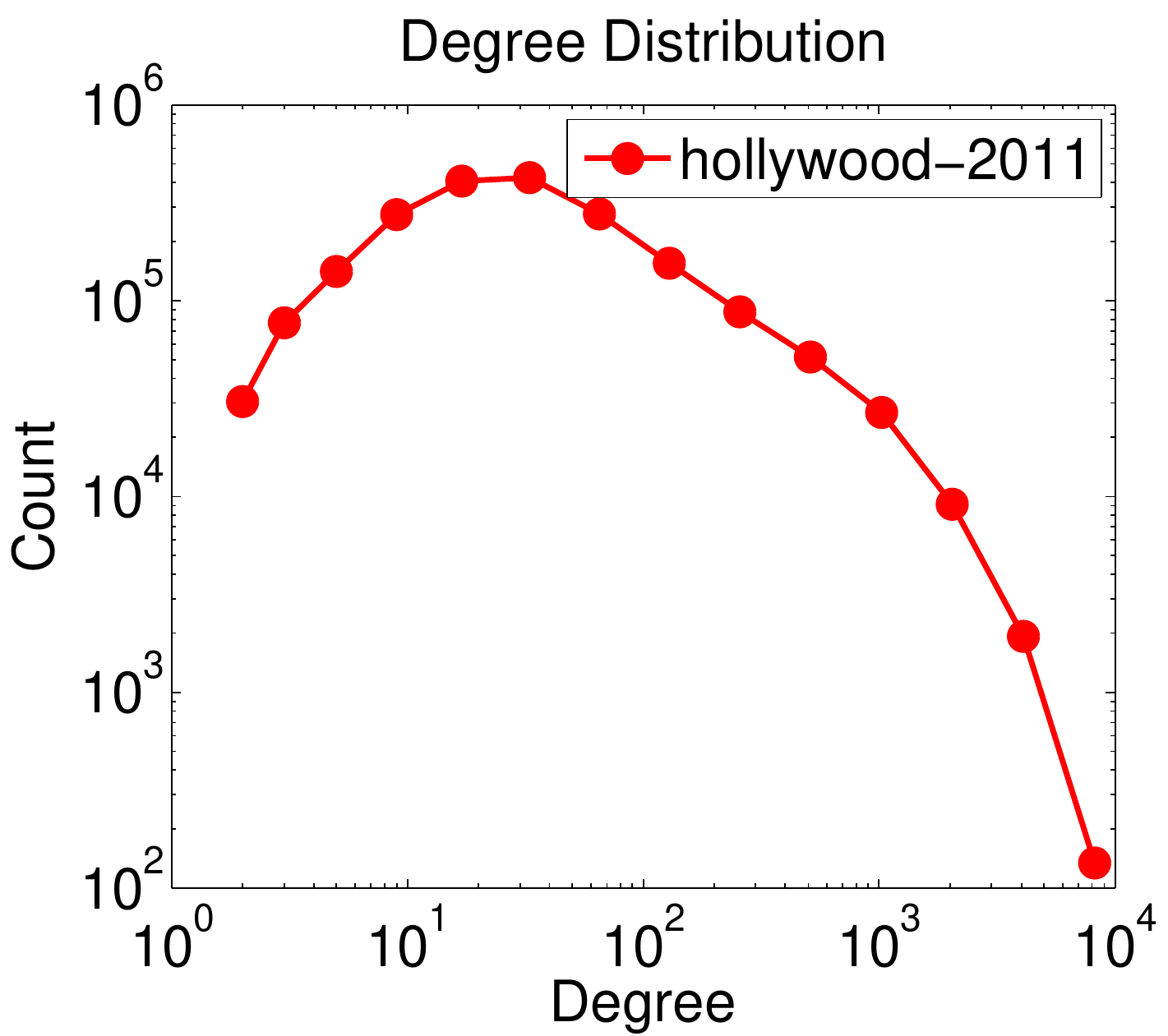}}
  \subfloat{\includegraphics[width=.3\textwidth]{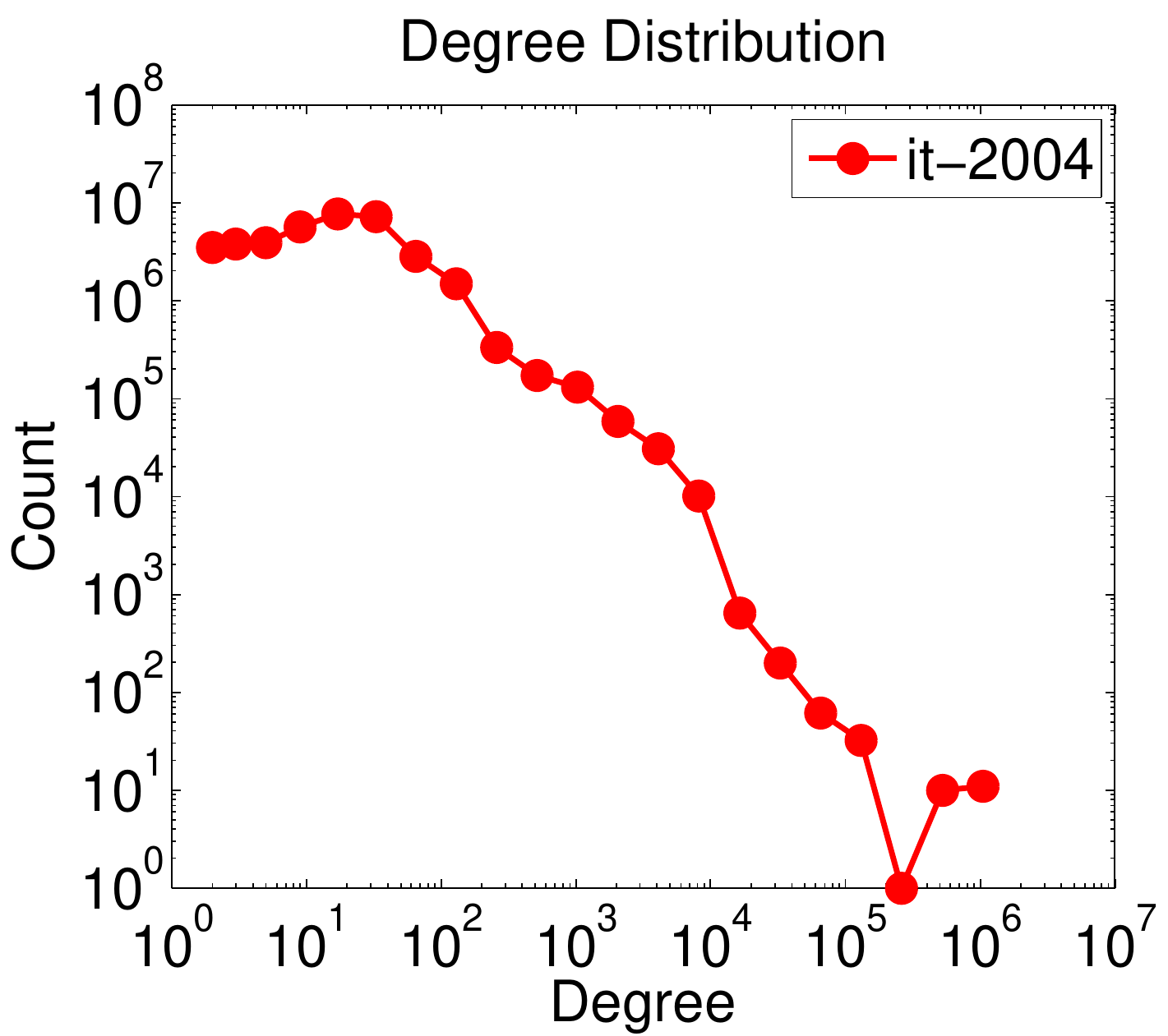}}
  \subfloat{\includegraphics[width=.3\textwidth]{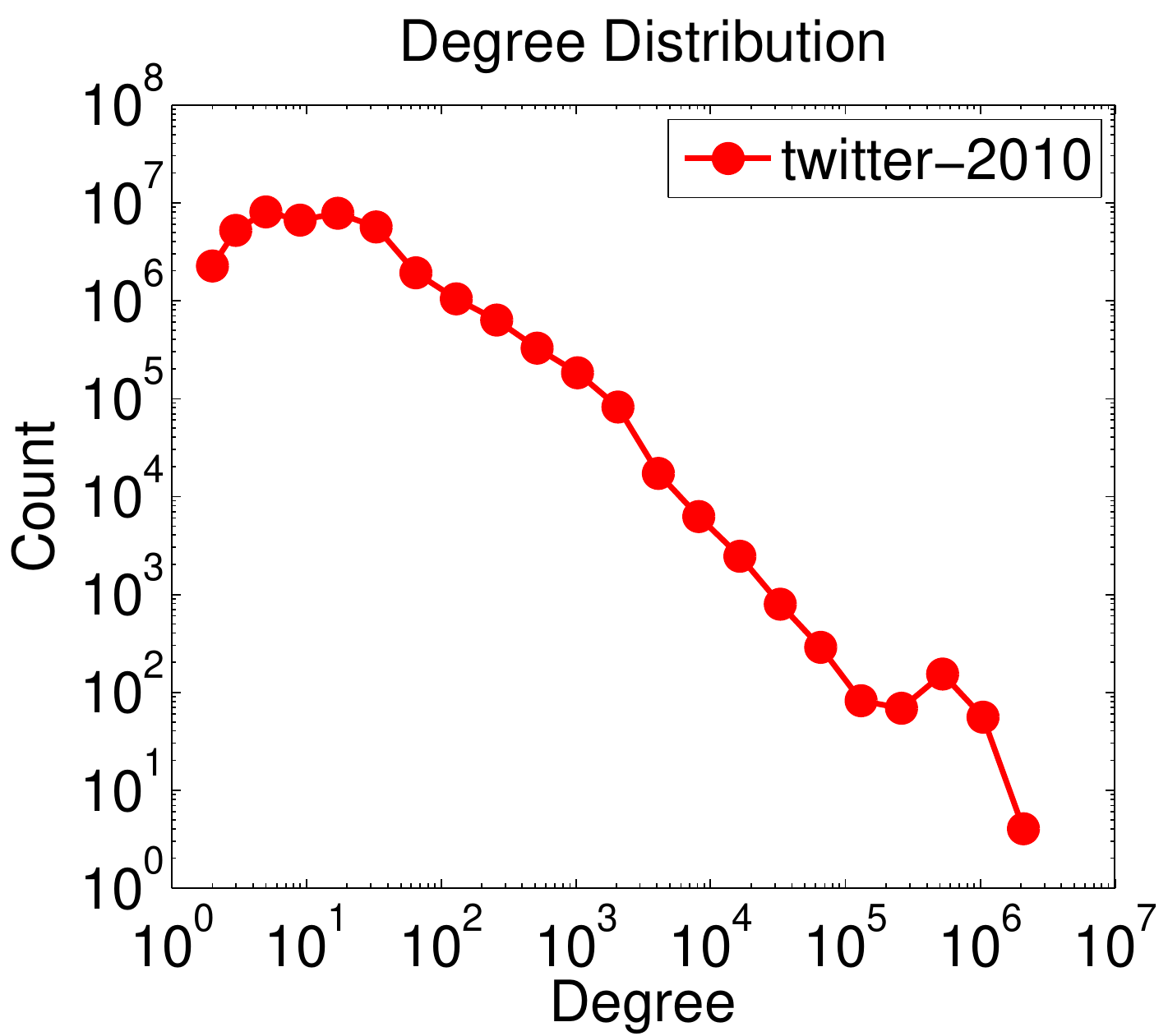}}\\
  \subfloat{\includegraphics[width=.3\textwidth]{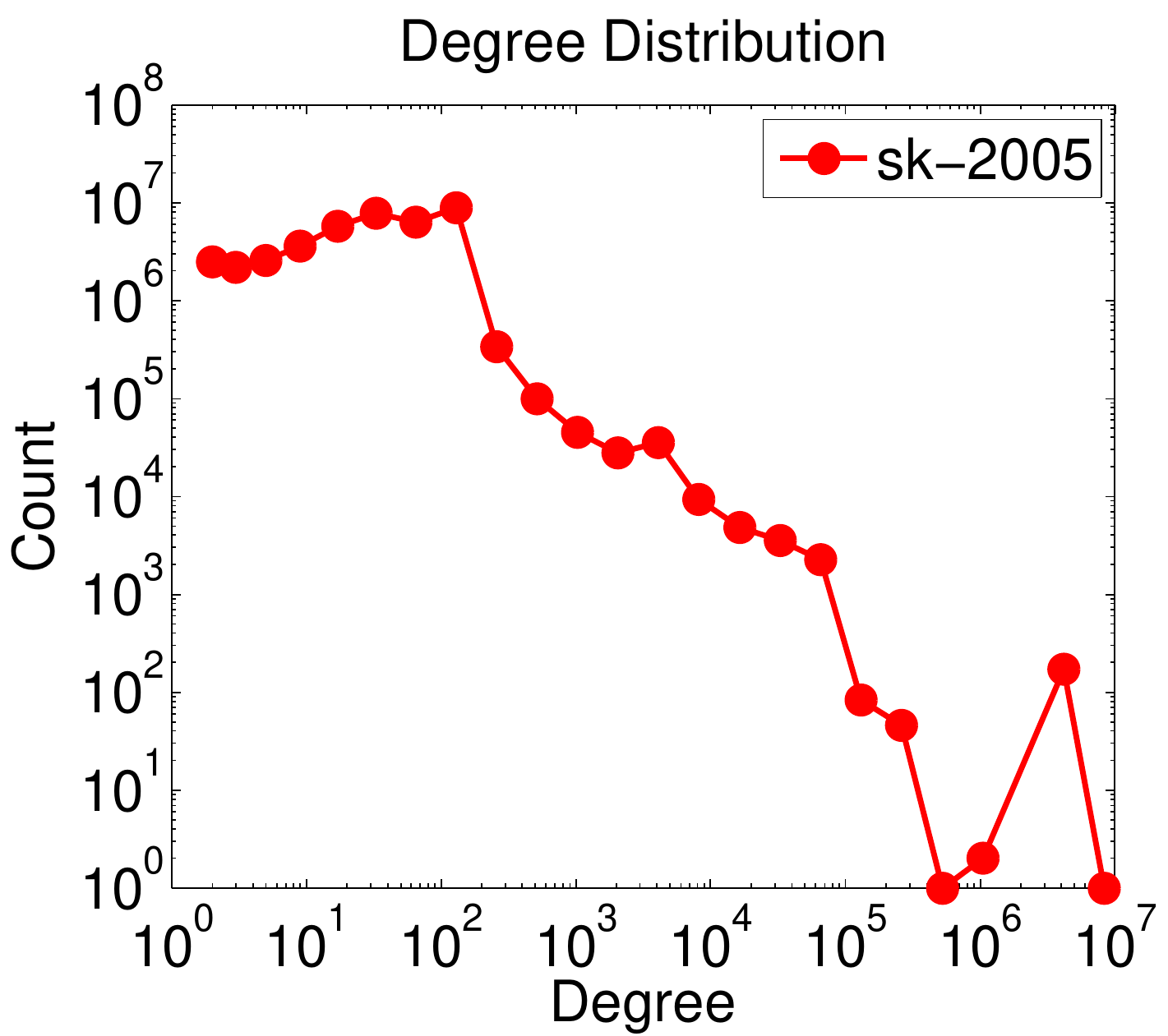}}
  \subfloat{\includegraphics[width=.3\textwidth]{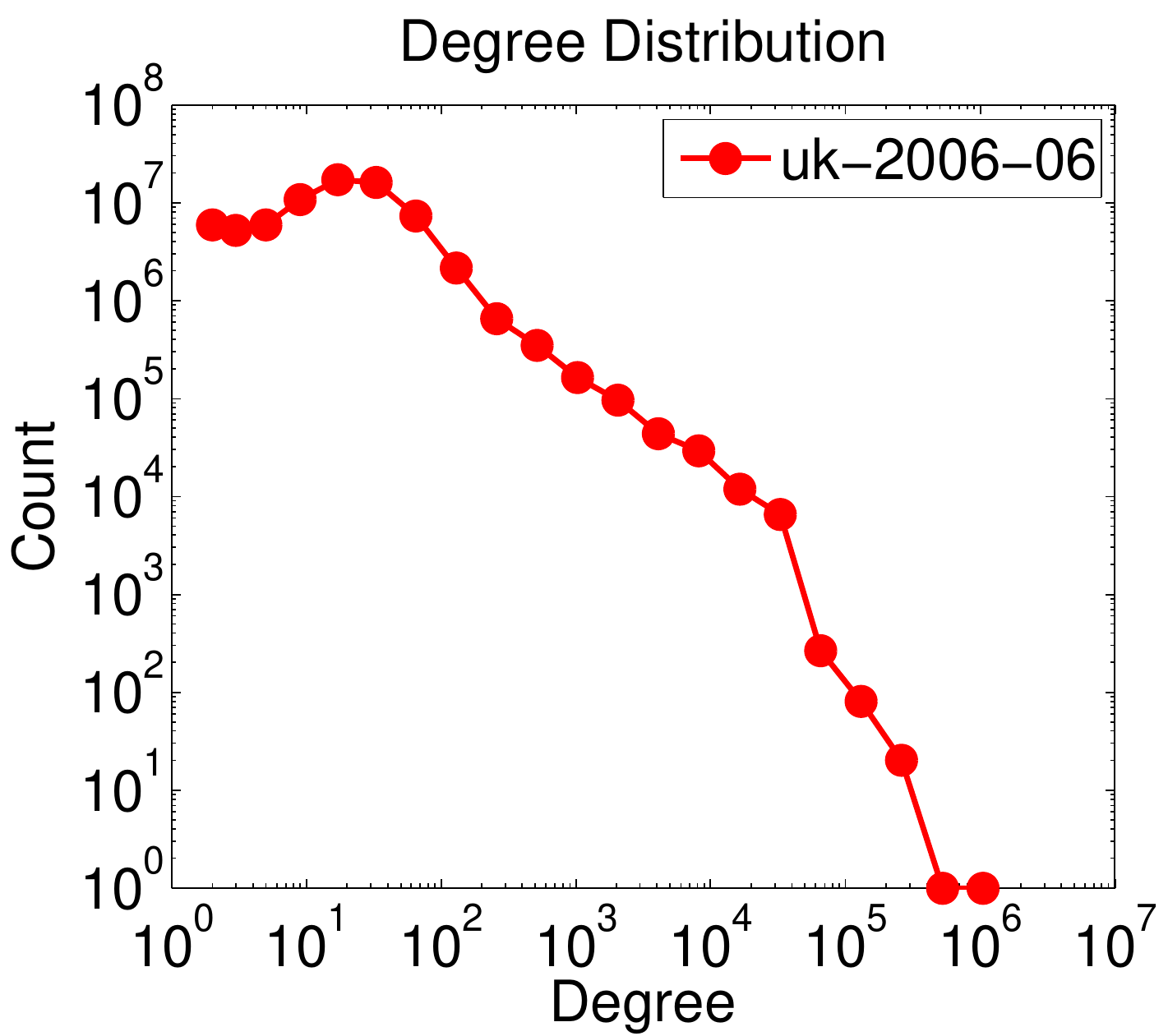}}
  \subfloat{\includegraphics[width=.3\textwidth]{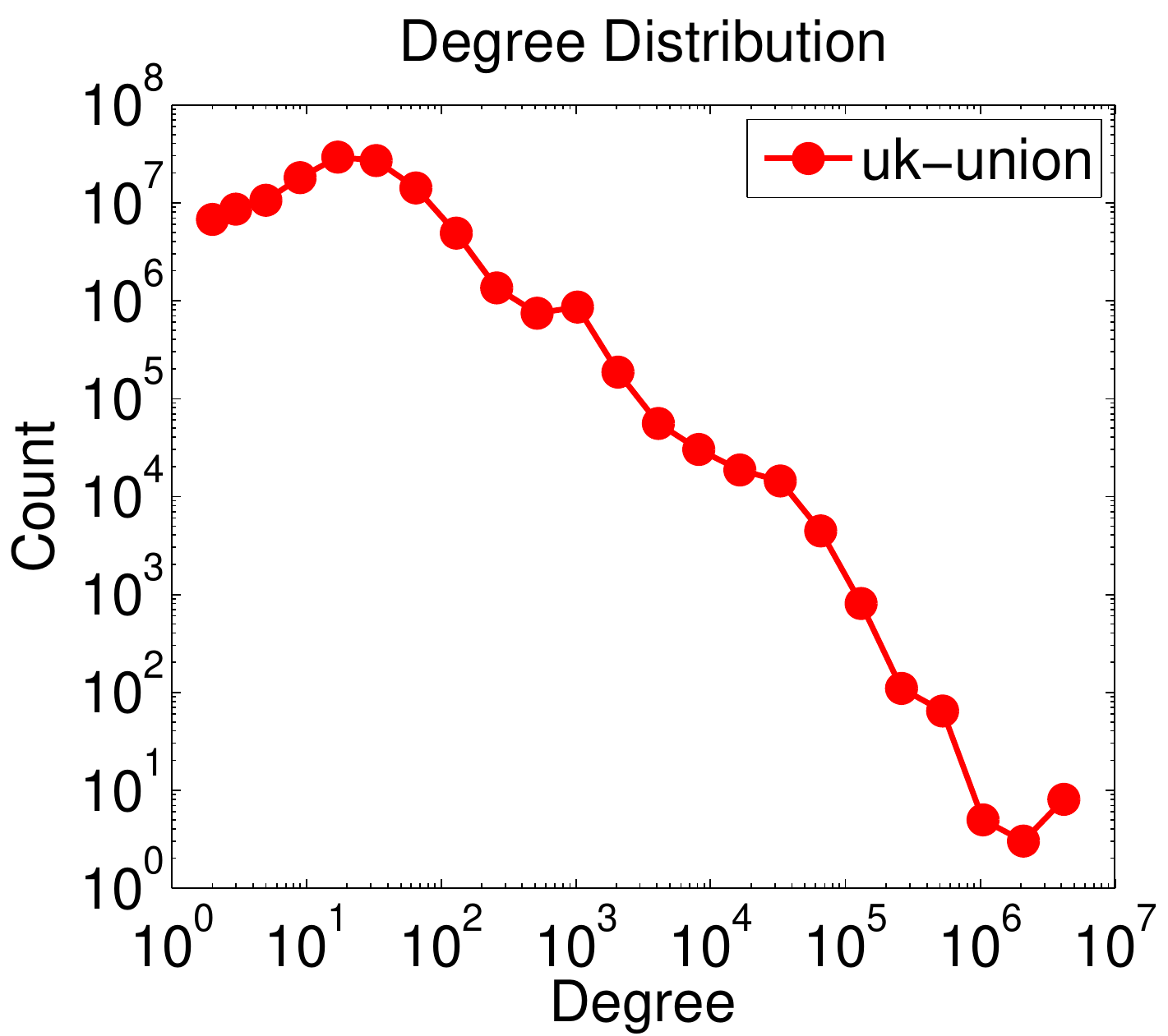}}\\
  \subfloat{\includegraphics[width=.3\textwidth]{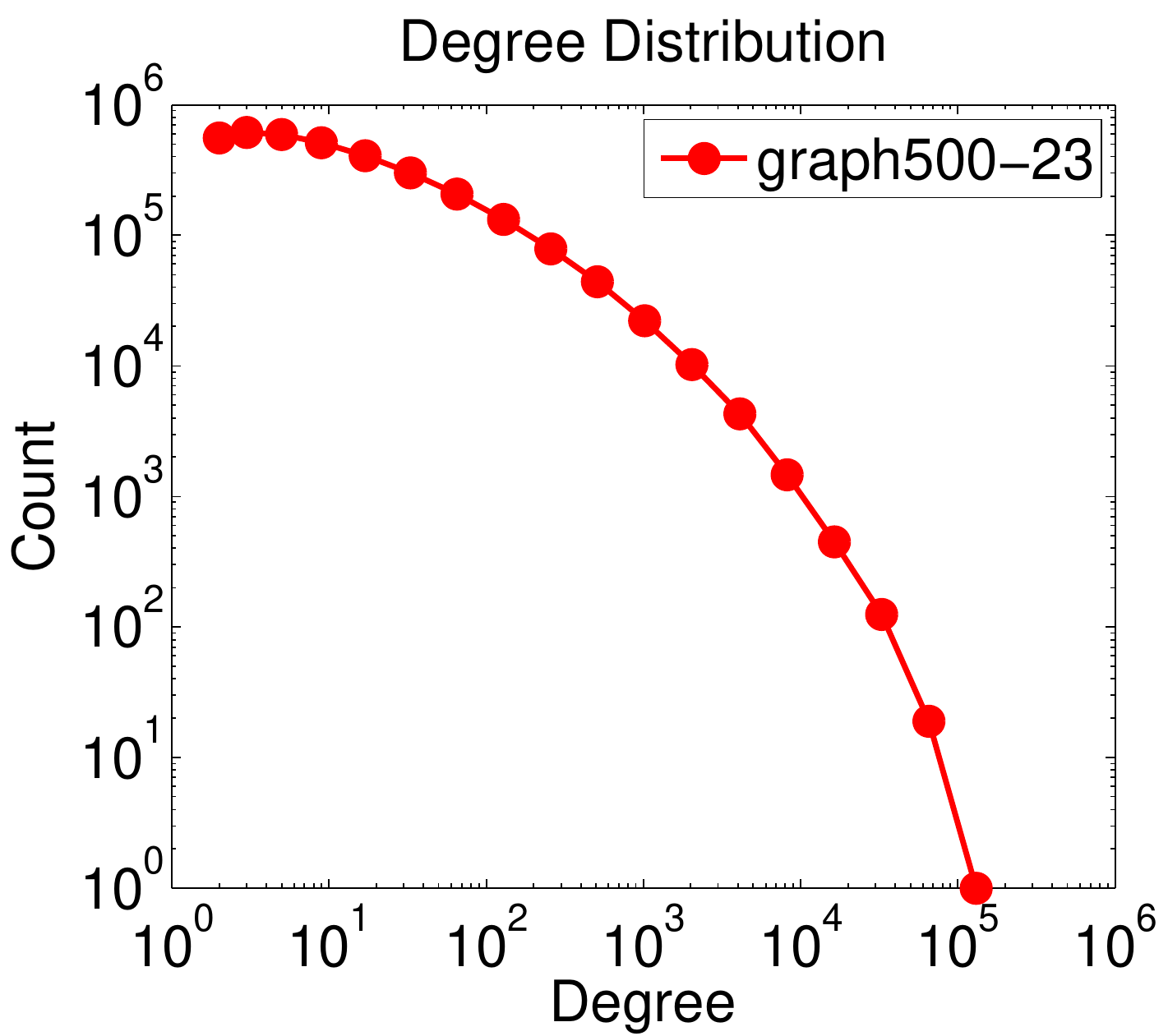}}
  \subfloat{\includegraphics[width=.3\textwidth]{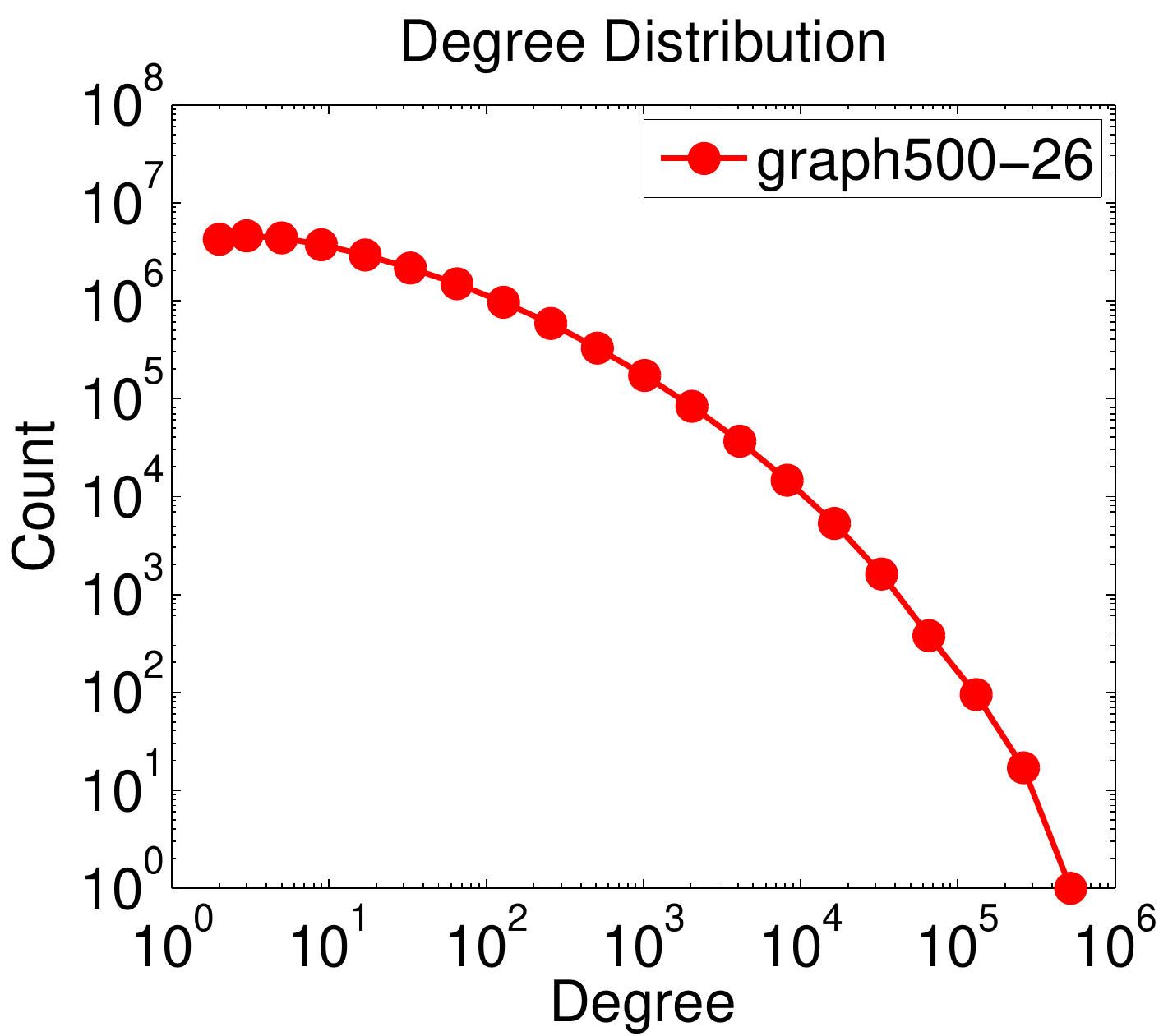}}
  \subfloat{\includegraphics[width=.3\textwidth]{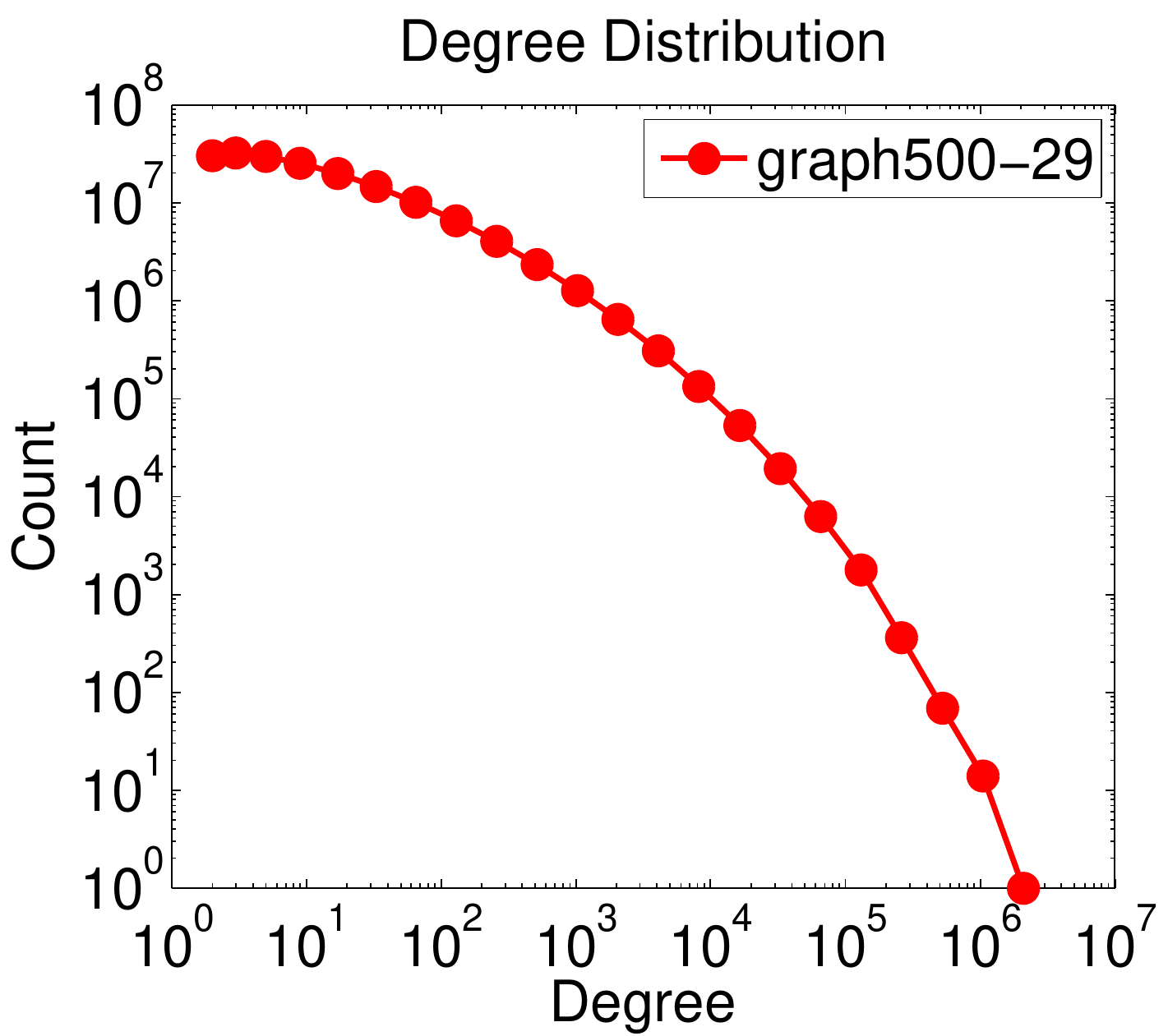}}
  \caption{Degree distributions by bin}
  \label{fig:dd}
\end{figure}

\subsection{Degree Distribution}

The output of Phase 1b yields the (binned) degree distribution. We
show results for 12 networks in \Fig{dd} (we omit uk-2006-05
because it is similar to uk-2006-06). For each data point, the
x-coordinate is the minimum degree of the bin, and the y-coordinate is
the total number of vertices in that bin.

The degree distributions can be roughly
characterized as heavy-tailed. None of the real-world graphs are
particularly smooth in the degree distribution, and some have odd
spikes, especially in the tails (e.g., sk-2005, it-2004, uk-union).
The artificial graphs (graph500-23/26/29) are extremely smooth;
we know from analysis that the noisy version of Graph500 yields lognormal tails
\cite{SePiKo11,SePiKo13}.

\subsection{Clustering Coefficients}

Clustering coefficients for each bin are displayed in \Fig{cc}. Here the x-coordinate is
the minimum degree in the bin, and the y-coordinate is the average
clustering coefficient for wedges with centers in that bin. 

Social
networks are well-known to have not only high clustering coefficients,
but also clustering coefficients that tend to degrade as the degree
increases. This can be seen in the following graphs: amazon-2008,
ljournal-2008, hollywood-2009, hollywood-2011. The twitter-2010 graph
is not as ``social'' in terms of clustering coefficient.

The web graphs (it-2004, uk-2006-06, uk-union, sk-2005) 
are interesting because the clustering coefficients seems to start low,
increase, and then drop off quickly. This may be due to the
design of web sites with many interconnected pages or to some artifact
of the crawling process.

The Graph500 examples have overall low clustering coefficients and do
not behave like the real-world graphs. The closest match is
the twitter-2010 graph.

\begin{figure}[t]
  \centering
  \subfloat{\includegraphics[width=.3\textwidth]{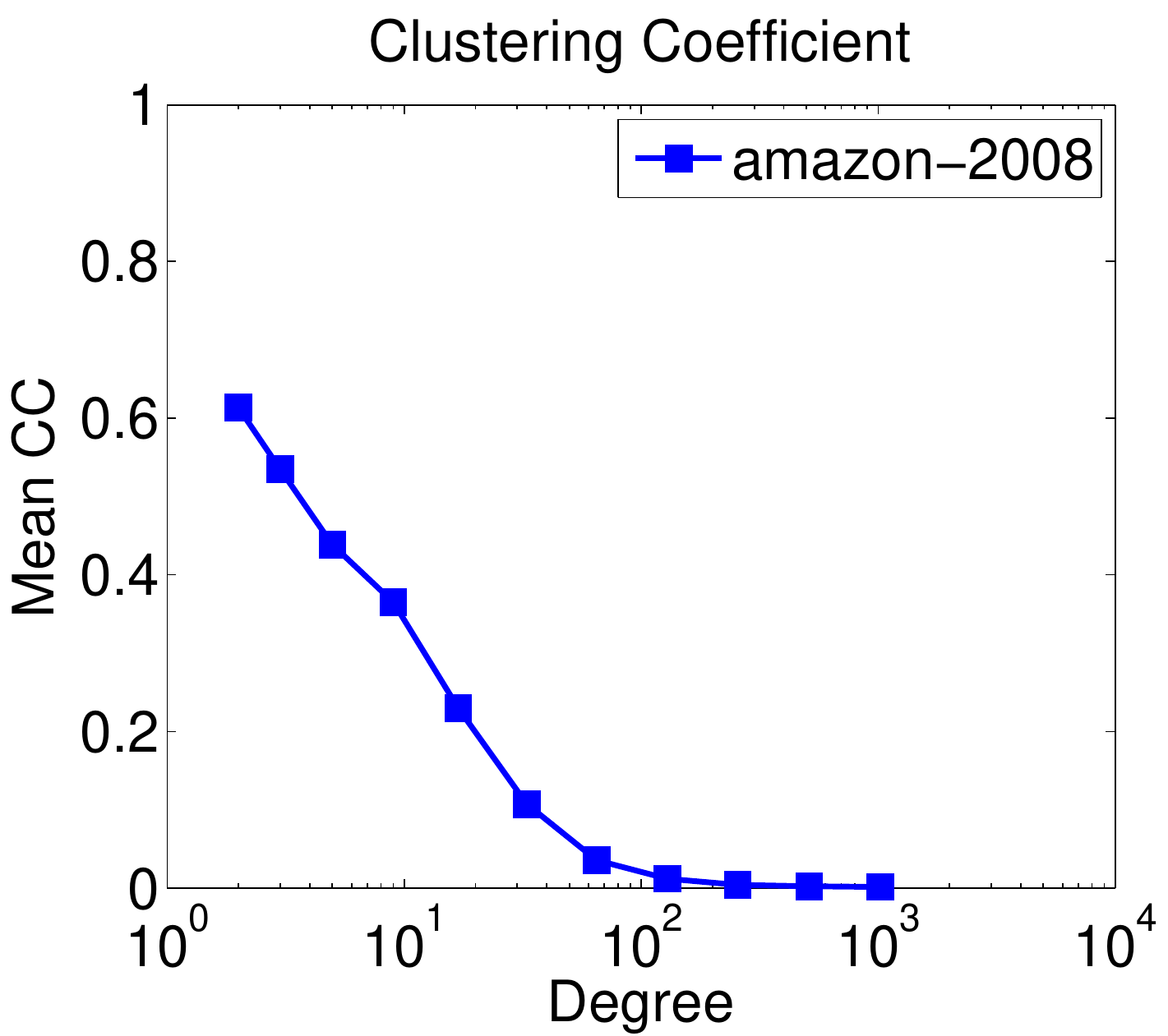}}
  \subfloat{\includegraphics[width=.3\textwidth]{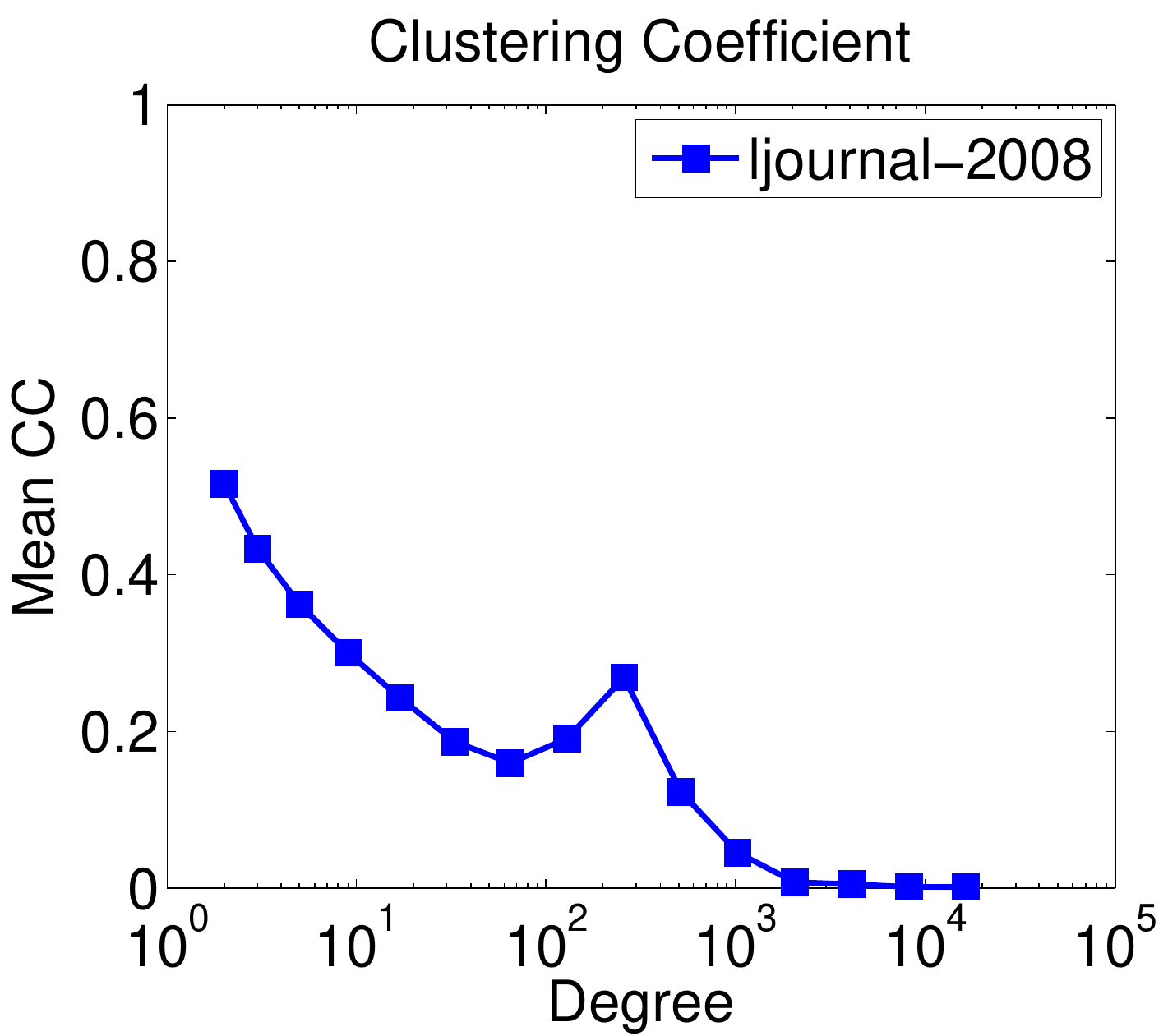}}
  \subfloat{\includegraphics[width=.3\textwidth]{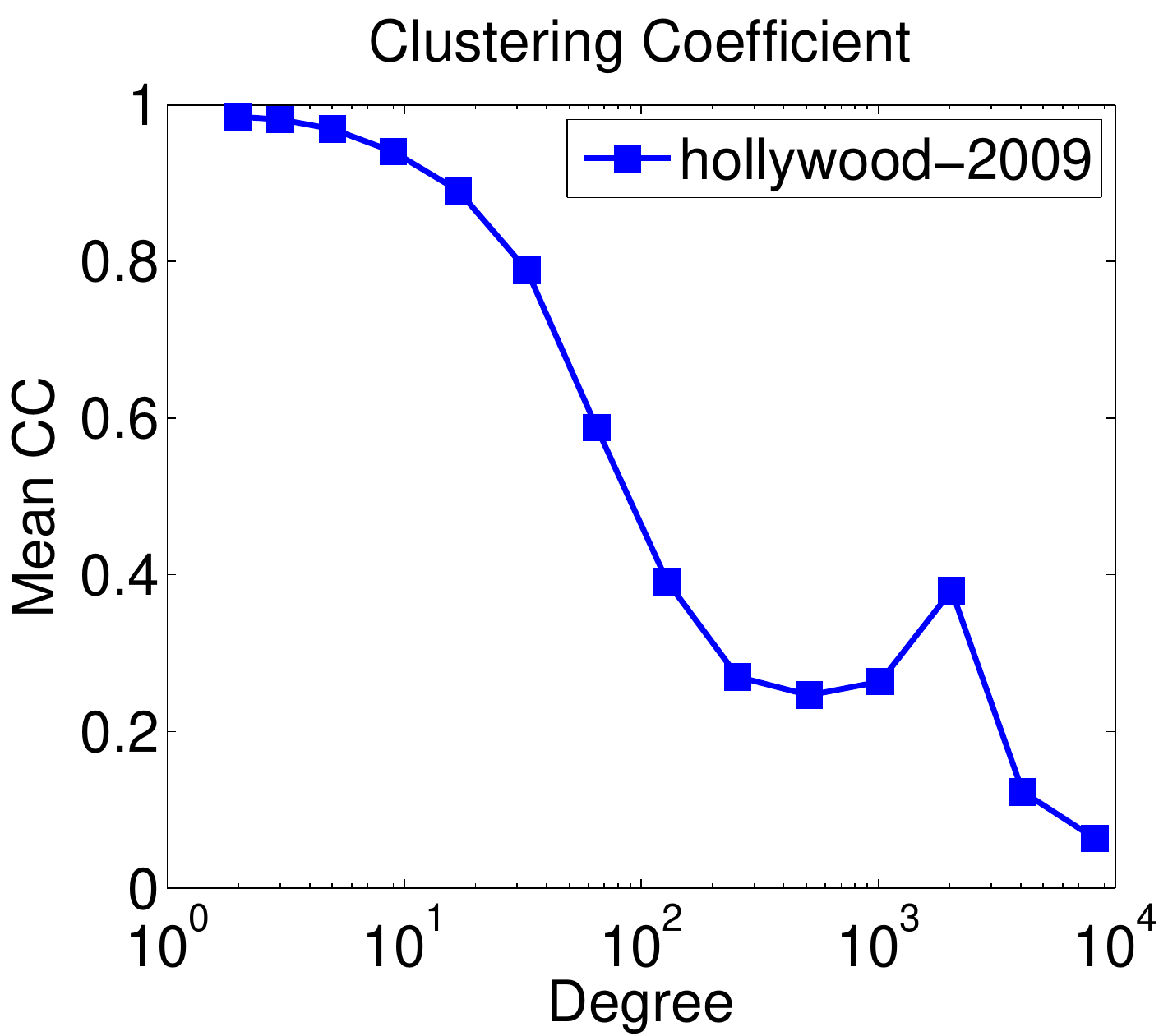}} \\
  \subfloat{\includegraphics[width=.3\textwidth]{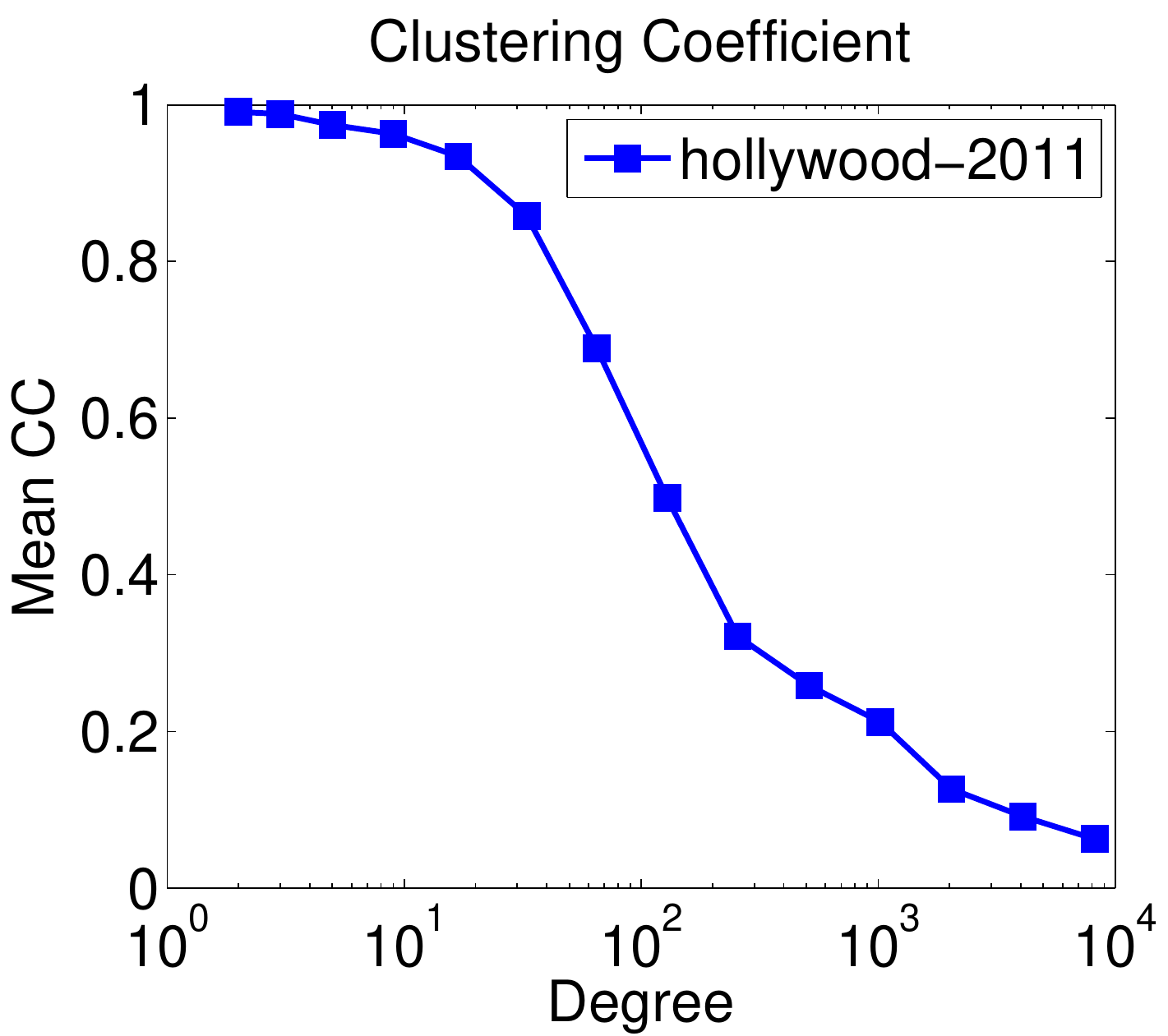}} 
  \subfloat{\includegraphics[width=.3\textwidth]{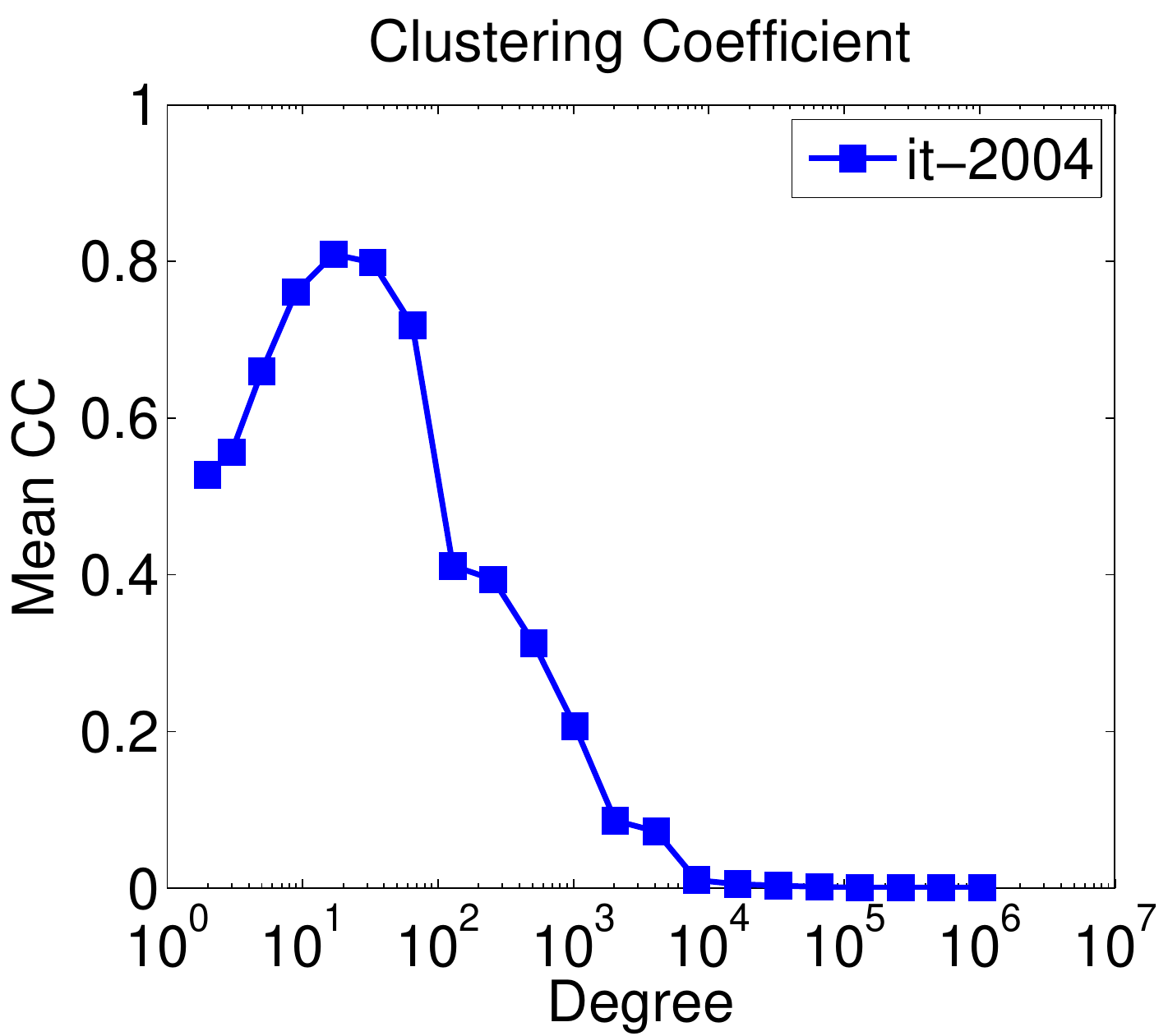}}
  \subfloat{\includegraphics[width=.3\textwidth]{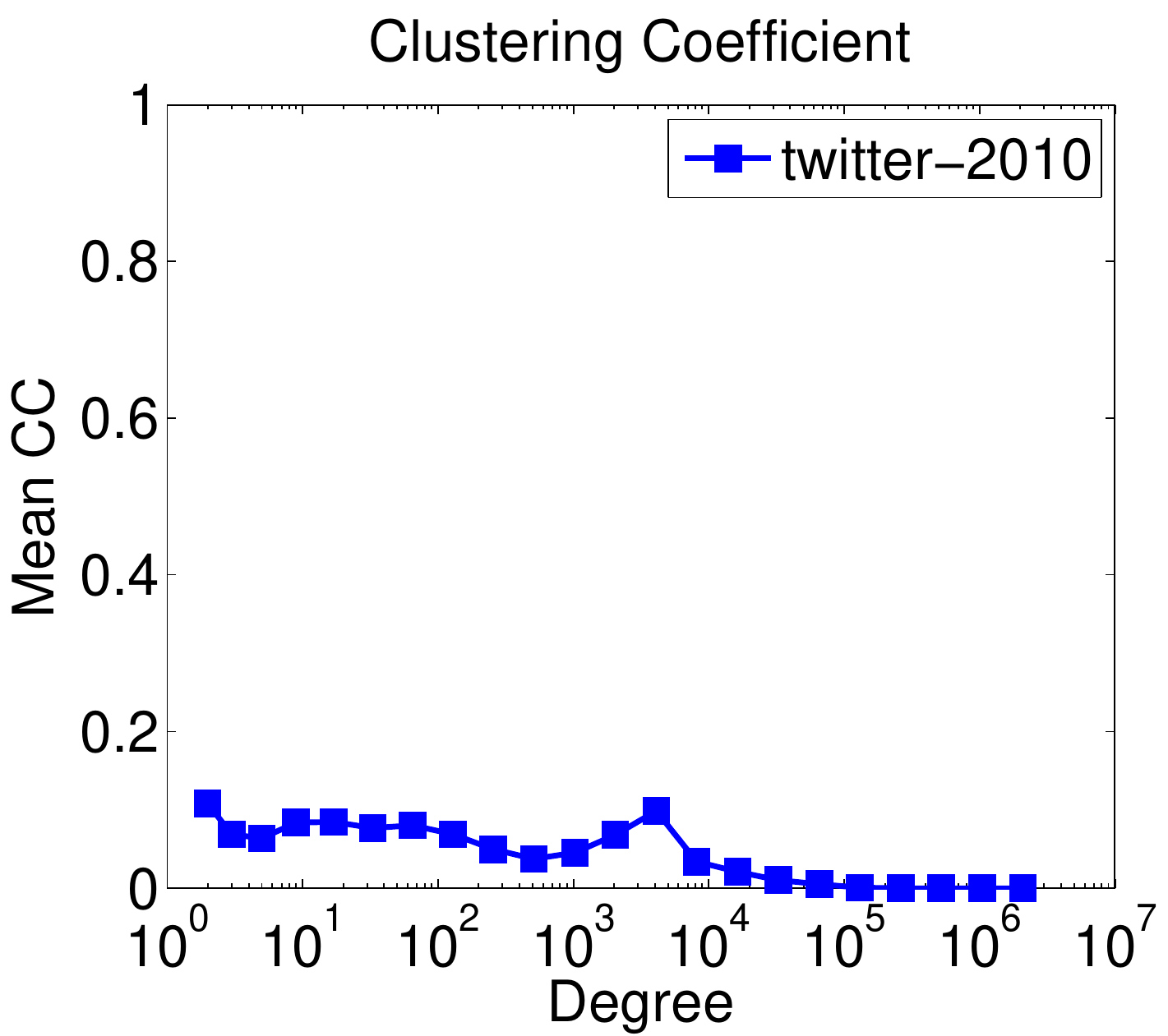}}\\
  \subfloat{\includegraphics[width=.3\textwidth]{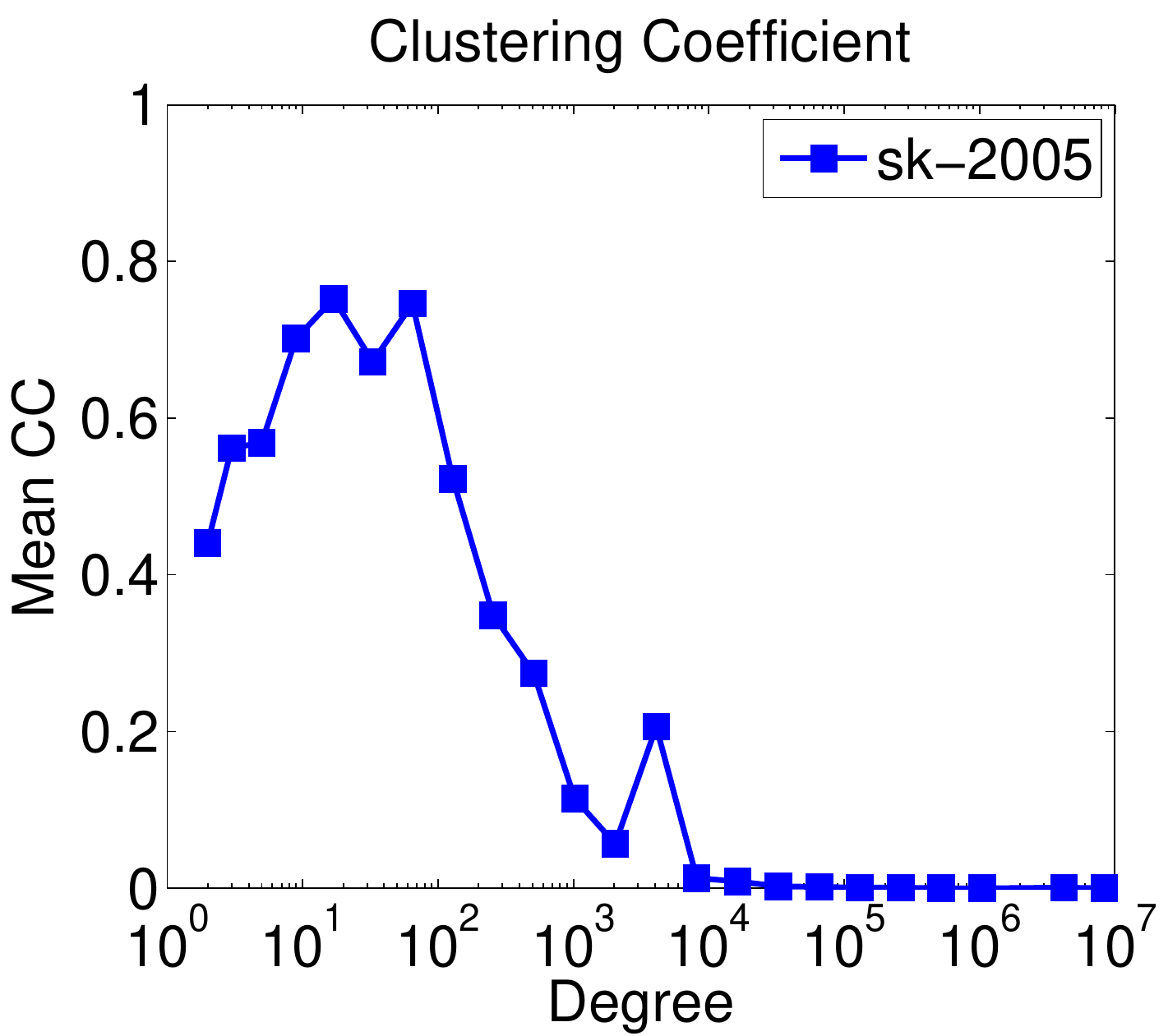}}
  \subfloat{\includegraphics[width=.3\textwidth]{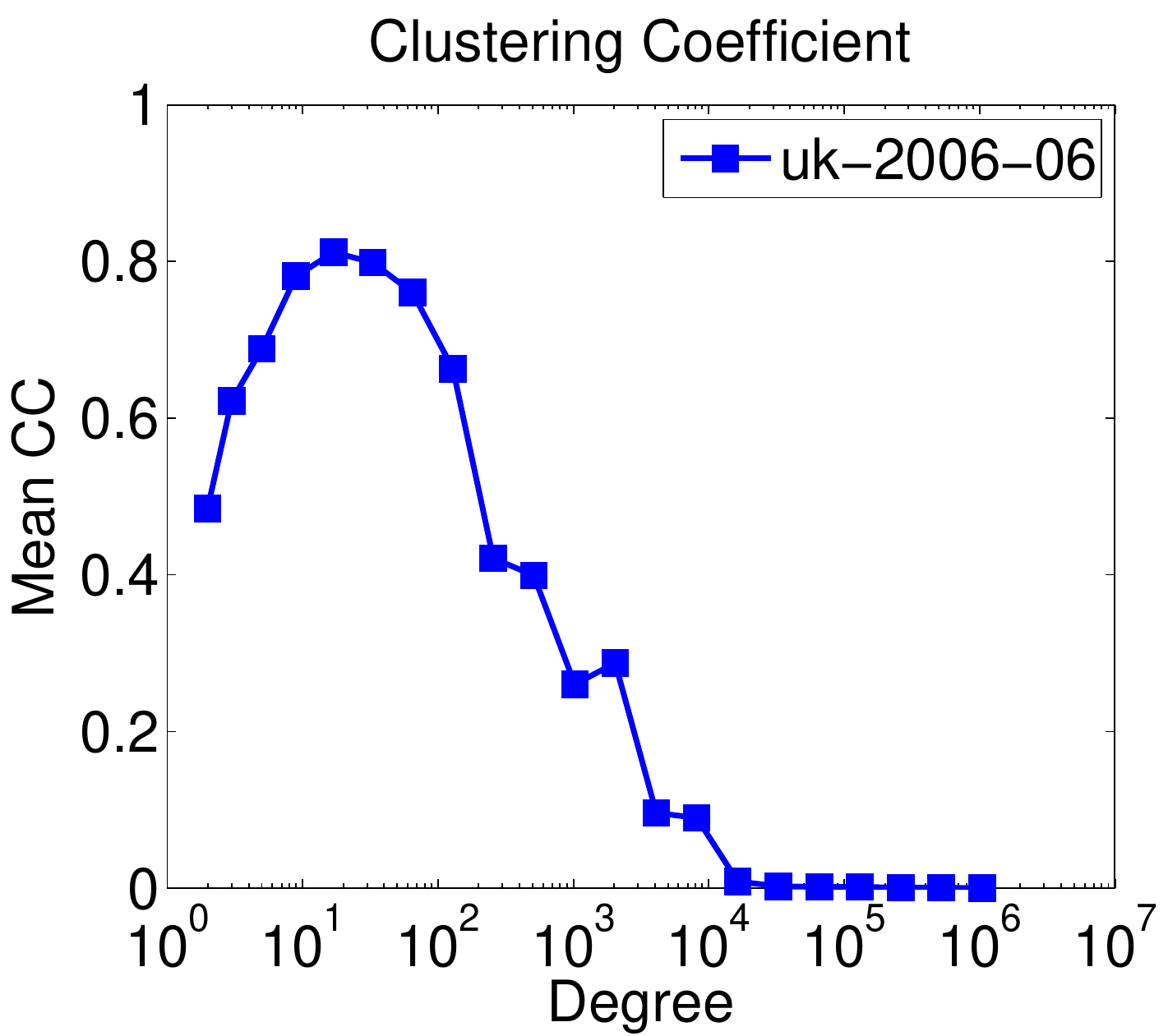}}
  \subfloat{\includegraphics[width=.3\textwidth]{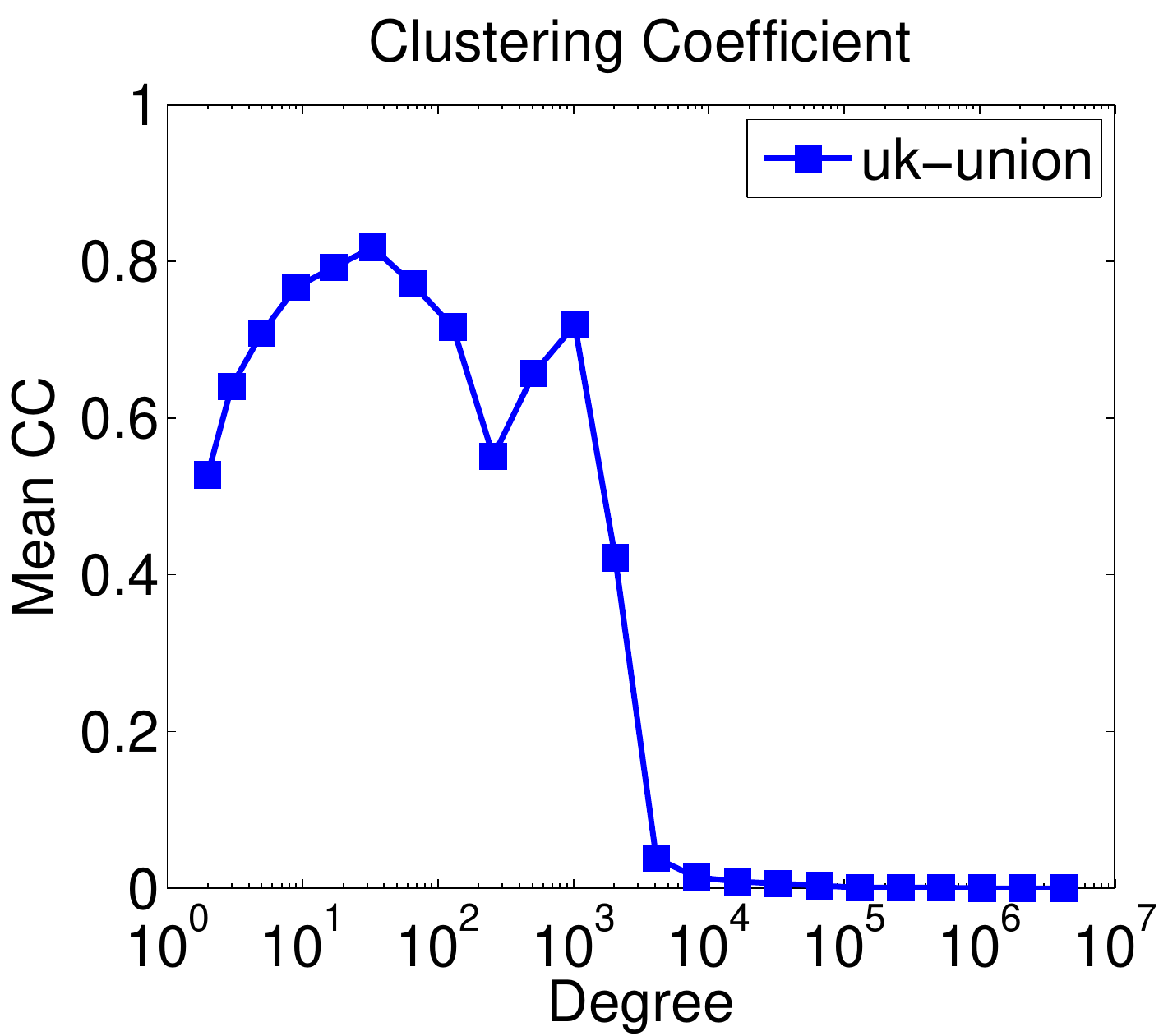}}\\
  \subfloat{\includegraphics[width=.3\textwidth]{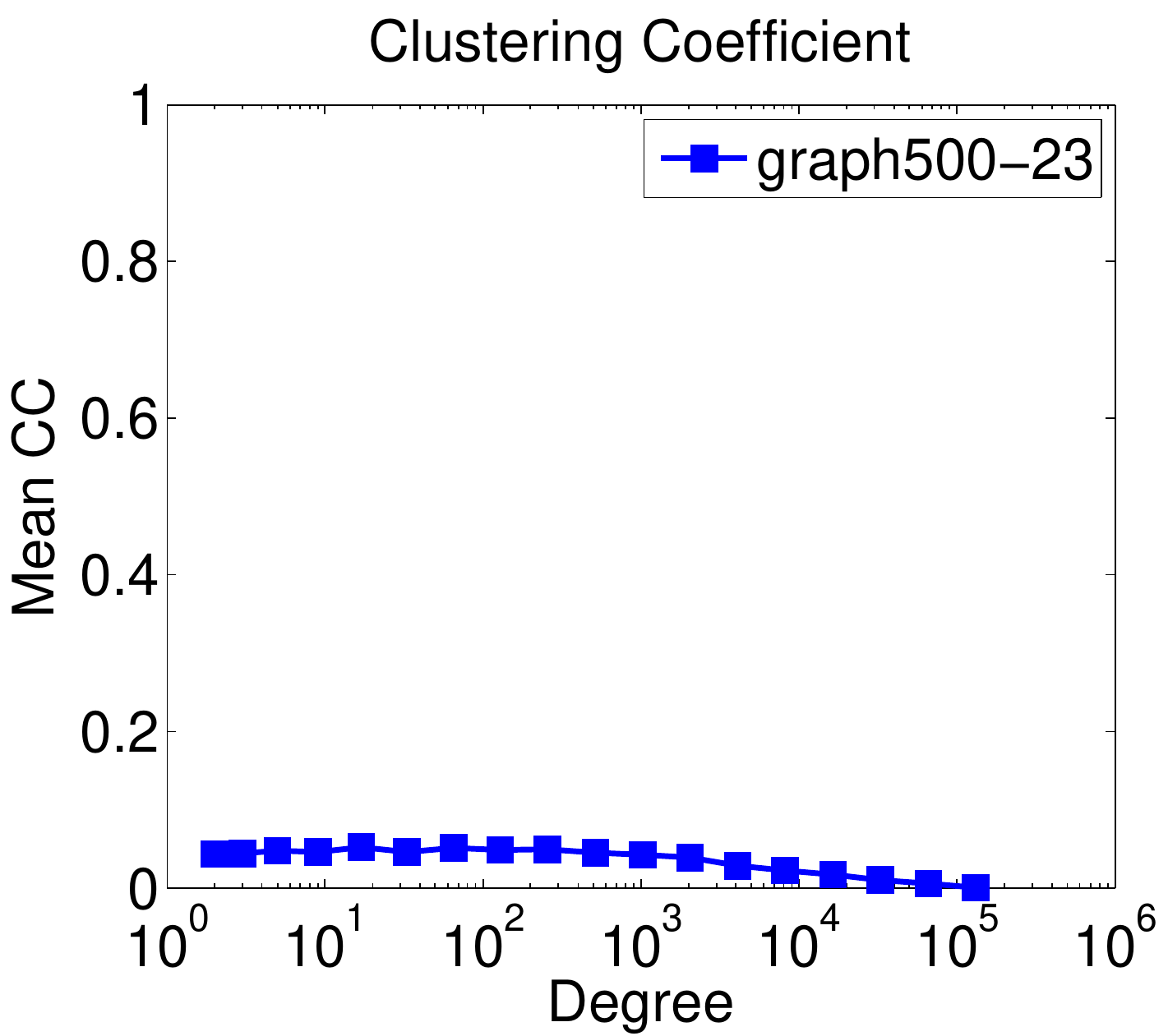}}
  \subfloat{\includegraphics[width=.3\textwidth]{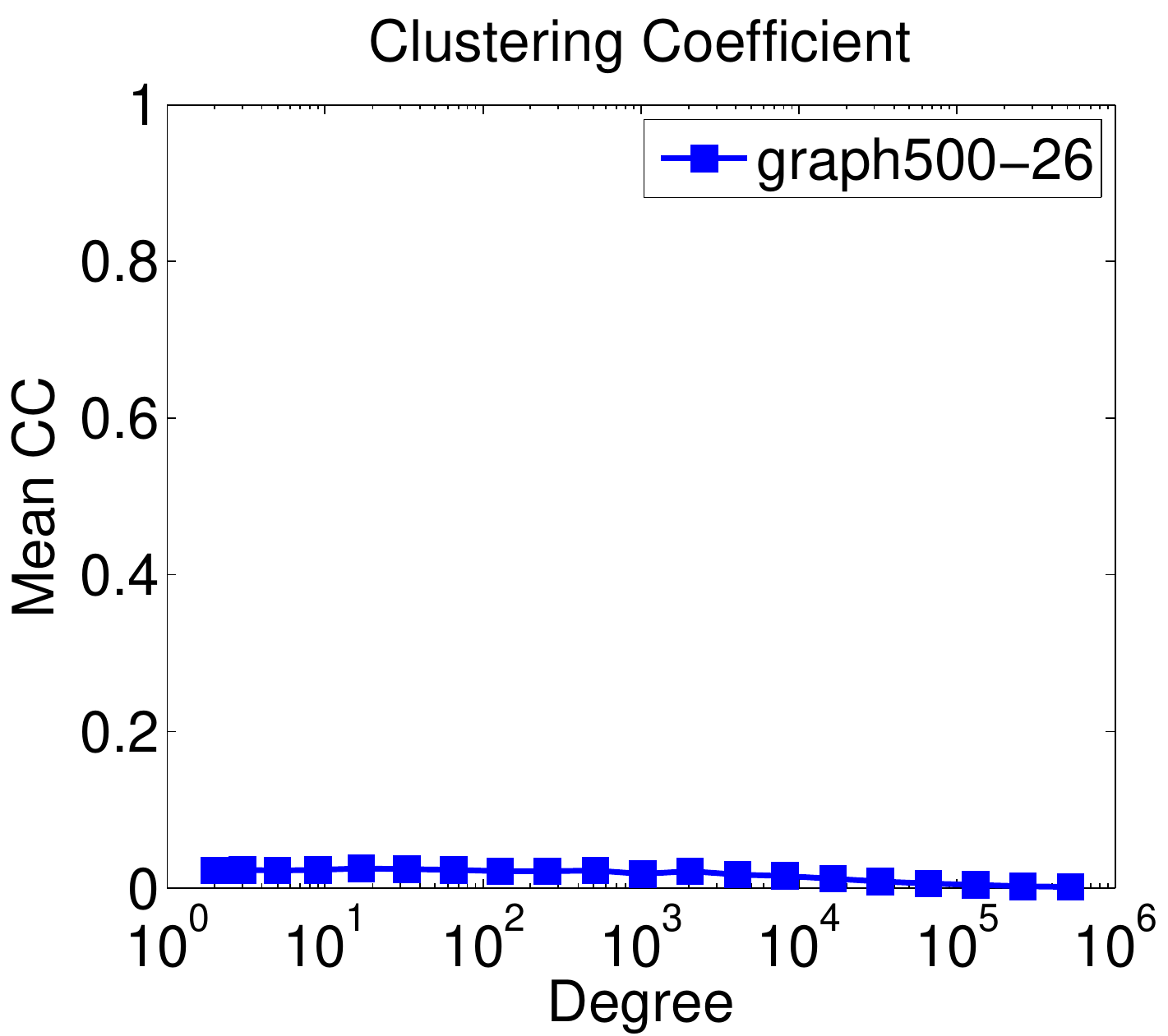}}
  \subfloat{\includegraphics[width=.3\textwidth]{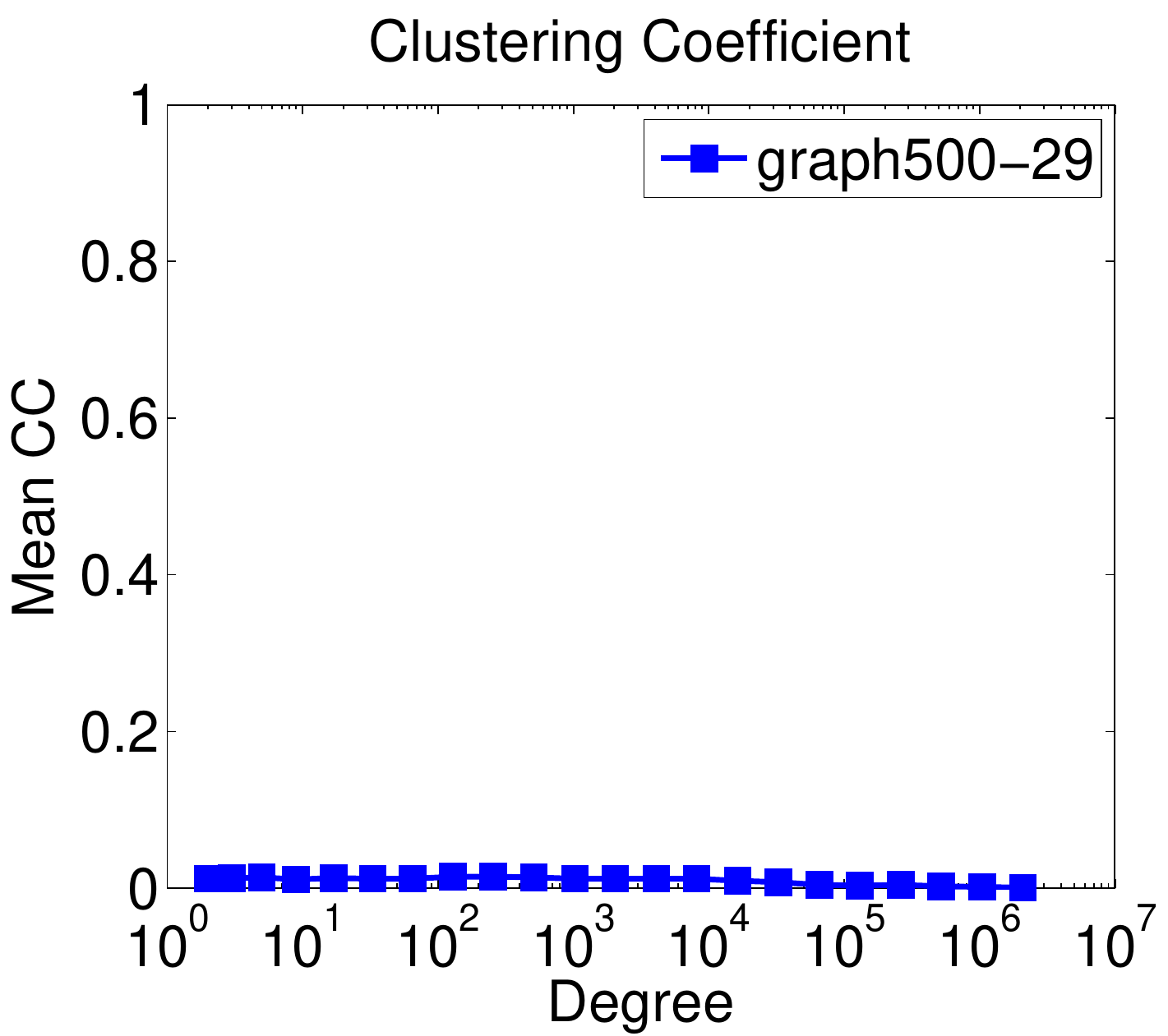}}
  \caption{Clustering coefficients by bin}
  \label{fig:cc}
\end{figure}

\subsection{Triangles}

We also present the number of triangles per bin in \Fig{tri}.  Here the x-coordinate is
the minimum degree in the bin, and the y-coordinate is the proportion
of triangles that have at least one vertex in that 
bin. Triangles may be counted more than once if they have different
vertices in different bins.

It is interesting to observe where triangles come from. Even though
low-degree nodes are the most plentiful, most of the triangles come
from higher-degree. We can roughly sort the graphs into three
categories.

There is only one graph where the triangles come from relatively
low-degree vertices: amazon-2008. Here it seems that most triangles
come from nodes with degrees between 4 and 80. 

There are several graphs where the triangles come from the mid-range
degrees: one social graph (ljournal-2008) and all the web graphs
(it-2004, sk-2005, uk-2006-06, uk-union). The ``double-spike''
behavior of sk-2005 is interesting.

\begin{figure}[tpb]
  \centering
  \subfloat{\includegraphics[width=.3\textwidth]{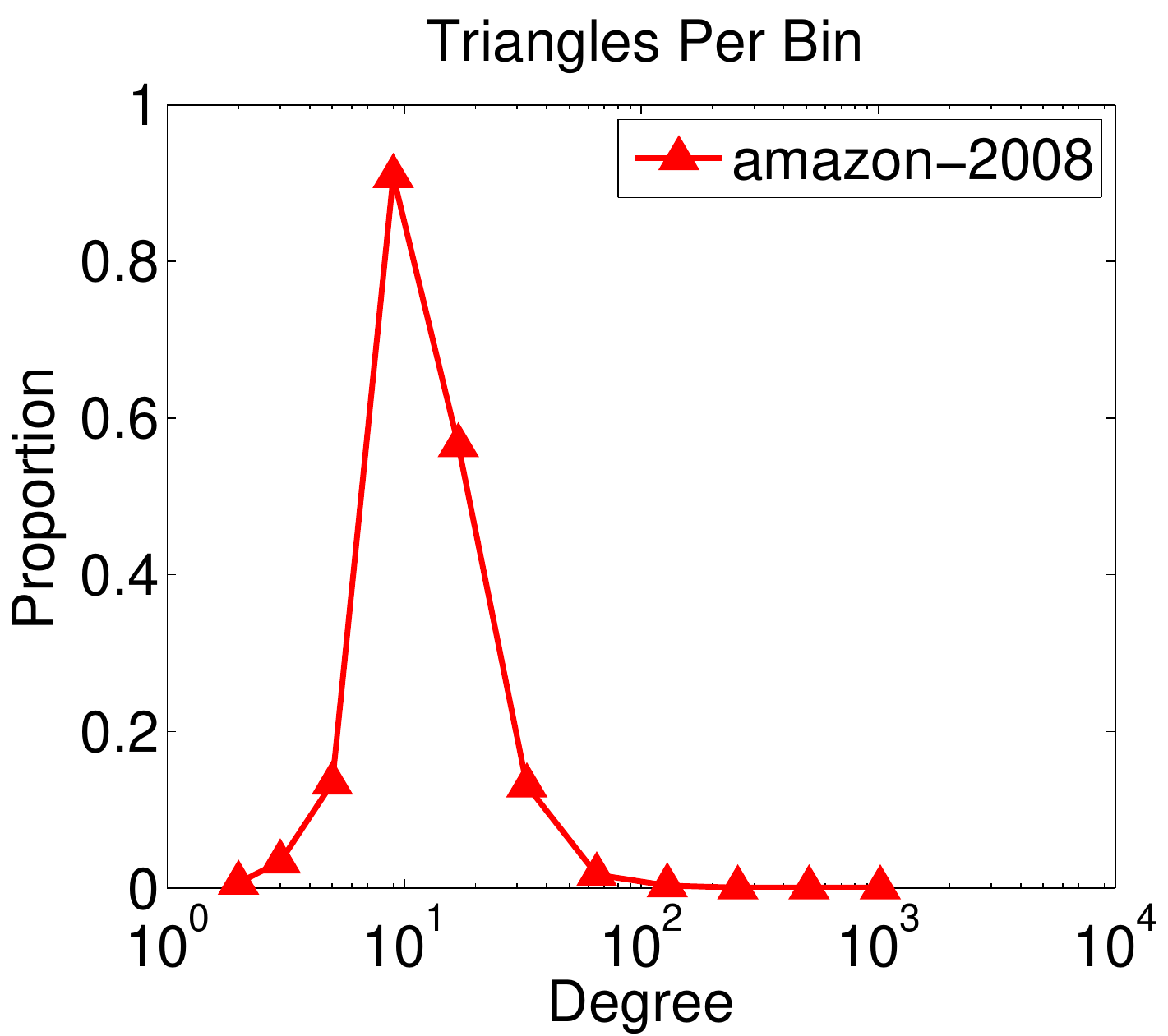}}
   \subfloat{\includegraphics[width=.3\textwidth]{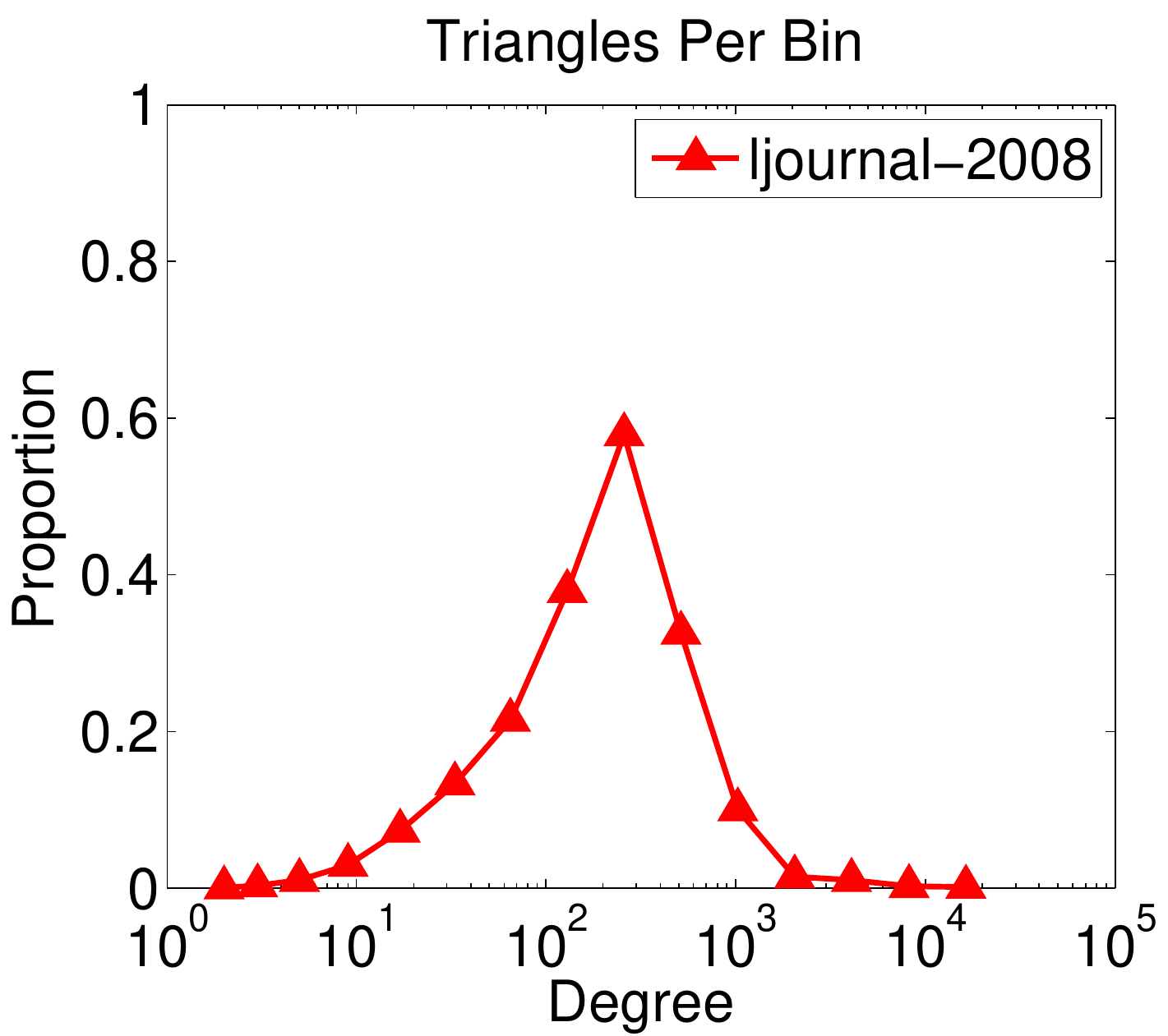}}
  \subfloat{\includegraphics[width=.3\textwidth]{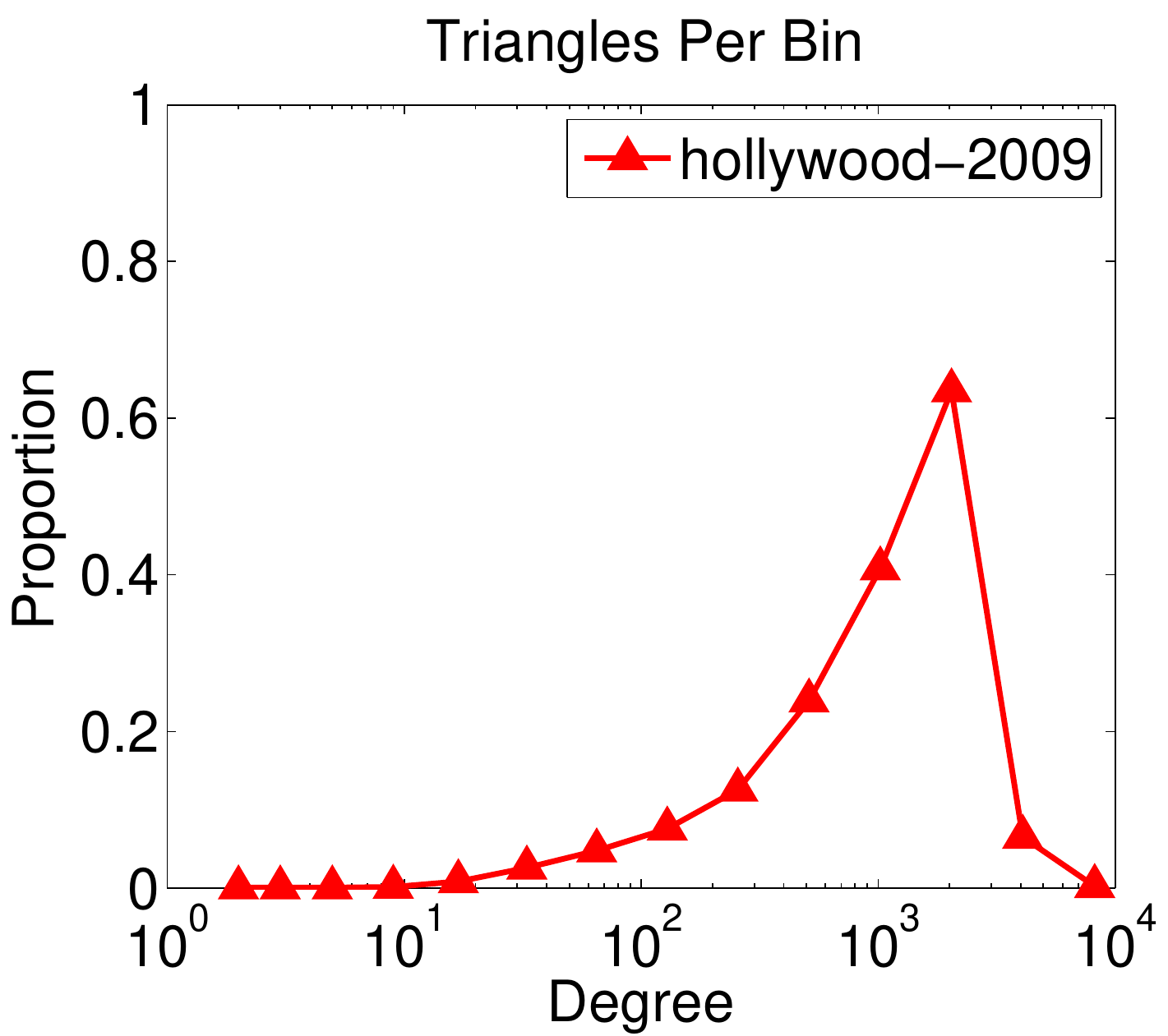}} \\
  \subfloat{\includegraphics[width=.3\textwidth]{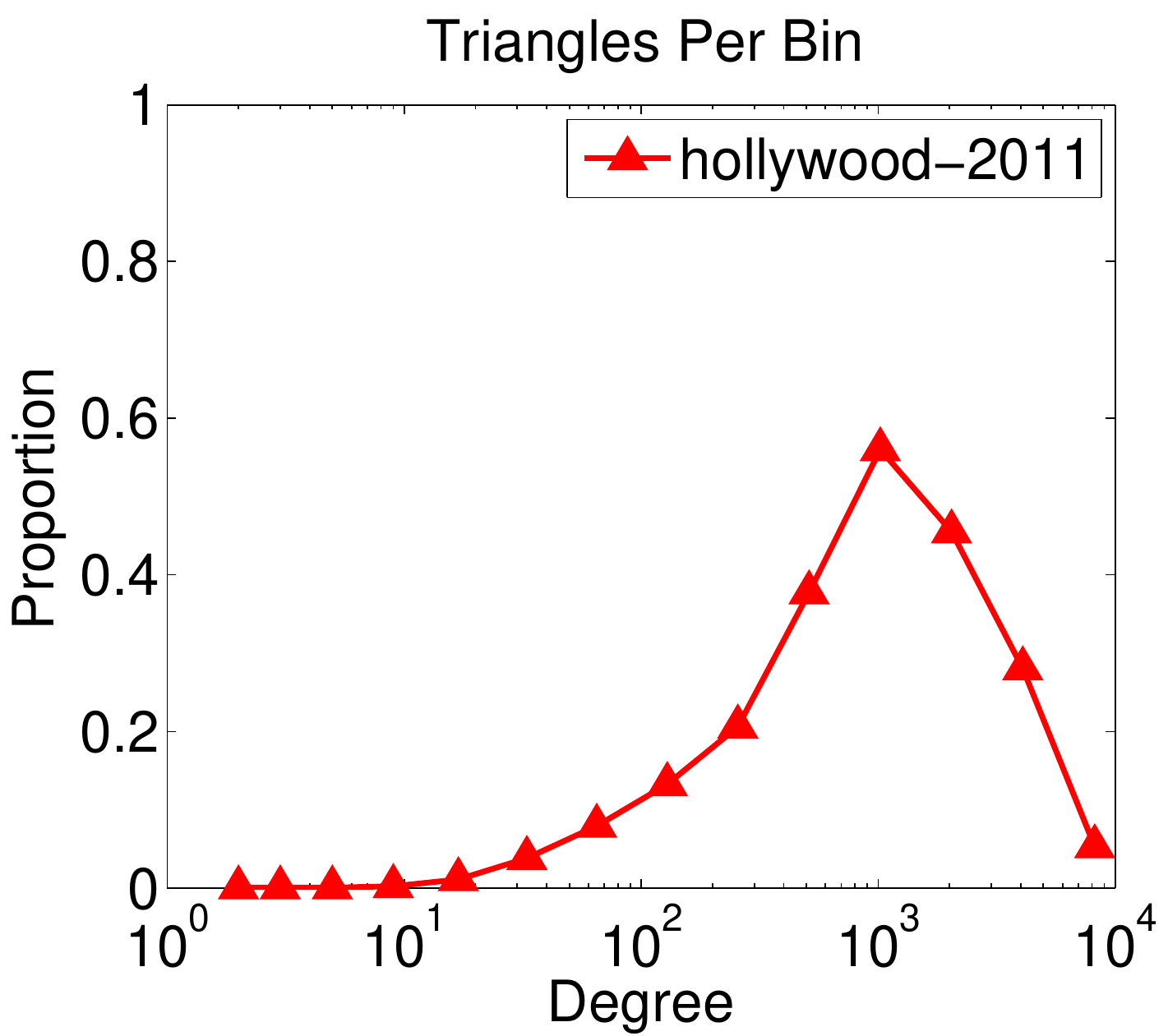}}
    \subfloat{\includegraphics[width=.3\textwidth]{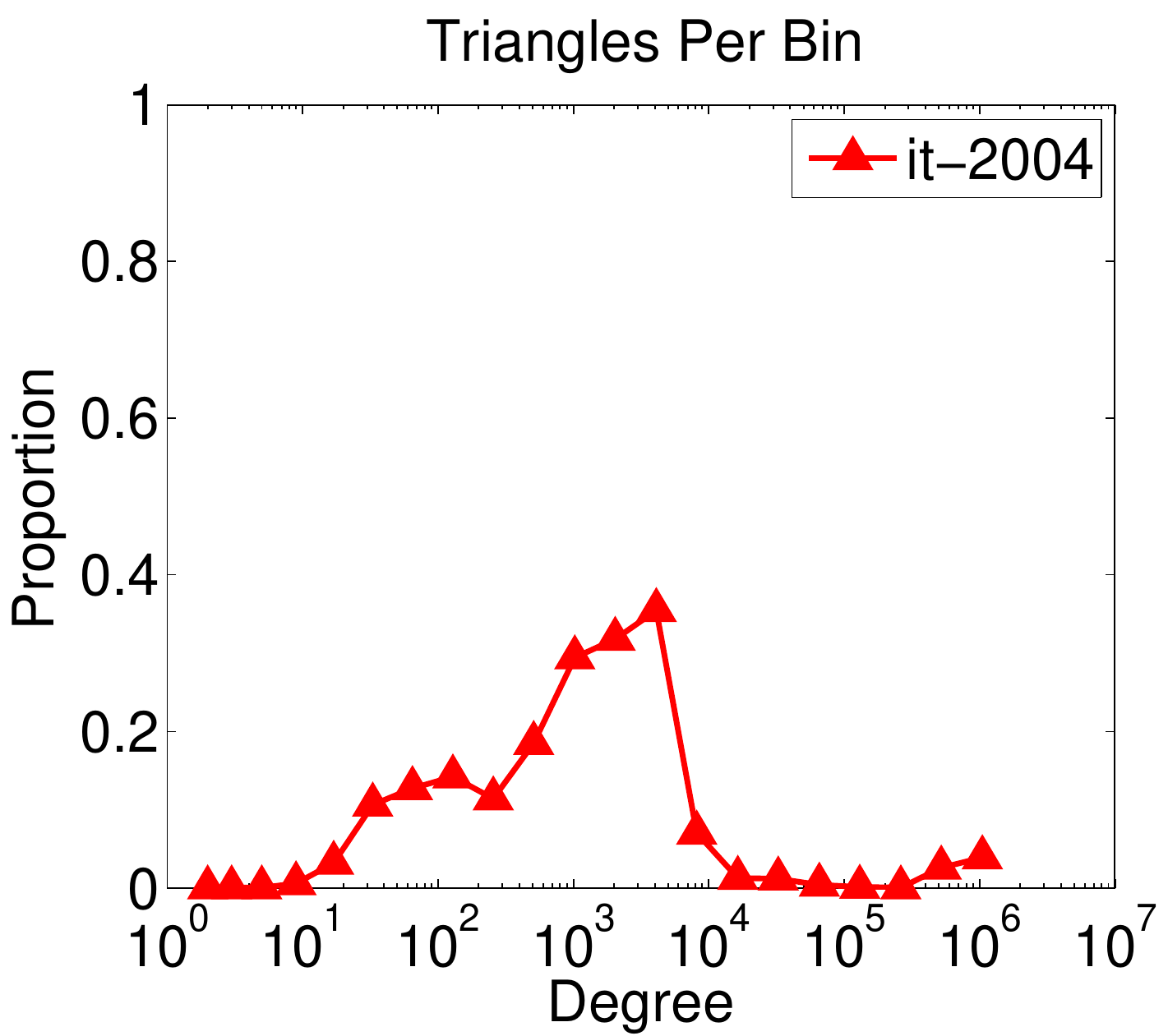}}
  \subfloat{\includegraphics[width=.3\textwidth]{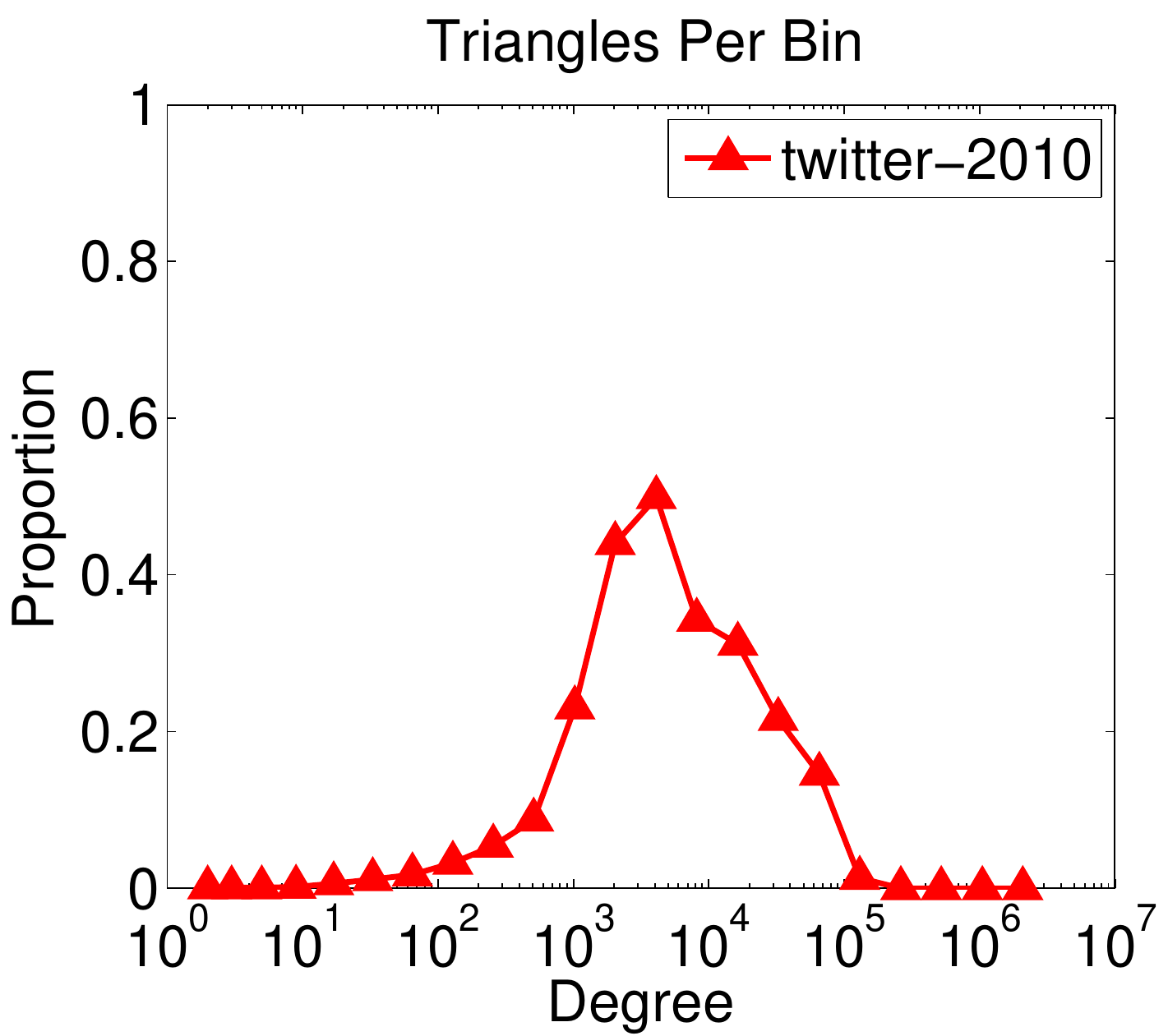}}\\
  \subfloat{\includegraphics[width=.3\textwidth]{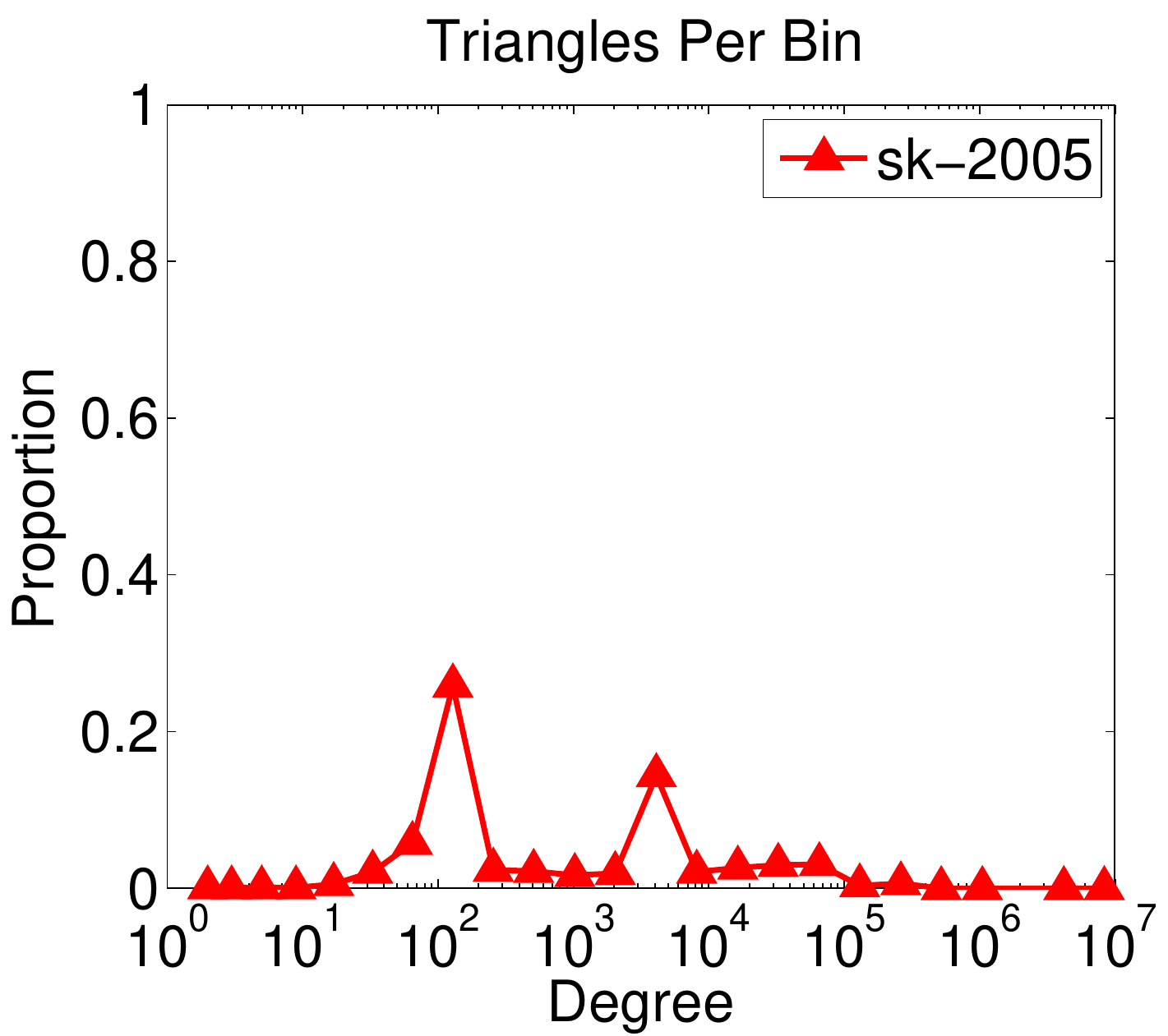}}
  \subfloat{\includegraphics[width=.3\textwidth]{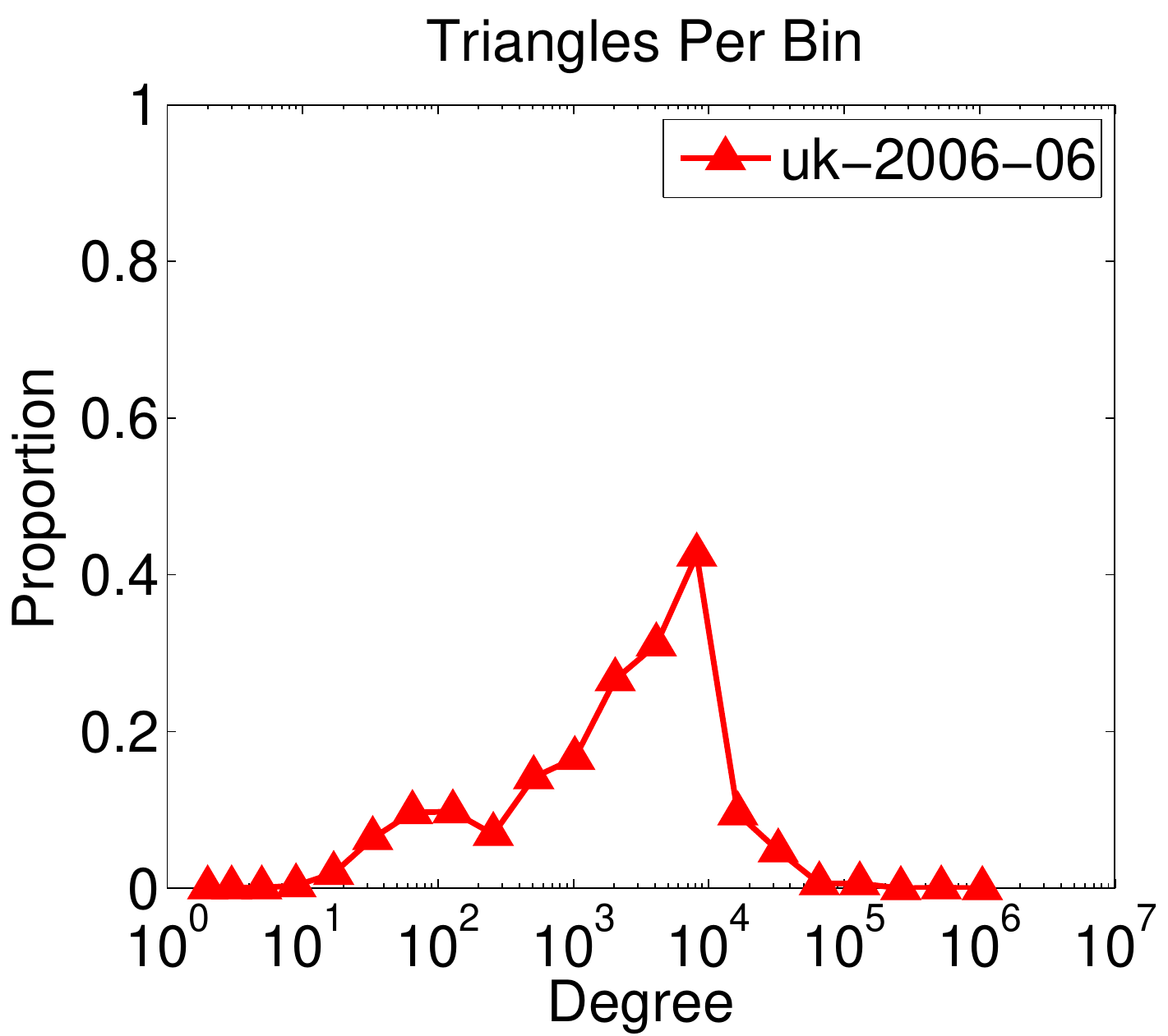}}
  \subfloat{\includegraphics[width=.3\textwidth]{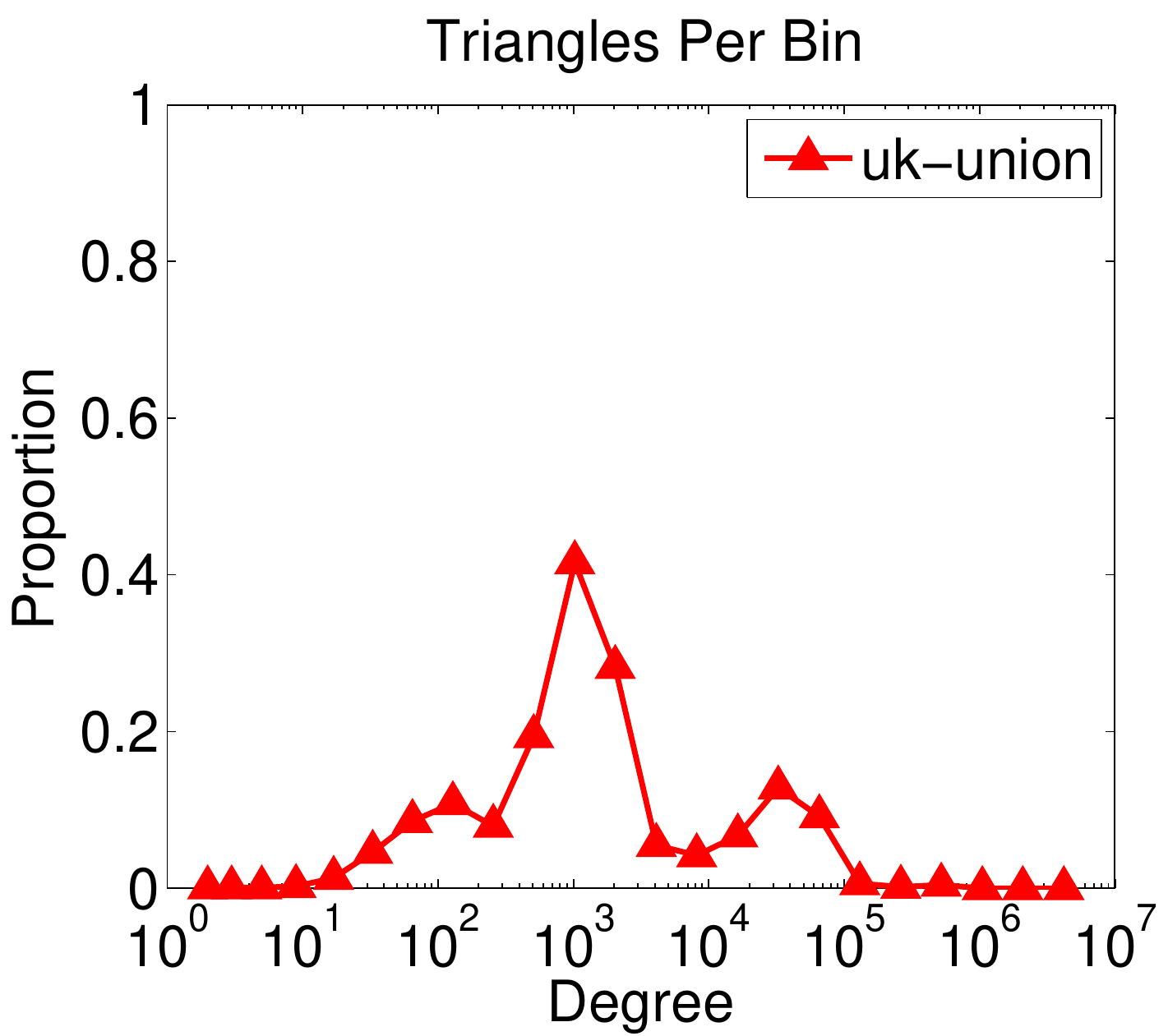}}\\
  \subfloat{\includegraphics[width=.3\textwidth]{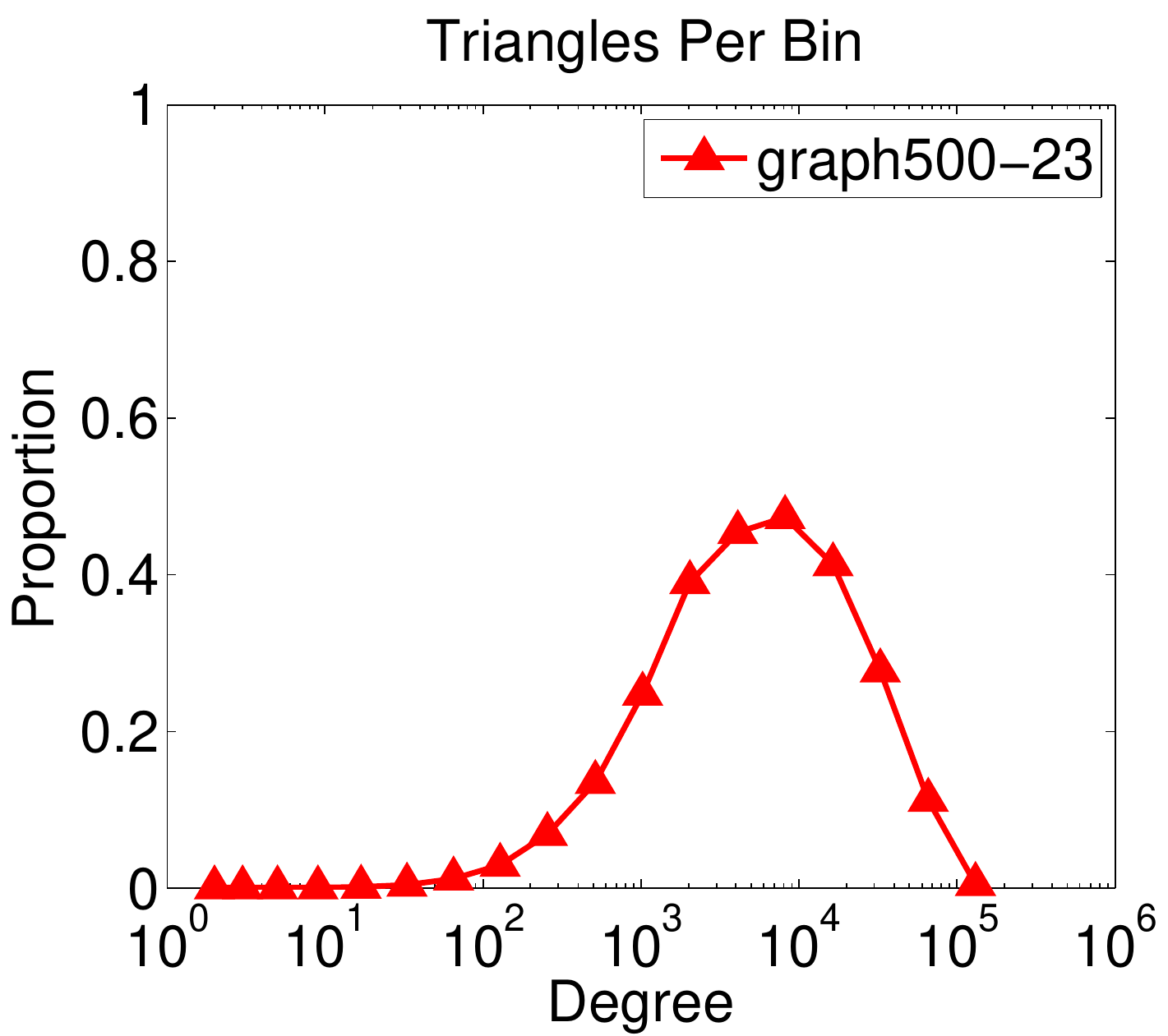}}
  \subfloat{\includegraphics[width=.3\textwidth]{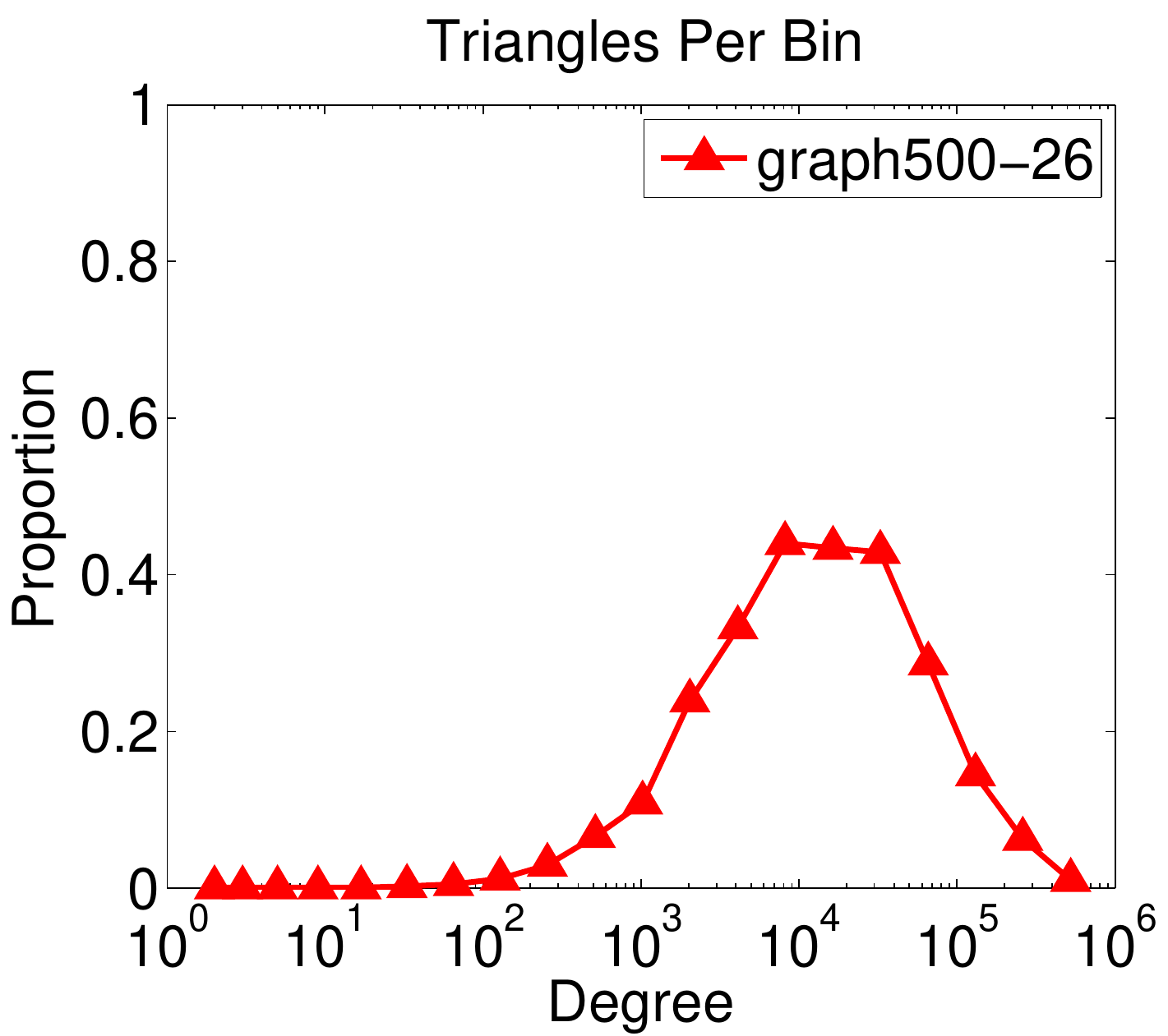}}
  \subfloat{\includegraphics[width=.3\textwidth]{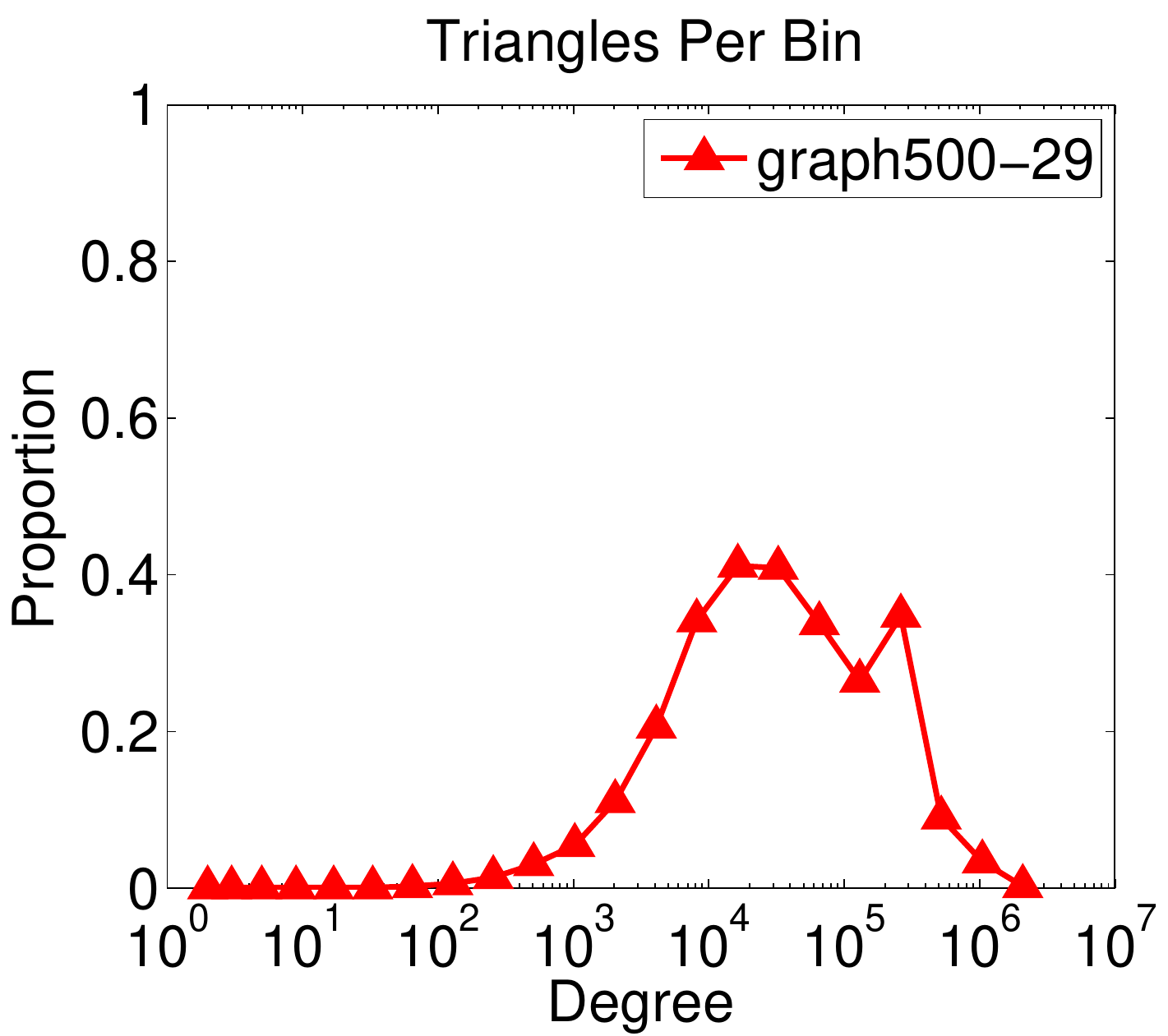}}
  \caption{Proportion of triangles in each bin}
  \label{fig:tri}
\end{figure}

Finally, there are a few graphs where the vast majority of triangles
involve the high-degree nodes. Both Hollywood graphs are of this type;
note that 60\% of the triangles involve one of the nodes in the bin
starting at degree 1025. The Graph500 graphs also have most of the
triangles coming from the highest degree nodes.

\subsection{Triangle Statistics}
\label{sec:exp-sample}
 
An interesting feature of our wedge sampling techniques is that, in
the case of a single bin, all the closed triangles are uniformly randomly
sampled as well.  Such a random sample can be used to analyze the
characteristics of the triangles in the graph, going further than
merely looking at their count.  Examples of such studies can be found
in~\cite{DuPiKo12}, where full enumeration of the
triangles was used.  To avoid the burden of full enumeration a uniform
sampling of the triangles can be used, as we showcase below.

For four example graphs, we ran our MapReduce code with a single bin ($\tau=1$ and
$\omega=10^7$) and
$k=5,000,000$ samples; we skipped phases 2b and 3a to avoid any data
overflow problems in the configuration parameters. Runtimes and the
number of triangles (expected to be roughly $k$ times the global
clustering coefficient) are reporting in \Tab{trisample}.

\begin{table}[th]\footnotesize
  \centering
  \caption{Number of triangles from 5,000,000 wedge samples.}
    \label{tab:trisample}
  \begin{tabular}{|l|r|r|} \hline
     \bf Graph Name & \bf Time(s) & \bf Triangles \\ \hline
       uk-union &    2618 &  33398 \\ \hline
 hollywood-2011 &     348 & 878719 \\ \hline
    graph500-26 &     845 &  46047 \\ \hline
    graph500-29 &    5487 &  25994 \\ \hline
  \end{tabular}
\end{table}

Using these sampled triangles, we can look at the degrees of the
vertices. Each triangle has a minimum, middle, and maximum degree. 
We analyze the \emph{degree assortativity} of the vertices of the triangles
by comparing the minimum and maximum degrees in
\Fig{tristats-assort}. 
Specifically, we assign each vertex to a degree bin, using \Eqn{binid}
with $\tau=2$ and $\omega=2$. 
We group all triangles with the same minimum degree bin
together. The box plot shows the statistics of the bin for the maximum
degree: the
central mark (red) is the median max-degree, while the edges of the
(blue) box are the 25th and 75th percentiles.  The whiskers extend to
the most extreme points considered not to be outliers, and the
outliers (red plus marks) are plotted individually.
 
Observe that the social network, hollywood-2011, shows an assortative relation between the
maximum and minimum for the hollywood graph, since the two quantities rise
gradually together.  For the uk-union web graph on the other hand, the
average maximum degree is essentially invariant to the minimum degree.  These
findings are consistent with the results in~\cite{DuPiKo12} about
networks with high global clustering coefficients having degree
assortative triangles, while this assortativity cannot be observed in
networks with low clustering coefficients.  Here, we were able to
observe the same trend on these massive graphs using sampling in a
much more efficient way, avoiding the enumeration burden.
We see that the Graph500 networks also have almost no assortativity between
the minimum and maximum degrees and therefore do not have the
characteristics of a social network.
    
    \begin{figure}[htbp]
      \centering
      \subfloat{\includegraphics[width=.45\textwidth]{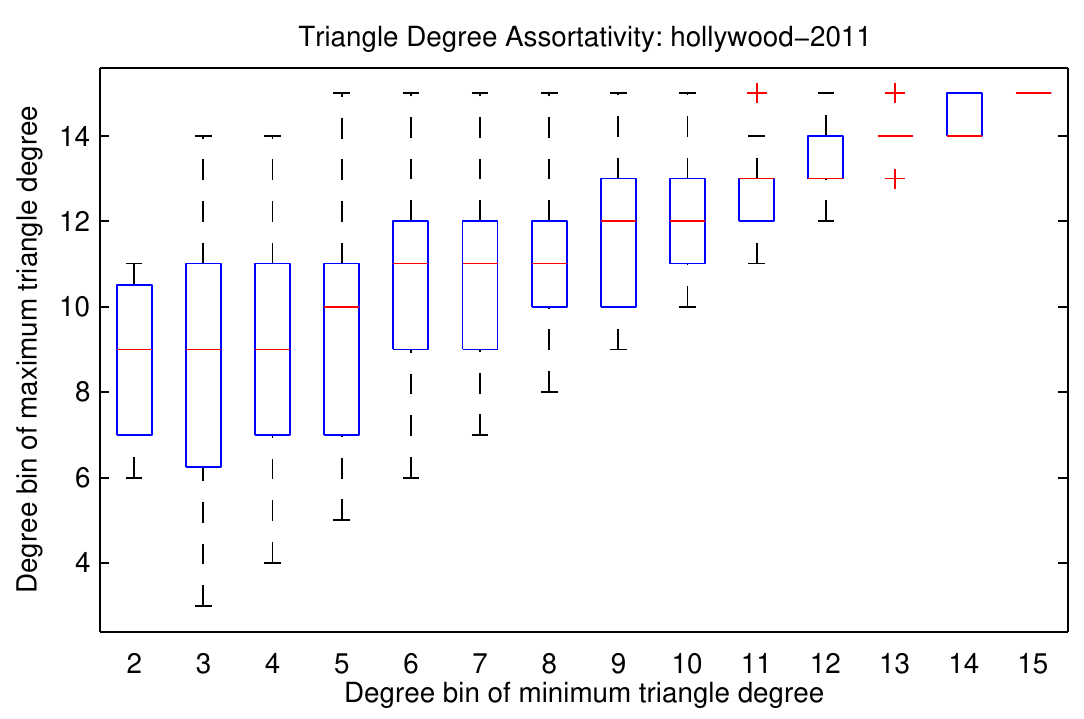}}
      \subfloat{\includegraphics[width=.45\textwidth]{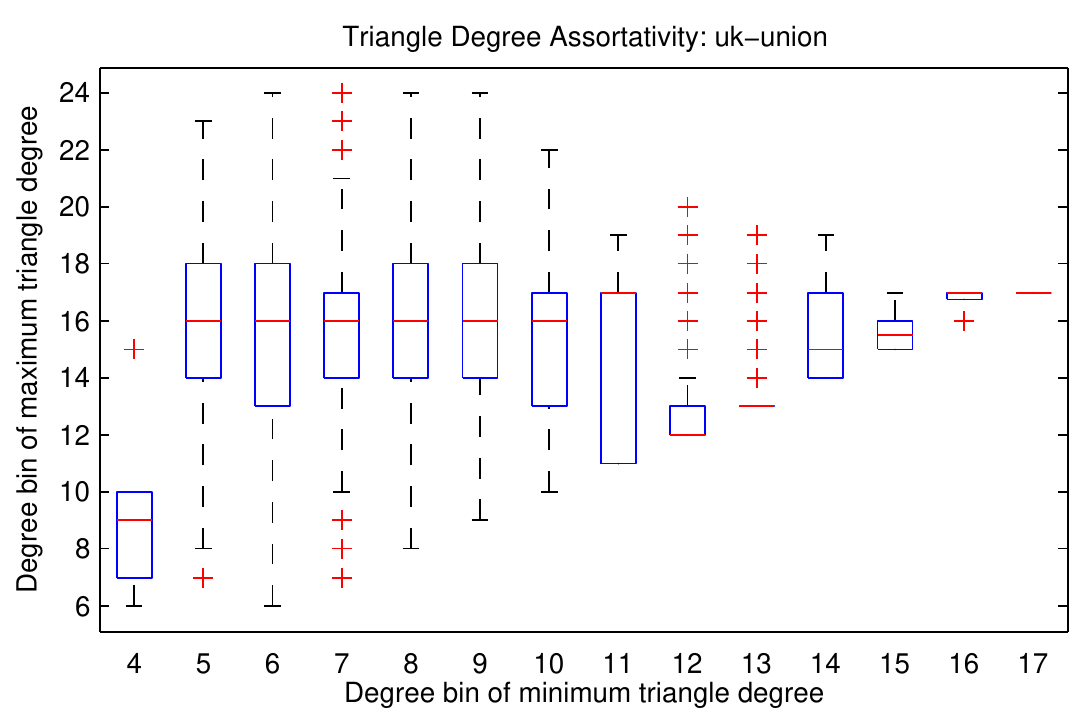}}\\
      \subfloat{\includegraphics[width=.45\textwidth]{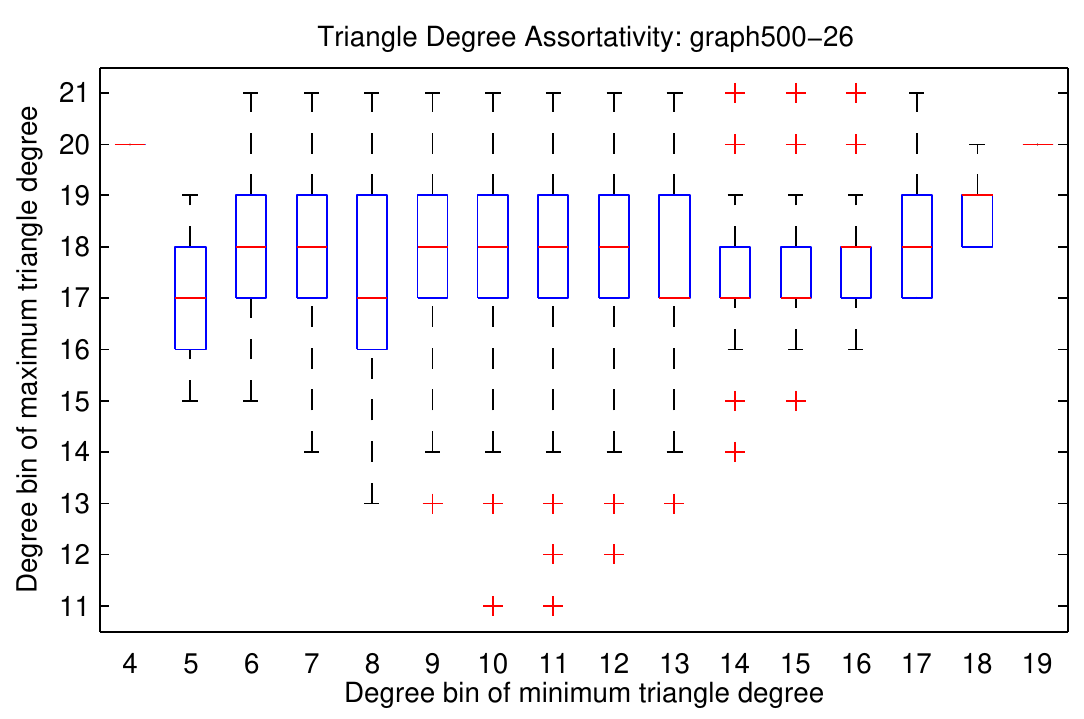}}
      \subfloat{\includegraphics[width=.45\textwidth]{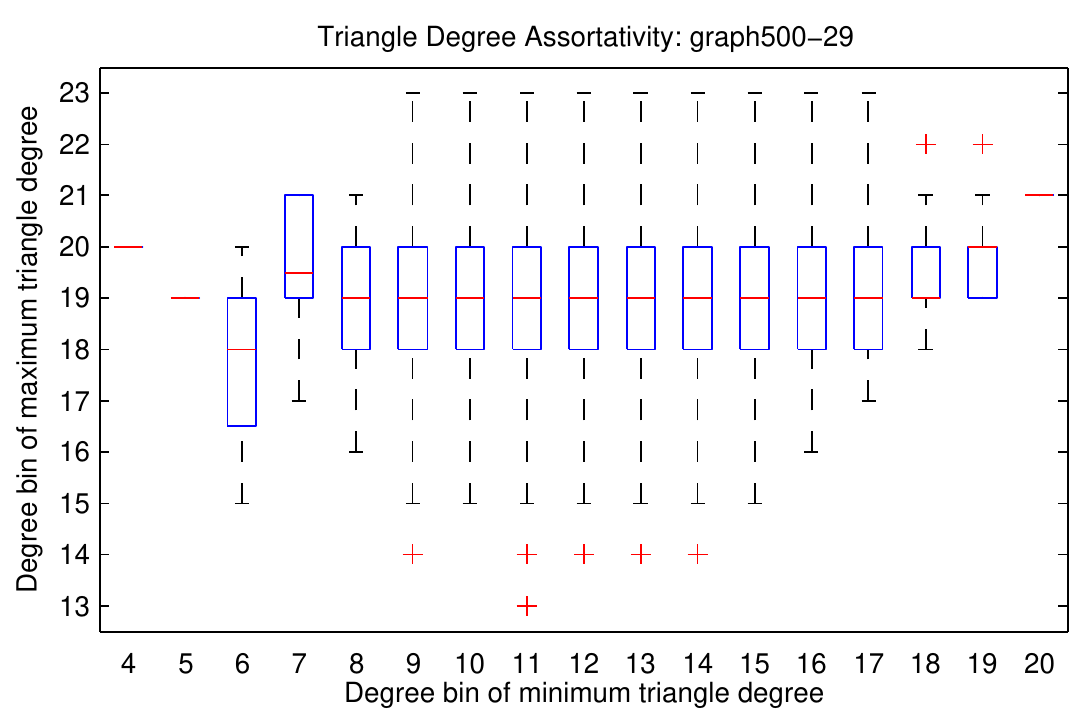}}
      \caption{Triangle Degree Assortativity}
      \label{fig:tristats-assort}
    \end{figure}

\section{Conclusions} 

We have shown that wedge-based sampling can be scaled to massive
graphs in the MapReduce framework. 
On a relatively small MapReduce cluster (32 nodes), 
we have analyzed graphs with up to 240M edges, 8.5B edges, 5.2T
wedges, and 447B triangles. 
Even the largest graph was analyzed in less than one hour, and most
took only a few minutes.
\Fig{Task1} shows a timing analysis of the MapReduce tasks \cite{JOBINFO}
for Phase 1a on the uk-union graph.  Mapper tasks run in waves of 128
parallel jobs, equal to the number of mapper slots available on the cluster.
Note that a larger cluster would be able to run
more Map tasks in parallel, decreasing the overall runtime.
To the best of our knowledge, these are
the largest triangle-based calculations performed to date. 

\begin{figure}[tbhp]
  \centering
  \includegraphics[width=.7\textwidth]{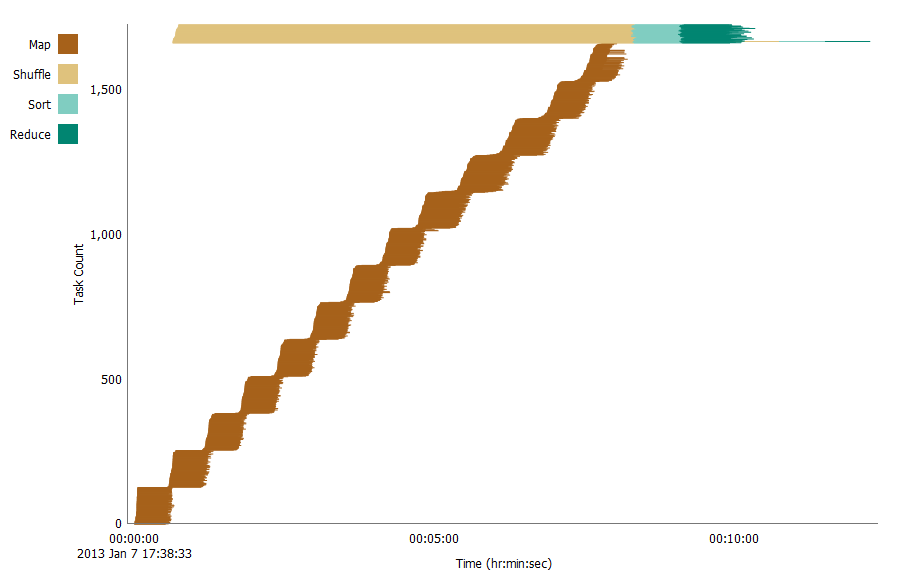}
  \caption{Task breakdown for Phase 1a on uk-union on 32 Hadoop nodes.}
  \label{fig:Task1}
\end{figure}

Unlike enumeration techniques that need to at least validate every
triangle and more often have cost proportional to the number of
wedges, our method is linear in the number of edges.
The most expensive component of the wedge-based sampling is finding the
degree of each vertex (Phase 1a); reducing the time for this is a topic for
future study. On our cluster, the time is approximately \Slope\@ seconds per
million edges, plus a fixed cost of \Offset\@ seconds for overhead. Because
we are using MapReduce, we never need to fit the entire graph into
memory --- we only need to be able to stream through all the edges.

Our MapReduce implementation requires a total of eight MapReduce jobs,
three of which do most of the work because they read the entire edge
list (Phases 1a, 2c, and 3b) and
two of which are optional (Phases 4a and 4b, which are labeling the
degrees of the sampled triangles). We have striven to minimize the
data being shuffled in each phase by using special data structures to
filter the edges.

Using our code, we are able to compute the degree distribution,
approximate the binned degree-wise clustering coefficient
and the number of triangles per bin. Additionally, we can analyze the
characteristics of the triangles (e.g. degree assortativity).
As part of our analysis, we have analyzed the graphs used in the 
Graph500 benchmark.  We are able to give a more detailed
understanding of the empirical properties of the generator and compare
it to real-world data; this is potentially helpful in determining if
performance on the benchmark data is indicative of
performance on real-world data.

\section*{Acknowledgments}
We gratefully acknowledge the peer reviewers for their constructive comments which have substantially improved the paper and brought several additional references to our attention.
We are thankful to Jon Berry for conducting the in-memory experiments described in \Sec{timings}.

\bibliographystyle{siammod}

%%\\n
%

\end{document}